\documentclass[11pt,a4paper]{article}

\usepackage[a4paper,text={160mm,247mm},centering]{geometry}

\usepackage{setspace}
\usepackage[utf8]{inputenc}

\usepackage{lipsum}

\usepackage[english]{babel}
\usepackage{hyphenat}
\usepackage[nosort]{cite}

\newcommand{\rcite}[1]{Ref.~\cite{#1}}

\usepackage{amsmath, amssymb, amsfonts, amsthm}
\usepackage{mathtools}
\usepackage[normalem]{ulem}

\usepackage[pdfencoding=auto,bookmarks=true,hyperfigures=true,hypertexnames=false,debug=true,verbose=true]{hyperref} 

\usepackage{thmtools} 
\usepackage[capitalize,nameinlink]{cleveref} 
\usepackage{csquotes} 
\usepackage{leftindex} 
\usepackage{caption, subcaption} 

\PassOptionsToPackage{unicode}{hyperref}
\providecommand{\hypersetup}[1]{}
\providecommand{\texorpdfstring}[2]{#1}
\hypersetup{plainpages=false}
\hypersetup{pdfpagemode=UseNone}
\hypersetup{bookmarksnumbered=true}
\hypersetup{pdfstartview=FitH}
\hypersetup{colorlinks=false}
\hypersetup{citebordercolor={0.7 0.7 1}}
\hypersetup{urlbordercolor={.4 .8 1}}
\hypersetup{linkbordercolor={1 .8 .6}}
\hypersetup{colorlinks=true, urlcolor=[rgb]{0.13,0.30,0.45}, linkcolor=[rgb]{0.13,0.30,0.45}, citecolor=[rgb]{0.55,0.0,0.05}}

\usepackage[usenames,dvipsnames,table]{xcolor}
\definecolor{mygreen}{rgb}{0,0.4,0}
\definecolor{myblue}{rgb}{0,0.0,0.4}
\definecolor{refrcolor}{rgb}{0,0.4,0}
\definecolor{cgreen}{rgb}{0,0.7,0}
\definecolor{ecolor}{rgb}{.52,.03,.06}
\definecolor{bgcolor}{rgb}{.96,.95,.80}
\definecolor{bgcolordark}{rgb}{.80,.80,.67}
\definecolor{faint}{rgb}{.80,.80,.80}

\usepackage{tcolorbox}
\tcbuselibrary{breakable, skins}
\usetikzlibrary{patterns}

\theoremstyle{plain}
\newtheorem{theorem}{Theorem}
\newtheorem*{theorem*}{Theorem}
\newtheorem{proposition}{Proposition}
\newtheorem{propositionnumerical}[theorem]{Proposition\,(numerical)}
\newtheorem{remark}[proposition]{Remark}
\newtheorem{lemma}[proposition]{Lemma}
\newtheorem{definition}[proposition]{Definition}
\newtheorem{corollary}[proposition]{Corollary}

\theoremstyle{definition}
\newtheorem{example}[proposition]{Example}

\newtcolorbox{myproofbox}[1][]{
  enhanced,
  breakable,
  borderline west={3pt}{0pt}{faint},
  notitle,
  before skip=10pt,
  after skip=10pt,
  colback=white, 
  colframe=white,
  frame hidden,
  boxrule=0pt, 
  boxsep=0pt,
  sharp corners,
  left=9pt, right=0pt, top=1pt, bottom=0pt,
  fontupper=\small,
  #1
}

\makeatletter
\renewenvironment{proof}[1][\proofname]{%
  \begin{myproofbox}\par\pushQED{\qed}\normalfont%
  \topsep6\p@\@plus6\p@\relax%
  \trivlist%
  \item[\hskip\labelsep\itshape#1\@addpunct{.}]%
  }%
  {\popQED\endtrivlist\end{myproofbox}\@endpefalse%
  \if@noskipsec\leavevmode\fi\noindent\ignorespacesafterend}
\makeatother

\makeatletter
\newenvironment{evidence}[1][Evidence]{%
  \begin{myproofbox}\par\pushQED{~}\normalfont%
  \topsep6\p@\@plus6\p@\relax%
  \trivlist%
  \item[\hskip\labelsep\itshape#1\@addpunct{.}]%
  }%
  {\popQED\endtrivlist\end{myproofbox}\@endpefalse%
  \if@noskipsec\leavevmode\fi\noindent\ignorespacesafterend}
\makeatother

\newtcolorbox{myexamplebox}[1][]{%
  enhanced,
  breakable,
  borderline west={8pt}{0pt}{faint},
  notitle,
  before skip=10pt,
  after skip=10pt,
  colback=white, 
  colframe=white,
  frame hidden,
  boxrule=0pt, 
  boxsep=0pt,
  sharp corners,
  left=14pt, right=0pt, top=1pt, bottom=0pt,
  #1
}

\makeatletter
  {\endtrivlist\end{myexamplebox}\@endpefalse%
  \if@noskipsec\leavevmode\fi\noindent\ignorespacesafterend}
\makeatother

\usepackage{mathfixs}
\ProvideMathFix{autobold}
\ProvideMathFix{multskip}
\ProvideMathFix{frac,root,rootclose}

\allowdisplaybreaks[1]

\makeatletter
\def\mr@ignsp#1 {\ifx\:#1\@empty\else #1\expandafter\mr@ignsp\fi}%
\newcommand{\multiref}[1]{\begingroup
\xdef\mr@no@sparg{\expandafter\mr@ignsp#1 \: }%
\def\mr@comma{}%
\@for\mr@refs:=\mr@no@sparg\do{\mr@comma\def\mr@comma{,}\ref{\mr@refs}}%
\endgroup}
\renewcommand{\eqref}[1]{(\multiref{#1})}
\makeatother

\makeatletter
\newcommand{\namedref}[2]{\hyperref[#2]{#1~\ref*{#2}}}%
\newcommand{\namedreff}[2]{\hyperref[#2]{#1\,\ref*{#2}}}%

\newcommand{\appref}{\namedref{Appendix}}

\makeatother

\numberwithin{equation}{section}

\providecommand{\href}[2]{#2}

\usepackage[font=small,labelfont=bf]{caption}

\usepackage{graphbox}

\usepackage[compile=false,fonts]{mpostinl}
\begin{mposttex}[]
\usepackage{amsmath,amssymb,amsfonts}
\end{mposttex}
\begin{mpostdef}
	input metaobj;
	
	def drawzigzag(expr pzz)=
	pair zz[];
	zz[1]:= point 0 of pzz ; zz[2]:= point 1 of pzz;
	nczigzag(zz[1])(zz[2]) "coilwidth(0.15xu)", "linewidth(1pt)", "linearc(.01xu)", "arrows(-)", "coilarmA(0)", "coilarmB(0)";
	enddef;
\end{mpostdef}

\begin{mpostdef}
	pair vpos[];
	pair epos[];
	pair ext[];
	pair exta[];
	path paths[];
	picture pic;
	picture pic[];
	picture savepic;
	path cpath;
	path hpath;
	xu:=1cm;
	yu:=1cm;
\end{mpostdef}

\begin{mpostdef}
	def pensize(expr s)=withpen pencircle scaled s enddef;
	def fillshape(expr p,ci,tb,cb)=
	fill p withcolor ci;
	draw p pensize(tb) withcolor cb;
	enddef;
	def filldot(expr z,s,ci)=
	fillshape(fullcircle scaled s shifted z, ci, 0.5pt, 0.0white);
	enddef;
	def filltrig(expr z,s,r,ci)=
	fillshape(((dir 60)--(dir 180)--(dir 300)--cycle) scaled s rotated r shifted z, ci, 0.5pt, 0.0white);
	enddef;
	def fillsqr(expr z,s,r,ci)=
	fillshape(((dir 45)--(dir 135)--(dir 225)--(dir 315)--cycle) scaled s rotated r shifted z, ci, 0.5pt, 0.0white);
	enddef;
	def drawcross(expr z,s,r,t,c)=
	draw ((-0.5,-0.5)--(+0.5,+0.5)) scaled s rotated r shifted z pensize(t) withcolor c;
	draw ((+0.5,-0.5)--(-0.5,+0.5)) scaled s rotated r shifted z pensize(t) withcolor c;
	enddef;
	
	def midarrow (expr p, t) =
	fill (arrowhead subpath(0,arctime(arclength(subpath (0,t) of p)+0.5ahlength) of p) of p) withcolor ecolor;
	enddef;
	
	def midarrowred (expr p, t) =
	fill (arrowhead subpath(0,arctime(arclength(subpath (0,t) of p)+0.5ahlength) of p) of p) withcolor ecolor;
	enddef;
	
	def midarrowblue (expr p, t) =
	fill (arrowhead subpath(0,arctime(arclength(subpath (0,t) of p)+0.5ahlength) of p) of p) withcolor bcolor;
	enddef;
	
	def dprop expr p=draw p pensize(3pt) withcolor feyncolor enddef;
	def dpropf expr p=draw p pensize(3pt) withcolor feyncolorfaint enddef;
	
	def drawprop expr p=draw p pensize(1.5pt) enddef;
	def drawline expr p=draw p pensize(0.5pt) enddef;
	def drawpos expr p=drawcross(p, 5pt, 0, 1.0pt, 0.0white) enddef;
	def drawvertex expr p=filldot(p, 5pt, 0.5white+0.5red) enddef;
	def drawblackdotscale expr p=filldot(p, 0.4*xu, black) enddef;
	def drawreddotscale expr p=filldot(p, 0.4*xu, red) enddef;
	def drawwhitedotscale expr p=filldot(p, 0.4*xu, white) enddef;
	def drawblackdot expr p=filldot(p, 8pt, black) enddef;
	def drawwhitedot expr p=filldot(p, 8pt, white) enddef;
	def drawlinearrow expr p=draw p pensize(0.5pt); midarrow (p,0.5); enddef;
	def drawproparrow expr p=draw p pensize(1.5pt); midarrow (p,0.5); enddef;
	def drawproparrowdouble expr p=draw p pensize(1.5pt); midarrow (p,0.65); midarrow (reverse p,0.65); enddef;
	def drawwhitedotmr (expr p, sz)=filldot(p, sz*xu, white) enddef;
	def drawblackdotmr (expr p, sz)=filldot(p, sz*xu, black) enddef;
	
	def framed(expr p)=fill bbox p withcolor white; draw bbox p pensize(0.8pt); draw p; enddef;
	def framedd(expr p)=fill bbox p withcolor 0.94white; draw bbox p pensize(1pt); draw p; enddef;
\end{mpostdef}

\begin{mpostdef}
	color ecolor; ecolor:=(.72,.03,.06);
	color bcolor; bcolor:=(.07,.05,.66);
	color gcolor; gcolor:=(.0,.44,.0);
	color feyncolor; feyncolor:=(.06,.4,.01);
	color feyncolorfaint; feyncolorfaint:=(.12,.8,.02);
	color arrowcolor; arrowcolor:=(0,0,0);
	color faint; faint:=(.7,0.7,0.7);
	color fgray; fgray:=(.7,0.7,0.7);
	vsize := 0.16xu;
	rad:= 2xu;
	cpath:= for i=0 upto 35: 
	dir (i*10) scaled rad ..
	endfor cycle;
	def vert(expr pos)=
	fill fullcircle scaled vsize shifted point pos of cpath withcolor white;
	draw halfcircle scaled vsize rotated (90+10*pos) shifted point pos of cpath withcolor ecolor pensize(1.3pt);
	enddef;
	def vertthick(expr pos)=
	fill fullcircle scaled vsize shifted point pos of cpath withcolor white;
	draw halfcircle scaled vsize rotated (90+10*pos) shifted point pos of cpath withcolor ecolor pensize(2.0pt);
	enddef;
	def vertcol(expr pos,col)=
	fill fullcircle scaled vsize shifted point pos of cpath withcolor white;
	draw halfcircle scaled vsize rotated (90+10*pos) shifted point pos of cpath withcolor col pensize(1.3pt);
	enddef;
	def drawbdry(expr startpos,endpos)=
	draw subpath(startpos,endpos) of cpath withcolor ecolor pensize(1.3pt);
	enddef;
	def drawbdryblack(expr startpos,endpos)=
	draw subpath(startpos,endpos) of cpath withcolor black pensize(1.3pt);
	enddef;
	def drawbdrythick(expr startpos,endpos)=
	draw subpath(startpos,endpos) of cpath withcolor ecolor pensize(2.0pt);
	enddef;
	def drawbdrythickblack(expr startpos,endpos)=
	draw subpath(startpos,endpos) of cpath withcolor black pensize(2.0pt);
	enddef;
	def drawbdrydashed(expr startpos,endpos)=
	draw subpath(startpos,endpos) of cpath withcolor ecolor dashed withdots scaled 0.5 pensize(1.3pt);
	enddef;
	def drawbdrydashedthick(expr startpos,endpos)=
	draw subpath(startpos,endpos) of cpath withcolor ecolor dashed withdots scaled 0.5 pensize(2.0pt);
	enddef;
	def vlabel(expr pos, vscal, tt)=
	vert(pos);
	label(tt,point pos of cpath scaled vscal);
	enddef;
	
	def vlabelcol(expr pos, vscal, tcol, vcol, tt)=
	vertcol(pos, vcol);
	label(tt,point pos of cpath scaled vscal) withcolor tcol;
	enddef;
	
	def ppath(expr p,spos,epos)=
	subpath(arctime(spos*arclength(p))of p,arctime(epos*arclength(p)) of p) of p
	enddef;
\end{mpostdef}

\begin{mpostdef}
	picture nicearrow; currentpicture:= nullpicture;
	ahlengthsave:=ahlength; ahlength:=12pt;
	drawarrow (0xu,0xu){dir 20}..(2xu,0xu) pensize(3pt) withcolor arrowcolor;
	ahlength:=ahlengthsave;
	nicearrow:= currentpicture; currentpicture:= nullpicture;
\end{mpostdef}

\begin{mpostdef}
	eNum:= 2.718281828459045235;
	numeric startangle, lsdist, scalefactor, stepangle;
	path ls;
	
	vardef logspiralder(expr lspos, endpoint, maxangle, stepnumber, enddir) =
	startangle := ((angle(endpoint-lspos) - (maxangle mod 360)) mod 360); 
	stepangle := (1.0maxangle)/(1.0stepnumber);
	lsdist := length(endpoint-lspos); 
	scalefactor := mlog(lsdist+1.0)/(1.0*maxangle)/256;
	ls:=lspos for t = startangle step stepangle until (maxangle + startangle + 0.02): .. (((mexp(256((t-startangle)*scalefactor))-1.0) * cosd(t),(mexp(256((t-startangle)*scalefactor))-1.0)  * sind(t))+lspos) endfor;
	ls:= subpath(1,(length ls) - 1) of ls;
	ls:= ls .. {dir enddir}endpoint;
	ls
	enddef;
\end{mpostdef}

\begin{mpostdef}
	def roundedrectangle(expr xa, ya, xb, yb, r) =
	(xa + r, ya) 
	-- (xb - r, ya) {dir 0} .. {dir 90} (xb, ya + r) 
	-- (xb, yb - r) {dir 90} .. {dir 180} (xb - r, yb) 
	-- (xa + r, yb) {dir 180} .. {dir 270} (xa, yb - r) 
	-- (xa, ya + r) {dir 270} .. {dir 0} cycle
	enddef;

	def SchottkyBackground = 
	numeric w, h, steps;
	w := 6xu;
	h := 6xu;
	fregion := 0.2xu;
	fsteps := 80;

	for ii = 0 upto fsteps:
	numeric margin, g;
	margin := ii / fsteps * fregion;
	gcol := 1.0 - 0.3 * (ii / fsteps); 

	draw roundedrectangle(margin, margin,w - margin, h-margin, fregion * (1-ii / fsteps) ) withcolor (gcol, gcol, gcol) pensize( fregion / (fsteps - 1));
	endfor
	fill roundedrectangle(margin, margin,w - margin, h-margin, 0 ) withcolor (0.7,0.7,0.7);
	enddef;

	def SchottkyCircle(expr xPos, yPos, rad, col) = 
	fill fullcircle scaled (2*rad) shifted (xPos,yPos) withcolor white;
	draw fullcircle scaled (2*rad) shifted (xPos,yPos) withcolor col pensize(1.0pt);
	enddef;

	def SchottkyCross(expr PosA, Sangle) = 
	transform ST;
	ST := identity rotated (Sangle + 45 + 15) shifted PosA; 
	draw ((-0.05xu,0xu)--(0.05xu,0xu)) transformed ST;
	draw ((0,-0.05xu)--(0,0.05xu)) transformed ST;
	enddef;

	def SchottkyPathArrow(expr PosA,PosB) =
	path SPA;
	numeric SPAdir;
	SPAdir := angle(PosB-PosA);
	SPA := PosA {dir (SPAdir-15)} .. {dir (SPAdir+15)} PosB;
	drawarrow subpath (0.04 * length SPA, 0.96 * length SPA) of SPA pensize(0.8pt);
	SchottkyCross (PosB, SPAdir);
	enddef;
\end{mpostdef}

\begin{mpostfig}[label=genustwoSchottky]
	pens:=0.7pt;
	drawarrow (-2.2xu, 0) -- (2.5xu, 0) pensize (1.0pt);  
	drawarrow (0, -2.0xu) -- (0, 2.3xu) pensize (1.0pt);  
	pair rpos; pair rrpos; 
	pair midpos, midposright, midposleft;
	path con, rA, rrA;
	path lspr, lsprr, lsprupper, lsprlower, lsprrupper, lsprrlower;
	
	rcirc:=0.9xu;
	rpos:=(1.4xu,1.5xu);
	rrcirc:=1.3xu;
	rrpos:=(-1.0xu,1.0xu);
	fill fullcircle scaled (rcirc+0.11xu) shifted rpos pensize(pens) withcolor fgray;
	fill fullcircle scaled (rrcirc+0.11xu) shifted rrpos pensize(pens) withcolor fgray;
	rA:=fullcircle scaled (rcirc+0.11xu) shifted rpos;
	rrA:=fullcircle scaled (rrcirc+0.11xu) shifted rrpos;
	con = rpos--rrpos;
	midpos := ((rcirc/(rcirc + rrcirc))[rpos,rrpos]);
	midposright := ((1.06*rcirc/(rcirc + rrcirc))[rpos,rrpos]);
	midposleft := ((0.94*rcirc/(rcirc + rrcirc))[rpos,rrpos]);
	
	numeric appanglea, appangleb;
	appanglea:=76;
	appangleb:=appanglea + 180;
	rpos:=(1.4xu,1.3xu);
	rrpos:=(-0.8xu,1.1xu);
	lsprr:=logspiralder(rrpos, midpos, 560, 12, appanglea);
	draw lsprr pensize(pens);
	lspr:=logspiralder(rpos, midpos, 560, 12, appangleb);
	draw lspr pensize(pens);
	
	lsprupper:=logspiralder(rpos, midposright, 595, 12, appangleb);
	lsprlower:=logspiralder(rpos, midposleft, 528, 12, appangleb);
	lsprrupper:=logspiralder(rrpos, midposright, 528, 12, appanglea);
	lsprrlower:=logspiralder(rrpos, midposleft, 588, 12, appanglea);
	
	draw lsprupper cutbefore rA pensize (pens) withcolor bcolor;
	midarrowblue ((lsprrupper cutbefore rrA),2.4);
	draw lsprlower cutbefore rA pensize(pens) withcolor bcolor;
	draw lsprrupper cutbefore rrA pensize (pens) withcolor bcolor;
	draw lsprrlower cutbefore rrA pensize (pens) withcolor bcolor;
	midarrowblue (lsprlower cutbefore rA,2.4);

	draw (rA cutafter lsprupper) pensize (pens) withcolor ecolor;
	draw (rA cutbefore lsprlower) pensize (pens) withcolor ecolor;
	midarrow (reverse(rA cutbefore lsprlower),0.3);
	draw (rrA cutbefore lsprrupper) pensize (pens) withcolor ecolor;
	draw (rrA cutafter lsprrlower) pensize (pens) withcolor ecolor;
	midarrow (reverse(rrA cutafter lsprrlower),3.4);

	label(btex $C_1$ etex, (-1.75xu, 1.7xu));
	label(btex $C_1^\prime$ etex, (1.9xu, 0.9xu));
	label.bot(btex $P_1^\prime$ etex, rrpos);
	label.top(btex $P_1$ etex, rpos+(-0.1xu,0xu));

	draw fullcircle scaled (0.03xu) shifted rpos pensize(pens) withcolor black; 
	draw fullcircle scaled (0.03xu) shifted rrpos pensize(pens) withcolor black; 
	
	rcirc:=1.2xu;
	rpos:=(1.4xu,-1.1xu);
	rrcirc:=1.0xu;
	rrpos:=(-1.2xu,-1.0xu);
	fill fullcircle scaled (rcirc+0.11xu) shifted rpos pensize(pens) withcolor fgray;
	fill fullcircle scaled (rrcirc+0.11xu) shifted rrpos pensize(pens) withcolor fgray;

	rA:=fullcircle scaled (rcirc+0.11xu) shifted rpos;
	rrA:=fullcircle scaled (rrcirc+0.11xu) shifted rrpos;
	midpos := ((rcirc/(rcirc + rrcirc))[rpos,rrpos]);
	midposright := ((1.06*rcirc/(rcirc + rrcirc))[rpos,rrpos]);
	midposleft := ((0.94*rcirc/(rcirc + rrcirc))[rpos,rrpos]);
	
	numeric appanglea, appangleb;
	appanglea:=56;
	appangleb:=appanglea + 180;
	rpos:=(1.4xu,-1.3xu);
	rrpos:=(-1.05xu,-0.95xu);
	lsprr:=logspiralder(rrpos, midpos, 560, 12, appanglea);
	draw lsprr pensize(pens);
	lspr:=logspiralder(rpos, midpos, 560, 12, appangleb);
	draw lspr pensize(pens);
	
	lsprupper:=logspiralder(rpos, midposright, 595, 12, appangleb);
	lsprlower:=logspiralder(rpos, midposleft, 528, 12, appangleb);
	lsprrupper:=logspiralder(rrpos, midposright, 528, 12, appanglea);
	lsprrlower:=logspiralder(rrpos, midposleft, 588, 12, appanglea);
	
	draw lsprupper cutbefore rA pensize (pens) withcolor bcolor;
	midarrowblue (reverse(lsprupper cutbefore rA),0.35);
	draw lsprlower cutbefore rA pensize(pens) withcolor bcolor;
	midarrowblue ((lsprlower cutbefore rA),1.6);
	draw lsprrupper cutbefore rrA pensize (pens) withcolor bcolor;
	draw lsprrlower cutbefore rrA pensize (pens) withcolor bcolor;

	draw (rA cutafter lsprupper) pensize (pens) withcolor ecolor;
	draw (rA cutbefore lsprlower) pensize (pens) withcolor ecolor;
	midarrow (reverse(rA cutbefore lsprlower),0.3);
	draw (rrA cutbefore lsprrupper) pensize (pens) withcolor ecolor;
	draw (rrA cutafter lsprrlower) pensize (pens) withcolor ecolor;
	midarrow (reverse(rrA cutafter lsprrlower),3.4);

	label(btex $C_2$ etex, (-1.85xu, -0.43xu));
	label(btex $C_2^\prime$ etex, (2.1xu, -1.8xu));
	label.bot(btex $P_2^\prime$ etex, rrpos);
	label.top(btex $P_2$ etex, rpos);

	draw fullcircle scaled (0.03xu) shifted rpos pensize(pens) withcolor black; 
	draw fullcircle scaled (0.03xu) shifted rrpos pensize(pens) withcolor black; 

\end{mpostfig}

\begin{mpostfig}[label=genustwo]
	draw (-3xu,-0.15xu) .. controls (-2.9xu,-0.18xu) and (-2.9xu,-0.92xu) .. ( -3xu,-0.95xu) dashed evenly pensize (1.0pt) withcolor ecolor;
	draw (-4.2xu,0xu) ..controls (-4.2xu,0.6xu) and (-1.8xu,0.6xu) .. (-1.8xu,0xu) pensize (1.0pt) withcolor bcolor;
	draw (-4.2xu,0xu) ..controls (-4.2xu,-0.6xu) and (-1.8xu,-0.6xu) .. (-1.8xu,0xu) pensize (1.0pt) withcolor bcolor;
	midarrowblue ((-4.2xu,0xu) ..controls (-4.2xu,-0.6xu) and (-1.8xu,-0.6xu) .. (-1.8xu,0xu),0.6);
	draw (-4xu,0.03xu){dir -20}..(-2xu,0.03xu) pensize (1.0pt);
	draw (-3.8xu,-0.02xu){dir 25}..(-2.2xu,-0.02xu) pensize (1.0pt);
	draw (-3xu,-0.15xu) .. controls (-3.1xu,-0.18xu) and (-3.1xu,-0.92xu) .. ( -3xu,-0.95xu) pensize (1.0pt) withcolor ecolor;
	midarrowred ((-3xu,-0.15xu) .. controls (-3.1xu,-0.18xu) and (-3.1xu,-0.92xu) .. ( -3xu,-0.95xu),0.65);
	
	draw (0xu,-0.15xu) .. controls (0.1xu,-0.18xu) and (0.1xu,-0.92xu) .. ( 0xu,-0.95xu) dashed evenly pensize (1.0pt) withcolor ecolor;
	draw (-1.2xu,0xu) ..controls (-1.2xu,0.6xu) and (1.2xu,0.6xu) .. (1.2xu,0xu) pensize (1.0pt) withcolor bcolor;
	draw (-1.2xu,0xu) ..controls (-1.2xu,-0.6xu) and (1.2xu,-0.6xu) .. (1.2xu,0xu) pensize (1.0pt) withcolor bcolor;
	midarrowblue ((-1.2xu,0xu) ..controls (-1.2xu,-0.6xu) and (1.2xu,-0.6xu) .. (1.2xu,0xu),0.6);
	draw (-1xu,0.03xu){dir -20}..(1xu,0.03xu) pensize (1.0pt);
	draw (-0.8xu,-0.02xu){dir 25}..(0.8xu,-0.02xu) pensize (1.0pt);
	draw (0xu,-0.15xu) .. controls (-0.1xu,-0.18xu) and (-0.1xu,-0.92xu) .. ( 0xu,-0.95xu) pensize (1.0pt) withcolor ecolor;
	midarrowred ((0xu,-0.15xu) .. controls (-0.1xu,-0.18xu) and (-0.1xu,-0.92xu) .. ( 0xu,-0.95xu),0.65);
	
	draw halfcircle xscaled 1.9xu yscaled 3.5xu rotated -90 ..(-1.5xu,0.7xu){left}..(-3xu,0.95xu){left} pensize (1pt);
	draw halfcircle xscaled 1.9xu yscaled 3.5xu rotated 90 shifted (-3xu,0xu) ..(-1.5xu,-0.7xu){right}..(0xu,-0.95xu){right} pensize (1pt);

	label(btex $a_1$ etex, (-2.7xu, -1.2xu));
	label(btex $a_2$ etex, (0.3xu, -1.2xu));
	label(btex $b_1$ etex, (-1.63xu, 0.3xu));
	label(btex $b_2$ etex, (1.37xu, 0.3xu));

\end{mpostfig}

\begin{mpostfig}[label=intersectiondef]
	pens:=0.7pt;
	drawarrow (-3xu, 0) -- (-1xu, 0) pensize (1.0pt);  
	drawarrow (-2xu, -xu) -- (-2xu, xu) pensize (1.0pt);  
	drawarrow (1xu, 0) -- (3xu, 0) pensize (1.0pt);  
	drawarrow (2xu, -xu) -- (2xu, xu) pensize (1.0pt);  
	dotlabel.lrt (btex $P$ etex, (-2xu, 0));
	dotlabel.lrt (btex $P$ etex, (2xu, 0));
	label (btex $\gamma_1$ etex, (-1.2xu,0.3xu));
	label (btex $\gamma_2$ etex, (-1.7xu,0.7xu));
	label (btex $\gamma_2$ etex, (2.8xu,0.3xu));
	label (btex $\gamma_1$ etex, (2.3xu,0.7xu));
	label (btex $(\gamma_1\mathbin{\#}\gamma_2)_P=+1$ etex, (-2xu,-1.4xu));
	label (btex $(\gamma_1\mathbin{\#}\gamma_2)_P=-1\,.$ etex, (2xu,-1.4xu));
\end{mpostfig}

\usepackage{tikz}
\usepackage{tikz-cd}

\usepackage{rotating}

\usepackage{enumitem}

\usepackage{tabularx}

\makeatletter
\providecommand*{\shuffle}{%
  \mathbin{\mathpalette\shuffle@{}}%
}
\newcommand*{\shuffle@}[2]{%
  \sbox0{$#1\vcenter{}$}%
  \kern .15\ht0 
  \rlap{\vrule height .25\ht0 depth 0pt width 2.5\ht0}%
  \raise.1\ht0\hbox to 2.5\ht0{%
    \vrule height 1.75\ht0 depth -.1\ht0 width .17\ht0 %
    \hfill
    \vrule height 1.75\ht0 depth -.1\ht0 width .17\ht0 %
    \hfill
    \vrule height 1.75\ht0 depth -.1\ht0 width .17\ht0 %
  }%
  \kern .15\ht0 
}
\makeatother

\usepackage{xparse}

\ExplSyntaxOn
\NewDocumentCommand{\Gtargz}{m m}
{
 \Gt\left(\begin{smallmatrix}
 \Gtargz_print:n {#1} \\
 \Gtargz_print:n {#2}
 \end{smallmatrix};z\right)
}
\seq_new:N \l_Gtargz_list_seq
\cs_new_protected:Npn \Gtargz_print:n #1
{
  \seq_set_split:Nnn \l_Gtargz_list_seq { , } { #1 }
  \seq_use:Nn \l_Gtargz_list_seq { , & }
}
\ExplSyntaxOff

\ExplSyntaxOn
\NewDocumentCommand{\Gtargt}{m m}
{
 \Gt\left(\begin{smallmatrix}
 \Gtargt_print:n {#1} \\
 \Gtargt_print:n {#2}
 \end{smallmatrix};t\right)
}
\seq_new:N \l_Gtargt_list_seq
\cs_new_protected:Npn \Gtargt_print:n #1
{
  \seq_set_split:Nnn \l_Gtargt_list_seq { , } { #1 }
  \seq_use:Nn \l_Gtargt_list_seq { , & }
}
\ExplSyntaxOff

\ExplSyntaxOn
\NewDocumentCommand{\Gtargzt}{m m}
{
 \Gt\left(\begin{smallmatrix}
 \Gtargzt_print:n {#1} \\
 \Gtargzt_print:n {#2}
 \end{smallmatrix};z|\tau\right)
}
\seq_new:N \l_Gtargzt_list_seq
\cs_new_protected:Npn \Gtargzt_print:n #1
{
  \seq_set_split:Nnn \l_Gtargzt_list_seq { , } { #1 }
  \seq_use:Nn \l_Gtargzt_list_seq { , & }
}
\ExplSyntaxOff

\ExplSyntaxOn
\NewDocumentCommand{\Gtargxit}{m m}
{
 \Gt\left(\begin{smallmatrix}
 \Gtargxit_print:n {#1} \\
 \Gtargxit_print:n {#2}
 \end{smallmatrix};\xi|\tau\right)
}
\seq_new:N \l_Gtargxit_list_seq
\cs_new_protected:Npn \Gtargxit_print:n #1
{
  \seq_set_split:Nnn \l_Gtargxit_list_seq { , } { #1 }
  \seq_use:Nn \l_Gtargxit_list_seq { , & }
}
\ExplSyntaxOff

\ExplSyntaxOn
\NewDocumentCommand{\Gtargtt}{m m}
{
 \Gt\left(\begin{smallmatrix}
 \Gtargtt_print:n {#1} \\
 \Gtargtt_print:n {#2}
 \end{smallmatrix};t|\tau\right)
}
\seq_new:N \l_Gtargtt_list_seq
\cs_new_protected:Npn \Gtargtt_print:n #1
{
  \seq_set_split:Nnn \l_Gtargtt_list_seq { , } { #1 }
  \seq_use:Nn \l_Gtargtt_list_seq { , & }
}
\ExplSyntaxOff

\ExplSyntaxOn
\NewDocumentCommand{\Gtargzg}{m m}
{
 \Gt\left(\begin{smallmatrix}
 \Gtargzg_print:n {#1} \\
 \Gtargzg_print:n {#2}
 \end{smallmatrix};z|\SGroup\right)
}
\seq_new:N \l_Gtargzg_list_seq
\cs_new_protected:Npn \Gtargzg_print:n #1
{
  \seq_set_split:Nnn \l_Gtargzg_list_seq { , } { #1 }
  \seq_use:Nn \l_Gtargzg_list_seq { , & }
}
\ExplSyntaxOff

\ExplSyntaxOn
\NewDocumentCommand{\Garg}{m m m}
{
	\Gt\left(\begin{smallmatrix}
		\Garg_print:n {#1} \\
		\Garg_print:n {#2}
	\end{smallmatrix};\Garg_print:n {#3}\big|\SGroup\right)
}
\seq_new:N \l_Garg_list_seq
\cs_new_protected:Npn \Garg_print:n #1
{
	\seq_set_split:Nnn \l_Garg_list_seq { , } { #1 }
	\seq_use:Nn \l_Garg_list_seq { , & }
}
\ExplSyntaxOff

\ExplSyntaxOn
\NewDocumentCommand{\Gargbare}{m m m}
{
	\Gt\left(\begin{smallmatrix}
		\Gargbare_print:n {#1} \\
		\Gargbare_print:n {#2}
	\end{smallmatrix};\Gargbare_print:n {#3}\right)
}
\seq_new:N \l_Gargbare_list_seq
\cs_new_protected:Npn \Gargbare_print:n #1
{
	\seq_set_split:Nnn \l_Gargbare_list_seq {,} { #1 }
	\seq_use:Nn \l_Gargbare_list_seq {\hspace{-0.1em},\hspace{-0.18em} & }
}
\ExplSyntaxOff

\ExplSyntaxOn
\NewDocumentCommand{\Gtargxig}{m m}
{
 \Gt\left(\begin{smallmatrix}
 \Gtargxig_print:n {#1} \\
 \Gtargxig_print:n {#2}
 \end{smallmatrix};\xi|\SGroup\right)
}
\seq_new:N \l_Gtargxig_list_seq
\cs_new_protected:Npn \Gtargxig_print:n #1
{
  \seq_set_split:Nnn \l_Gtargxig_list_seq { , } { #1 }
  \seq_use:Nn \l_Gtargxig_list_seq { , & }
}
\ExplSyntaxOff

\ExplSyntaxOn
\NewDocumentCommand{\Gtargtg}{m m}
{
 \Gt\left(\begin{smallmatrix}
 \Gtargtg_print:n {#1} \\
 \Gtargtg_print:n {#2}
 \end{smallmatrix};t|\SGroup\right)
}
\seq_new:N \l_Gtargtg_list_seq
\cs_new_protected:Npn \Gtargtg_print:n #1
{
  \seq_set_split:Nnn \l_Gtargtg_list_seq { , } { #1 }
  \seq_use:Nn \l_Gtargtg_list_seq { , & }
}
\ExplSyntaxOff

\newcommand{\SI}[1]{\Sel[#1]}

\ExplSyntaxOn
\NewDocumentCommand{\SIE}{m m}
{
\SelE\!\Big[\begin{smallmatrix}
 \SI_print:n {#1} \\
 \SI_print:n {#2}
 \end{smallmatrix}\Big]
}
\seq_new:N \l_SI_list_seq
\cs_new_protected:Npn \SI_print:n #1
{
  \seq_set_split:Nnn \l_SI_list_seq { , } { #1 }
  \seq_use:Nn \l_SI_list_seq { , & }
}
\ExplSyntaxOff

\usepackage{tocloft}

\makeatletter
\newcommand{\appendixsection}[1]{%
  \refstepcounter{section}
  \section*{\thesection\quad#1}
  \addcontentsline{toc}{appendixsec}{\protect\numberline{\thesection}#1}
}
\newcommand{\appendixsubsection}[1]{
  \refstepcounter{subsection}
  \subsection*{\thesubsection\quad#1}
  \addcontentsline{toc}{appendixsubsec}{\protect\numberline{\thesubsection}#1}
}
\newcommand{\l@appendixsec}[2]{%
  \vspace{0.1ex}%
  \@dottedtocline{1}{\cftsubsecindent}{\cftsubsecnumwidth}{#1}{#2}%
}
\newcommand{\l@appendixsubsec}[2]{%
  \vspace{0.1ex}%
  \@dottedtocline{2}{\cftsubsubsecindent}{\cftsubsubsecnumwidth}{#1}{#2}%
}
\providecommand*{\toclevel@appendixsec}{1}
\providecommand*{\toclevel@appendixsubsec}{2}
\def\Hy@toc@appendixsec{section}
\def\Hy@toc@appendixsec{subsection}
\makeatother

\makeatletter
\let\@keywords\@empty
\let\@subject\@empty
\providecommand{\keywords}[1]{\gdef\@keywords{#1}}
\providecommand{\subject}[1]{\gdef\@subject{#1}}
\def\thetitle{\@title}
\def\theauthor{\@author}
\def\thesubject{\@subject}
\def\thedate{\@date}
\def\thekeywords{\@keywords}
\makeatother
\AtBeginDocument{
\hypersetup{pdftitle={\thetitle}}%
\hypersetup{pdfauthor={\theauthor}}%
\hypersetup{pdfsubject={\thesubject}}%
\hypersetup{pdfkeywords={\thekeywords}}%
}

\let\Re\relax\DeclareMathOperator{\Re}{Re}
\let\Im\relax\DeclareMathOperator{\Im}{Im}

\let\Re\undefined\DeclareMathOperator{\Re}{Re}
\let\Im\undefined\DeclareMathOperator{\Im}{Im}

\DeclareMathOperator{\Aut}{Aut}

\DeclareMathOperator{\Gt}{\tilde{\Gamma}}

\DeclareMathOperator{\Sel}{S}
\DeclareMathOperator{\SelE}{S^E}

\newcommand{\SGroup}{G}

\ProvideMathFix{rfrac,vfrac}
\newcommand{\iunit}{{\mathring{\imath}}}

\newcommand{\der}{\mathrm{d}}
\newcommand{\diff}[2][.]{\mathinner{\der#2\if #1.\else^{#1}\fi}}

\newcommand{\grp}[1]{\mathrm{#1}}
\newcommand{\alg}[1]{\mathfrak{#1}}

\let\qed\relax\newcommand{\qed}
{\hfill\ensuremath{\Box}}

\newcommand{\SL}{\mathrm{SL}}

\newcommand{\zC}{\mathbb C}

\newcommand{\zZ}{\mathbb Z}

\newcommand{\cF}{\mathcal{F}}       
 
\newcommand{\cH}{\mathcal{H}}

\newcommand{\cO}{\mathcal{O}}

\newcommand{\cS}{\mathcal{S}}
\newcommand{\cT}{\mathcal{T}}

\newcommand{\Atxt}{A}
\newcommand{\Btxt}{B}

\newcommand{\funddom}{\cF}

\makeatletter
\newcommand{\bsub}[1][\@empty]{%
  \ifx#1\@empty
    (\alg{b})%
  \else
    (\alg{b}_{#1})%
  \fi
}
\makeatother

\makeatletter
\newcommand{\Sbsub}[1][\@empty]{%
  \ifx#1\@empty
    (S\alg{b})%
  \else
    (S\alg{b}_{#1})%
  \fi
}
\makeatother

\makeatletter
\newcommand{\Dbsub}[1][\@empty]{%
  \ifx#1\@empty
    (\Delta \alg{b})%
  \else
    (\Delta \alg{b}_{#1})%
  \fi
}
\makeatother

\makeatletter
\DeclareRobustCommand{\cev}[1]{%
  \mathpalette\do@cev{#1}%
}
\newcommand{\do@cev}[2]{%
  \fix@cev{#1}{+}%
  \reflectbox{$\m@th#1\vec{\reflectbox{$\fix@cev{#1}{-}\m@th#1#2\fix@cev{#1}{+}$}}$}%
  \fix@cev{#1}{-}%
}
\newcommand{\fix@cev}[2]{%
  \ifx#1\displaystyle
    \mkern#23mu
  \else
    \ifx#1\textstyle
      \mkern#23mu
    \else
      \ifx#1\scriptstyle
        \mkern#22mu
      \else
        \mkern#22mu
      \fi
    \fi
  \fi
}
\makeatother

\DeclareCaptionLabelFormat{closing}{#2)}
\DeclareCaptionLabelFormat{closingSpace}{#2)~}
\newcommand{\Transpose}{\mathsf{T}} 
\newcommand{\pprime}{{\prime\prime}}

 \newcommand{\GLgrp}[2]{\mathrm{GL}\!\left(#1,#2\right)}
\DeclareMathOperator{\Inn}{Inn}
\newcommand{\PSLtwoC}{\mathrm{PSL}\!\left(2,\mathbb{C}\right)}
\newcommand{\SpTwogZ}{\mathrm{Sp}\!\left(2g,\mathbb{Z}\right)}
\newcommand{\GLgZ}{\mathrm{GL}\!\left(g,\mathbb{Z}\right)}
\newcommand{\Zgxgsym}{\mathbb{Z}^{\matrixsize{g}{g}}_\mathrm{sym}}

\DeclareMathOperator{\Jac}{Jac}
\newcommand{\inters}{\mathbin{\#}}
\DeclareMathOperator{\MSG}{mSG}
\DeclareMathOperator{\DMSG}{dmSG}

\DeclareMathOperator{\stnd}{stnd}
\DeclareMathOperator{\deco}{deco}

\newcommand{\abelvec}[3]{
    \mathfrak{u}\!\left(#1, #2 \mymiddle| #3\right) 
}

\newcommand{\abelcomp}[4]{
    \mathfrak{u}_{#1}\!\left(#2, #3 \mymiddle| #4\right)
}

\newcommand{\Nper}[1]{N^{\mathrm{per}}_{#1}}
\newcommand{\Ninv}[1]{N^{\mathrm{inv}}_{#1}}
\newcommand{\Nrbp}[2]{N^{\mathrm{rbp}}_{#1 #2}}

\newenvironment{myenumerate}{
    \begin{enumerate}[left=\parindent, label=\textnormal{\textbf{\alph*)}}] 
    \setlength{\itemsep}{-2pt}
}{
    \end{enumerate}
}

\newcommand{\smalltwobytwo}[4]{
    \left(\begin{smallmatrix} #1 & #2\\ #3 & #4 \end{smallmatrix}\right)
}

\def\mymiddle#1{\;\!\middle#1\;\!} 

\newcommand{\setbuilder}[2]{
    \left\{ #1 \mymiddle| #2 \right\}
}

\newcommand{\matrixsize}[2]{
    \medmuskip=0mu #1 \times#2
}

\newcommand{\IAg}{\mathrm{IA}_g}
\newcommand{\ITg}{IT_g}
\newcommand{\Fg}{F_g}

\newcommand{\textcite}[1]{\cite{#1}} 

\title{\texorpdfstring{\textbf{Translating auxiliary symmetries \\ between Schottky uniformization \\ and Jacobi parametrization}}{Translating auxiliary symmetries between Schottky uniformization and Jacobi parametrization}}
\author{Manuel Berger, ~
	Johannes Broedel
	}
\date{\today}

\begin{document}

\pdfbookmark[1]{Title Page}{title} 
\thispagestyle{empty}
\vspace*{1.0cm}
\begin{center}%
  \begingroup\LARGE\bfseries\thetitle\par\endgroup
\vspace{1.0cm}

\begingroup\large\theauthor\par\endgroup
\vspace{9mm}
\begingroup\itshape
Institute for Theoretical Physics, ETH Zurich\\Wolfgang-Pauli-Str.~27, 8093 Zurich, Switzerland\\[4pt]
\par\endgroup
\vspace*{7mm}

\begingroup\ttfamily
manberger@protonmail.com, jbroedel@ethz.ch\\
\par\endgroup

\vspace*{2.0cm}

\textbf{Abstract}\vspace{5mm}

\begin{minipage}{13.4cm}
The explicit description and computation of functions defined on Riemann surfaces of various genera depends on the choice of language: while the Jacobi parametrization is widely known and used, the Schottky uniformization has been proven to provide an alternative approach, useful in particular for (but not limited to) numerical calculations. \\
Despite capturing the geometry of the Riemann surface completely, the two languages are subject to rather different sets of auxiliary symmetries. In this article we translate and compare the symplectic transformations inherent in the Jacobi parametrization to the freedom in choosing Möbius transformations generating the Schottky group for the Schottky uniformization.   \\
Our results are aimed at transferring functional relations expressed in the Schottky language to the Jacobi language and vice versa. An immediate application would be the efficient numerical evaluation of special functions in a physics context by favorably tuning the Schottky cover leading to quicker convergence.  

\end{minipage}
\end{center}
\vfill

\newpage
\setcounter{tocdepth}{2}
\tableofcontents


\section{Introduction}

In recent years it has become apparent that the application of tools from algebraic geometry can ease the calculation and representation of observables in physics substantially. Choosing a suitable class of functions related to a particular geometry will allow using the associated functional relations to find a canonical representation for the observable. Considering a particular physics problem, a choice of special functions together with the associated differential equation of motion will result in a short and concise formulation if one can translate boundary conditions into (static) geometric information and model dynamical behavior of the system through a connection form making reference to marked points on the geometry. 

Exploring the available geometries suitable for modeling the calculation of various physical observables, one will naturally start with the simplest compact Riemann surfaces: the Riemann sphere and the torus. Departing from the torus, one can either increase the dimension of the manifold, or keep the (complex) dimension at one and increase the genus of the surface in question. Within this article, we are going to focus on the latter scenario and consider (compact) Riemann surfaces of various higher genera.

Naturally, the description of Riemann surfaces with non-trivial fundamental group and functions thereon requires a choice of addressing points on the surface incorporating the non-trivial cycle information: while the \emph{Jacobi parametrization} is widely used, the \emph{Schottky uniformization} has recently received some attention. In mathematical terms, a \emph{uniformization} describes a Riemann surface as a quotient of a simply-connected Riemann surface and a properly discontinuous group encoding the topology. The approach for the Jacobi parametrization is slightly different: the Riemann surface is embedded into a complex torus via the Abel-Jacobi map. Accordingly, the Jacobi way of describing geometric data does not mathematically qualify as a uniformization and is instead referred to as \emph{parametrization}. 

In particular in a physics context, where many observables like Wilson loops and scattering amplitudes in quantum field theory and string theory can be expressed in terms of polylogarithms and related functions, the Schottky formulation has provided a new structural perspective~\cite{SchottkyKroneckerForms,hgMZV,Pokraka:2025zlh} and simultaneously paved the way for numerical evaluation making use of techniques developed and explored in, for example, Refs.~\cite{Bobenko2011,deconinck2002computingriemannthetafunctions}. 
From a more mathematical perspective, the Schottky formulation has been established in the 19\textsuperscript{th} century~\cite{Schottky1877,Schottky1887} and has been further developed in Refs.~\cite{Koebe,Maskit_SchottkyGroupCharacterization,Chuckrow,Marden,Bers,EarleBiholomorphic,herrlich1schottky}. In terms of drawing a connection between the Schottky formulation and capturing the Riemann surfaces' geometry as an algebraic curve, the notion of \enquote{transcendental divide} was coined in Ref.~\cite{Celik_2023}, whereas an explicit example crossing the divide numerically was provided in Ref.~\cite{fairchild2024crossingtranscendentaldivideschottky}. While discussing a different translation, we are again going to encounter similar problems, which we can surmount only numerically.

Both, the Jacobi and the Schottky language, implement the geometric information of the Riemann surface subject to certain auxiliary symmetries: 
\begin{itemize}
	\item The Jacobi parametrization relies on modeling the Riemann surface as a complex one-dimensional submanifold embedded in its (in general higher-dimensional) Jacobian variety, where Abel's map allows translating between Riemann surface and parametrization. The geometric information is captured within a period matrix, two of which describe the same geometry if related by a symplectic transformation.
	\item The Schottky uniformization takes a different approach: here to each genus direction (that is, to each handle of the surface) a pair of Jordan curves on the Riemann sphere gets assigned together with a Möbius transformation mapping one curve to the other. Those Möbius transformations  freely generate a Schottky group. While different configurations of pairs of Jordan curves will correspond to the same geometry if the associated Möbius transformation is the same, one can also choose a different set of generators for the Schottky group (which is usually referred to as a \emph{change of marking}) or choose a different notion of fundamental domain (which we call a \emph{change of decoration}). 
\end{itemize}
The goal of the current article is to translate and compare the two apparently very different implementations of auxiliary symmetries in the Jacobi and the Schottky language: we are going to illuminate the echo of symplectic transformations in the Jacobi parametrization for the Schottky uniformization and vice versa. The central object of comparison is the period matrix $\tau$, whose entries can simultaneously be expressed in Jacobi and Schottky language: in the former they are calculated from integrating Abelian differentials around B-cycles and in the latter they can be represented as automorphic forms summing over Schottky generators. 

In order to mathematically keep track of the auxiliary symmetries in the Schottky language, we employ three concepts on the Schottky side:
\begin{itemize}
	\item The general construction of the Schottky group is augmented with a \enquote{marking} $\vec{\gamma}$, which describes a particular choice of generators of the Schottky group. A change of marking $\alpha$ can then consist of swaps, inversions and combinations of individual generators, all of which can be represented by a symplectic transformation matrix $\Psi_g(\alpha)$ acting on the period matrix $\tau$ on the Jacobi side: 
\begin{equation}
[\tau(\alpha(\vec{\gamma}))]=\Psi_g(\alpha) \cdot[\tau(\vec{\gamma})],
\end{equation}
where the square brackets $[\,\cdot\,]$ denote a particular class of period matrices with common behavior under the above transformation.

\item As will be reviewed in \cref{sec:review_schottky_groups}, the standard fundamental domain of the Schottky language is outside all Jordan curves. The image of this standard fundamental domain under application of an arbitrary Schottky element will reside inside a particular Jordan curve. Objects expressed in the Schottky language---in particular the Schottky implementation of the period matrix $\tau$---are secretly assumed to be calculated based on points residing in the standard fundamental domain.\\
Symplectic transformations $T_B$ on the Jacobi side, however, will result in Schottky objects being formulated with respect to a non-standard fundamental domain. This information will be kept track of in the so-called \enquote{Schottky decoration} $S$, which is just a symmetric integer matrix. A combined change of decoration and change of marking $r$ acting on the decorated and marked Schottky group $(S,\vec{\gamma})$ implies a symplectic transformation matrix $\Phi_g(r)$ on the Jacobi side:
\begin{equation}
	\tau(r \cdot(S, \vec{\gamma})) \simeq \Phi_g(r) \cdot \tau(S, \vec{\gamma})\,.
\end{equation}
\item Symplectic transformations $J_{2g}$ do not only swap the role of A- and B-cycles but simultaneously interfere with the usual normalization condition for the period matrix, which allocates all geometric information to the B-cycles. On the Schottky side, each generating Möbius transformation can be interpreted as walking around a B-cycle (cf.~\cref{sec:review_schottky_groups}). Accordingly, the action of a symplectic transformation $J_{2g}$ cannot be accommodated straightforwardly in the (marked and decorated) Schottky language. Instead, we are going to construct a set of \enquote{dual} Schottky groups, which will implement the echo of $J_{2g}$ via 
\begin{equation}
	\mathfrak{j}(S, \vec{\gamma})\coloneq
    \setbuilder{\left(S^{\prime}, \vec{\delta}\right) \in \operatorname{dmSG}(\Sigma)}{\tau\!\left(\vec{\delta}, S^{\prime}\right)=J_{2 g} \cdot \tau(\vec{\gamma}, S)}\,.
\end{equation}
\end{itemize}
The article is organized as follows: in \cref{sec:review} we are going to set notations for symplectic transformations and afterwards review the Jacobi parametrization and the Schottky uniformization for Riemann surfaces of arbitrary genus.~\cref{sec:translation} contains the main statements of our article: in subsections~\ref{sec:ChangeOfMarking},~\ref{sec:decorated_schottky_and_Rg} and~\ref{sec:schottky_dualization} we are going to describe the translation between the symplectic transformations and the Schottky change of marking, change of decoration and Schottky dualization, respectively. \cref{sec:example} contains explicit examples at low genera. We conclude in \cref{sec:openquestions} with a list of open questions.

\section{Two languages for the geometry of Riemann surfaces}\label{sec:review}

The Jacobi (or sometimes Abel-Jacobi) parametrization is the well-established standard language for describing the geometry of Riemann surfaces of higher genera. Encapsulating the geometric information on the surface in terms of a period matrix, whose calculation is based on a particular normalized choice of homology cycles and Abelian differentials, the description turns out to contain some redundancy: different period matrices describe the same geometry if and only if they are related by a symplectic transformation.  

On the contrary, the Schottky approach to describing Riemann surfaces of higher genera is based on considering the Riemann surface as a quotient of (a subset of) the Riemann sphere by a Schottky group: a group freely generated by a non-unique set of loxodromic Möbius transformations.  

In this section, we are going to review several concepts and tools for the Jacobi and Schottky languages for higher-genus Riemann surfaces. After a brief recapitulation of the symplectic groups, we will review Riemann surfaces from a homology and cohomology point of view. Once those basics are available, we are going to describe the Jacobi parametrization and the Schottky uniformization right away. 

\subsection{Symplectic groups}\label{sec:symplectic_groups}
We begin with a definition of the integer symplectic group:
\begin{definition}[Integer symplectic group]\label{def:symplecticgroup}
    The \emph{integer symplectic group} $\SpTwogZ$ is the matrix group of \emph{symplectic integer $\matrixsize{2g}{2g}$ matrices},
    \begin{equation}
        \SpTwogZ \coloneq \setbuilder{M\in\mathbb{Z}^{2g\times2g}}{M^\Transpose J_{2g}M = J_{2g}},
    \end{equation}
    where $J_{2g} \coloneq \smalltwobytwo{0}{I_g}{-I_g}{0}$ and $I_g$ denotes the $\matrixsize{g}{g}$ identity matrix.
\end{definition}
For $g=1$, $\grp{Sp}(2,\mathbb{Z})$ coincides with $\grp{SL}(2,\zZ)$, the modular group. Accordingly, the symplectic groups are sometimes called \emph{higher modular groups}. 
For any group $G$ the \emph{inner automorphism} induced by $g\in G$ is the map that conjugates any group element $x$ by $g$, i.e.\ $x\mapsto gxg^{-1}$. The set of all inner automorphisms of $G$ is called the \emph{inner automorphism group} and is denoted by $\Inn(G)$. It is a normal subgroup of the \emph{automorphism group} $\Aut(G)$ of all automorphisms of $G$.

From \cref{def:symplecticgroup}, it is obvious that the inner automorphism induced by $J_{2g}$ is the inverse transpose, i.e.\ $J_{2g}M J_{2g}^{-1}=M^{-\Transpose}\coloneq (M^\Transpose)^{-1}$ for all $M\in\SpTwogZ$. Correspondingly, every subgroup of $\SpTwogZ$ (including $\SpTwogZ$ itself) containing $J_{2g}$ is closed under transposition\footnote{Quick proof: Let $J_{2g}\in G<\SpTwogZ$ and $M\in G$. Then $G\ni \left(J_{2g} M J_{2g}^{-1}\right)^{-1} = M^\Transpose$.}.

\begin{proposition}[Generators of \textnormal{$\SpTwogZ$}\cite{MumfordTata1}]\label{prop:sp2gZ_generators}
	Let $\GLgZ$ denote the group of \emph{unimodular $\matrixsize{g}{g}$} matrices, i.e.\  invertible integer $\matrixsize{g}{g}$ matrices with integer inverse. The group $\SpTwogZ$ is generated by
	\begin{equation}\label{eqn:Sp2generators}
            J_{2g} = \begin{pmatrix} 0 & I_g \\ -I_g & 0 \end{pmatrix}, \hspace{1em}
            D_A \coloneq \begin{pmatrix} A & 0 \\ 0 & A^{-\mathsf{T}} \end{pmatrix} \hspace{1em}\text{and}\hspace{1em}
            T_B \coloneq \begin{pmatrix} I_g & B \\ 0 & I_g \end{pmatrix},
        \end{equation}
	for all $A\in\mathrm{GL}(g,\zZ)$ and $B\in\mathbb{Z}^{\matrixsize{g}{g}}_\mathrm{sym}$, the set of symmetric integer $\matrixsize{g}{g}$ matrices.
\end{proposition}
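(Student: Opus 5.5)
First I check that the three families lie in $\SpTwogZ$, and then I show that an arbitrary $M\in\SpTwogZ$ can be reduced to the identity by multiplying with them. Membership is a direct computation with $M^\Transpose J_{2g}M=J_{2g}$. For $D_A$ the identity reduces to $A^\Transpose A^{-\Transpose}=I_g$. For $T_B$ it reduces to $B^\Transpose-B=0$, i.e.\ to the symmetry of $B$. The matrix $J_{2g}$ is symplectic trivially. Let $H$ be the subgroup generated by these matrices. It contains $J_{2g}T_BJ_{2g}^{-1}=\smalltwobytwo{I_g}{0}{-B}{I_g}$, so it also contains all lower unipotents $L_C=\smalltwobytwo{I_g}{0}{C}{I_g}$ with $C$ symmetric.

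Write $M=\smalltwobytwo{a}{b}{c}{d}$ in $g\times g$ blocks. The symplectic condition is equivalent to three statements: $a^\Transpose c$ and $b^\Transpose d$ are symmetric, and $a^\Transpose d-c^\Transpose b=I_g$. Left multiplication by $T_B$ sends $(a,c)\mapsto(a+Bc,\,c)$. Left multiplication by $L_C$ sends $(a,c)\mapsto(a,\,c+Ca)$. Left multiplication by $D_A$ sends $(a,c)\mapsto(Aa,\,A^{-\Transpose}c)$. The goal is to use these moves to make the lower-left block $c$ vanish.

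Once $c=0$, the condition $a^\Transpose d=I_g$ forces $a\in\GLgZ$ and $d=a^{-\Transpose}$. Then $D_{a}^{-1}M=\smalltwobytwo{I_g}{a^{-1}b}{0}{I_g}$. This matrix is symplectic, so $a^{-1}b$ is symmetric, and it therefore equals some $T_B\in H$. Hence $M\in H$.

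The main obstacle is clearing $c$, which is a Euclidean algorithm on the first block column $\binom{a}{c}$. That column is a primitive $2g\times g$ integer matrix whose columns span a Lagrangian sublattice. I would argue column by column, by induction on $g$.

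First, use $D_A$ to bring $a$ and $c$ into a convenient form. Using $D_A$ with permutation and elementary matrices, together with the partial swap $J$ acting only on coordinates $(i,g+i)$, I can perform row operations on $\binom{a}{c}$. I need this partial swap to lie in $H$. Writing $E_{ii}$ for the matrix unit with a single $1$ in position $(i,i)$, I get it as $T_{E_{ii}}L_{-E_{ii}}T_{E_{ii}}$, which is the embedded $\smalltwobytwo{0}{1}{-1}{0}$. Note that $E_{ii}$ is symmetric, so these factors belong to $H$.

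With these moves I run the rank-one $\mathrm{SL}(2,\zZ)$ Euclidean step on each pair $(a_{i1},c_{i1})$. Mixing between different indices $i$ is done with $D_A$ and with symmetric $B$ of the form $E_{ij}+E_{ji}$. In this way the first column becomes the unit vector $e_1$.

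Next, the symplectic condition forces row $g+1$ and the $(g+1)$-st column into a standard shape, up to a $T_B$ correction. After that correction the problem reduces to $\mathrm{Sp}(2g-2,\zZ)$ embedded in the remaining coordinates, and the induction on $g$ applies.

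If this bookkeeping becomes unwieldy, I would instead rely on the cited result in \cite{MumfordTata1}, or on the classical fact that $\SpTwogZ$ is generated by transvections.
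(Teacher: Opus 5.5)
The paper gives no proof of this proposition; it simply quotes it from Mumford's Tata lectures \cite{MumfordTata1}, which is also your fallback. Your proposal is instead an independent, self-contained argument, and its skeleton is sound:
\begin{itemize}
\item the membership checks;
\item $L_C=J_{2g}T_{-C}J_{2g}^{-1}$;
\item the partial swap $T_{E_{ii}}L_{-E_{ii}}T_{E_{ii}}$;
\item the embedded $\SL(2,\zZ)$ Euclid steps on the pairs $(v_i,v_{g+i})$, followed by a $D_A$ sending the remaining primitive upper vector to $e_1$;
\item the observation that $\omega(e_1,\cdot)$ forces row $g+1$ to be $e_{g+1}^\Transpose$, where $\omega(x,y)=x^\Transpose J_{2g}y$;
\item the reduction to $\mathrm{Sp}(2g-2,\zZ)$.
\end{itemize}
Compared with the citation, your route costs bookkeeping but is constructive: it is an explicit algorithm writing any $M$ as a word in $J_{2g}$, $D_A$, $T_B$. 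Once you run the induction, the earlier ``clear $c$, then $M=D_aT_{a^{-1}b}$'' endgame is no longer needed, although it is also correct.

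One step is misstated. After $Me_1=e_1$, the symplectic condition only gives $y_{g+1}=1$ for $y=Me_{g+1}$. The entries $y_{g+2},\dots,y_{2g}$ are unconstrained, and no $T_B$ can change them, since $T_B$ leaves the lower half of a vector untouched. For example, take $g=2$ and $M=D_A$ with $A=\smalltwobytwo{1}{-1}{0}{1}$. Then $Me_1=e_1$ but $Me_3=e_3+e_4$.

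The fix uses tools you already have:
\begin{itemize}
\item If the lower half of $y$ is $(1,w^\Transpose)^\Transpose$, apply $D_A$ with $A=\smalltwobytwo{1}{w^\Transpose}{0}{I_{g-1}}$. This fixes $e_1$ and sends the lower half of $y$ to $e_1$.
\item Then apply a $T_B$ with $B$ symmetric and supported on the first row and column. It also fixes $e_1$, and a suitable choice kills the upper half of $y$.
\end{itemize}
With columns $1$ and $g+1$ equal to $e_1$ and $e_{g+1}$, symplecticity forces rows $1$ and $g+1$ to be $e_1^\Transpose$ and $e_{g+1}^\Transpose$. So $M$ is block-diagonal with an $\mathrm{Sp}(2g-2,\zZ)$ block.

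To close the induction, also check that the embedded generators of $\mathrm{Sp}(2g-2,\zZ)$ lie in $H$. $D_{1\oplus A'}$ and $T_{0\oplus B'}$ clearly do. The embedded $J_{2g-2}$ equals $J_{2g}$ times the inverse of your partial swap at index $1$, so it lies in $H$ as well.
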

Since all the generators in \cref{eqn:Sp2generators} have unit determinant, so has every symplectic matrix, which implies $\SpTwogZ<\grp{SL}(2g,\mathbb{Z})$.
\begin{remark}\label{thm:inj_grp_homoms_to_sp2gz}
    We collect some properties of the generators of $\SpTwogZ$ defined above.
    \begin{myenumerate}
    \item The following three maps are injective group homomorphisms.
    \begin{subequations}
    \begin{alignat}{2}
        &\{\pm 1,\pm\iunit\} &&\hookrightarrow \SpTwogZ, \hspace{1em} \iunit^n \mapsto J_{2g}^n, \\
        D\colon &\mathrm{GL(g,\mathbb{Z})} &&\hookrightarrow \SpTwogZ,\hspace{1em} A\mapsto D_A, \label{eq:D_homomorphism}\\
        T\colon & (\mathbb{Z}^{g\times g}_\mathrm{sym},+) &&\hookrightarrow \SpTwogZ,\hspace{1em} B\mapsto T_B.
    \end{alignat}
    \end{subequations}
    
\item The generators of $\SpTwogZ$ from \cref{eqn:Sp2generators} have the following algebraic properties:
    \begin{subequations}
            \begin{align}
        J_{2g}^{-1} &= J_{2g}^3 = J_{2g}^\mathsf{T} = -J_{2g}, \\
        (D_A)^{-1} &= D_{A^{-1}}, \hspace{1em} (D_A)^\mathsf{T} = D_{A^\mathsf{T}}, \hspace{1em} (D_A)^{-\mathsf{T}} = D_{A^{-\mathsf{T}}}, \\
        (T_B)^{-1} &= T_{-B}.
    \end{align}
    \end{subequations}
    \item The generators satisfy the following (non-exhaustive) list of relations
    \begin{subequations}\label{eq:symplectic_relations}
    \begin{align}
        J_{2g} D_A &= D_{A^{-\Transpose}} J_{2g}\label{eq:symplectic_reelation_1}\,,\\
        D_A T_B &= T_{ABA^\Transpose}D_A\label{eq:symplectic_reelation_2}\,,\\
        J_{2g}^2&=-I_{2g} \text{ is central}\,.\label{eq:symplectic_reelation_3}
    \end{align}
    \end{subequations}
    \end{myenumerate}
\end{remark}
For the case $g=1$, the only nontrivial generator of the form $D_A$ equals $J^2_2=-I_2$. Indeed, $\GLgrp{1}{\mathbb{Z}}=\left\{\pm 1 \right\}$ and hence this family contributes only $\pm I_2$, which are already generated by $J_{2}$.
Accordingly, at $g=1$, the generators $J_2$ and $T_b, b\in\mathbb{Z}$ suffice. Since $\zZ$ is generated by $1$ we only need $b=1$ which then coincides with the usual generators $S=(\begin{smallmatrix}0&1\\-1&0\end{smallmatrix})$ and $T=(\begin{smallmatrix}1&1\\0&1\end{smallmatrix})$ for the modular group.

There is a natural action of the symplectic group on the Siegel upper half space: 
\begin{definition}[Siegel upper half space]\label{def:Siegel_upper_half_space}
For an integer $g \geq 1$, the Siegel upper half space of degree $g$ is defined as
\begin{equation}
  \mathcal{H}_g=\left\{\tau \in \zC^{\matrixsize{g}{g}} \mid \tau=\tau^{\Transpose}, \Im(\tau)>0\right\}, 
\end{equation}
where $\Im(\tau)>0$ refers to positive definiteness of the imaginary part of the matrix $\tau$. An element of $\mathcal{H}_g$ is called a $\matrixsize{g}{g}$ \emph{Riemann matrix}. For $g=1$, we obtain the complex upper half-plane $\mathcal{H}_1=\mathbb{H}$.
\end{definition}

Besides the natural action on $\mathbb{Z}^{2g}$ by matrix multiplication from the left, $\SpTwogZ$ has two specific group actions:

\begin{proposition}\label{thm:sp2gz_actions}
    $\SpTwogZ$ acts on $\mathcal{H}_g$ and $\mathbb{C}^g\times\mathcal{H}_g$ as follows. Let $M=\smalltwobytwo{A}{B}{C}{D}\in\SpTwogZ$, $\tau\in\mathcal{H}_g$ and $\vec{z}\in\mathbb{C}^g$. Then
    \begin{subequations}
    \begin{align}
        \tau &\mapsto M\cdot\tau \coloneq(A\tau + B)(C\tau + D)^{-1}, \\
        (\vec{z},\tau) &\mapsto M\cdot(\vec{z},\tau)\coloneq \left((C\tau + D)^{-\Transpose}\vec{z},M\cdot\tau\right).
    \end{align}
    \end{subequations}
\end{proposition}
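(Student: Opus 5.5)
To prove \cref{thm:sp2gz_actions} I will check three things: the formulas are well defined (i.e.\ $C\tau+D$ is invertible and $M\cdot\tau\in\mathcal{H}_g$), the identity acts trivially, and the compatibility law $M\cdot(N\cdot\tau)=(MN)\cdot\tau$ holds, together with its analogue for pairs $(\vec z,\tau)$. Throughout I write $M=\smalltwobytwo{A}{B}{C}{D}$ and use the block form of $M^\Transpose J_{2g}M=J_{2g}$. Since $\SpTwogZ$ is closed under transposition (shown above via $J_{2g}$), I may equally use $MJ_{2g}M^\Transpose=J_{2g}$. Its blocks give $A^\Transpose C$, $B^\Transpose D$, $AB^\Transpose$, $CD^\Transpose$ symmetric, together with $A^\Transpose D-C^\Transpose B=I_g$ and $AD^\Transpose-BC^\Transpose=I_g$.

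\textbf{Well-definedness.} Set $P=A\tau+B$ and $Q=C\tau+D$, so that $\smalltwobytwo{P}{Q}{}{}$ is $M\smalltwobytwo{\tau}{I}{}{}$. The key identity is that $X^\Transpose J_{2g}X$ and $X^\dagger J_{2g} X$ are preserved when $X=\binom{\tau}{I}$ is replaced by $MX$, because $M$ is real and symplectic. For $X=\binom{\tau}{I}$ one finds
\[
X^\Transpose J_{2g} X=\tau^\Transpose-\tau=0
\quad\text{and}\quad
\tfrac{1}{2\iunit}X^\dagger J_{2g}X=\tfrac{1}{2\iunit}(\bar\tau-\tau)=-\Im\tau<0 .
\]
Hence $P^\Transpose Q=Q^\Transpose P$ and $\tfrac{1}{2\iunit}(P^\dagger Q-Q^\dagger P)=-\Im\tau$.

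Next I show $Q$ is invertible. If $Qv=0$, then $v^\dagger(\text{form})v=0$, which contradicts negative definiteness of $-\Im\tau$ unless $v=0$. With $Q$ invertible, put $\tau'=PQ^{-1}$. Then $\tau'$ is symmetric, because $Q^{-\Transpose}(P^\Transpose Q)Q^{-1}=Q^{-\Transpose}P^\Transpose$ equals $PQ^{-1}$ by the first relation. Its imaginary part is $\Im\tau'=Q^{-\dagger}(\Im\tau)Q^{-1}>0$, obtained by conjugating the second relation. So $M\cdot\tau\in\mathcal{H}_g$.

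\textbf{Action law.} I will write the formula projectively: $\binom{P}{Q}=M\binom{\tau}{I}$, and $M\cdot\tau$ is the unique $\tau'$ with $\binom{\tau'}{I}=\binom{P}{Q}Q^{-1}$. Apply $M$ to $\binom{\tau'}{I}$ where $\tau'=N\cdot\tau$. This gives $MN\binom{\tau}{I}Q_N^{-1}$, and right multiplication by an invertible matrix does not change the resulting $\tau''$. Therefore $M\cdot(N\cdot\tau)=(MN)\cdot\tau$. Along the way I record the cocycle identity
\[
C_{MN}\tau+D_{MN}=(C_M(N\cdot\tau)+D_M)(C_N\tau+D_N),
\]
which falls out of the bottom block of the same computation. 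The identity acts trivially by inspection.

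\textbf{Action on $\mathbb{C}^g\times\mathcal{H}_g$.} The second component is the action just established. For the first component, set $j(M,\tau)=C\tau+D$. The transpose-inverse of the cocycle identity gives
\[
j(MN,\tau)^{-\Transpose}=j(M,N\cdot\tau)^{-\Transpose}\,j(N,\tau)^{-\Transpose},
\]
which is exactly the composition law for $\vec z\mapsto j(M,\tau)^{-\Transpose}\vec z$. The main obstacle is the invertibility of $C\tau+D$ together with positivity of $\Im(M\cdot\tau)$. I handle both with the Hermitian-form trick above, and everything else is block bookkeeping.
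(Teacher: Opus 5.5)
Your proof is correct and takes essentially the paper's route. The paper only asserts that the action axioms $I_{2g}\cdot x=x$ and $M_1\cdot(M_2\cdot x)=(M_1M_2)\cdot x$ follow by direct calculation, and your projective bookkeeping together with the cocycle identity for $C\tau+D$ carries out exactly that calculation. In addition, you supply the well-definedness check that the paper takes for granted: invertibility of $C\tau+D$ and $M\cdot\tau\in\mathcal{H}_g$, obtained from the preserved Hermitian form.
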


It is not difficult to check by direct calculation that these are indeed group actions, which means $I_{2g}\cdot x=x$ and $M_1\cdot\left(M_2\cdot x\right)=\left(M_1M_2\right)\cdot x$, for all $x\in\mathcal{H}_g,\mathbb{C}^g\times\mathcal{H}_g$ and $M_1,M_2\in\SpTwogZ$. The first action is neither faithful (because $-I_{2g}$ acts trivially) nor transitive.

\subsubsection{Fundamental group and first homology group}\label{sec:FundamentalGroupHomologyGroup}
Compact Riemann surfaces are topologically classified by their \emph{genus}, that is, roughly speaking, their number of holes or number of handles. Mathematically, the information can be captured by choosing a set of homotopically non-trivial closed curves: a natural choice is to go around each hole and handle individually. In this article, we are going to use the convention to refer to loops going around handles as \emph{A-loops}, and loops going around holes as \emph{B-loops} (cf.~\cref{fig:genustwo_RS}). For a compact Riemann surface of genus $g$, there are $g$ A-loops and $g$ B-loops. We denote the homotopy classes of $A_i$ as $a_i$ and those of $B_i$ as $b_i$. 

Below, we are going to describe and compare two languages for constructing functions on compact Riemann surfaces. In those comparisons, the fundamental group and the first homology group will be important tools, both to be reviewed below. 

\paragraph{Fundamental group and first homology group.}
The \emph{fundamental group} $\pi_1(\Sigma)$ of a Riemann surface $\Sigma$ is the group of homotopy classes of closed loops in $\Sigma$, with the group multiplication induced by concatenation of paths \textcite{HatcherAlgTop}. The \emph{first homology group} $H_1(\Sigma)$ of $\Sigma$ is the abelian group generated by equivalence classes of formal sums of oriented closed loops, where two such formal sums are equivalent (i.e.\ homologous) if they differ by a boundary of a domain in $\Sigma$~\cite{Bobenko2011}.

While the fundamental group $\pi_1(\Sigma)$ is non-abelian for a Riemann surface of genus greater than one, $H_1(\Sigma)$ is abelian by construction. In particular, one can show the following.
\begin{proposition}
    The first homology group can be identified with the abelianization of the fundamental group. That is,
    \begin{equation}
        H_1(\Sigma) \cong \pi_1(\Sigma)/K,
    \end{equation}
    where $K\coloneq \left[ \pi_1(\Sigma),\pi_1(\Sigma)\right]$ denotes the commutator subgroup of $\pi_1(\Sigma)$.
\end{proposition}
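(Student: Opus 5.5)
The plan is to construct the Hurewicz map explicitly and then show that it is surjective with kernel exactly the commutator subgroup. Fix a base point $p\in\Sigma$. Every closed loop based at $p$ defines an oriented closed curve, hence a $1$-cycle and a class in $H_1(\Sigma)$. This gives a map $h\colon\pi_1(\Sigma,p)\to H_1(\Sigma)$, which will turn out to induce the isomorphism.

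First, $h$ is well defined. If two loops are homotopic rel $p$, the homotopy square gives a singular $2$-chain whose boundary is their difference, up to degenerate pieces: the sides at $p$ are constant and hence null-homologous. Next, $h$ is a homomorphism. The concatenation $\gamma_1\cdot\gamma_2$ differs from $\gamma_1+\gamma_2$ by the boundary of a singular triangle, obtained by reparametrizing the composed path. Since $H_1(\Sigma)$ is abelian, $h$ kills every commutator, so $K\subseteq\ker h$. It therefore descends to a homomorphism $\bar h\colon\pi_1(\Sigma)/K\to H_1(\Sigma)$.

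Surjectivity comes from path-connectedness of $\Sigma$. A $1$-cycle is a formal integer sum of oriented closed curves, which may be taken as the generating description used above. For each such curve $c$, choose a path $\beta$ from $p$ to a point of $c$. The based loop $\beta\, c\,\beta^{-1}$ is homologous to $c$, because the contributions of $\beta$ and $\beta^{-1}$ cancel. Sums of such classes are images of products of based loops.

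The main obstacle is injectivity, i.e. $\ker h\subseteq K$. I would construct an inverse $\psi\colon H_1(\Sigma)\to\pi_1(\Sigma)/K$ on the level of chains.
\begin{itemize}
\item For each point $x$, fix a path $\beta_x$ from $p$ to $x$, with $\beta_p$ constant.
\item Send a singular $1$-simplex $\sigma$ from $x$ to $y$ to the class of $\beta_x\,\sigma\,\beta_y^{-1}$ in the abelian group $\pi_1/K$, and extend additively.
\item On a cycle, the endpoint contributions cancel because $\pi_1/K$ is abelian, so $\psi$ yields an element of $\pi_1/K$.
\item A boundary of a singular $2$-simplex is sent to the class of a loop that bounds a disk, hence is null-homotopic. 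Degenerate simplices are handled the same way.
\end{itemize}
So $\psi$ descends to homology. Checking $\psi\circ\bar h=\mathrm{id}$ on a based loop is immediate, since $\beta_p$ is constant. Therefore $\bar h$ is injective, and the identification $H_1(\Sigma)\cong\pi_1(\Sigma)/K$ follows.

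As a concrete cross-check in the paper's setting, $\pi_1(\Sigma)$ has the standard presentation with generators $a_i,b_i$ and the single relation $\prod_i[a_i,b_i]=1$. Abelianizing this presentation gives $\mathbb{Z}^{2g}$ generated by the classes of $a_i,b_i$, consistent with the A- and B-loop basis of $H_1(\Sigma)$.
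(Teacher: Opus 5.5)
Your argument is correct and is the standard degree-one Hurewicz proof (as in Hatcher, Thm.~2A.1). The paper gives no proof of its own here; it only invokes this classical fact with a citation, so there is no different route to compare against.

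One small tidy-up. The cancellation of the endpoint paths $\beta_x$ on a cycle is not what makes $\psi$ descend to homology; that only needs the null-homotopy of the boundary loop of a singular $2$-simplex. The cancellation is instead exactly what gives $\bar h\circ\psi=\mathrm{id}$ on $H_1(\Sigma)$. That proves surjectivity directly in singular chains, so you no longer need to assume that every singular $1$-cycle is already a sum of closed curves.
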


Considering the standard presentation of the fundamental group, it can be deduced immediately that the first homology group is isomorphic to the free abelian group $\mathbb{Z}^{2g}$ and generated by the homology classes of the A- and B-loops~\cite{Bobenko2011}, which we shall denote $\mathfrak{A}_i$ and $\mathfrak{B}_i$, and call A- and B-\emph{cycles}, respectively. That is,
\begin{equation}
    H_1(\Sigma) = \left< \mathfrak{A}_1, \mathfrak{B}_1, \dots, \mathfrak{A}_g, \mathfrak{B}_g \right>_{\mathrm{abelian}} \cong \mathbb{Z}^{2g}\,.
\end{equation}
We call any set of $2g$ homology classes that generate $H_1(\Sigma)$ a \emph{homology basis} and denote it as a pair of formal $g$-vectors, such as $\vec{\mathfrak{A}},\vec{\mathfrak{B}}$.

\paragraph{Canonical homology bases.} The \emph{intersection number} $(\gamma_1\inters\gamma_2)_P$ of two smooth oriented curves $\gamma_1,\gamma_2$ crossing at the point $P$ is defined to be either plus or minus one, depending on the relative orientation of the curves at $P$~\cite{Bobenko2011}:
\begin{equation}
    \mpostuse{intersectiondef}
\end{equation}
This definition descends to a skew-symmetric bilinear form on $H_1(\Sigma)$ and a homology basis $\vec{\mathfrak{A}},\vec{\mathfrak{B}}$ is called \emph{canonical} if it has the intersection relations
\begin{equation}\label{eq:can_hom_basis_intersection_rels}
    \mathfrak{A}_i \inters \mathfrak{A}_j = 0 = \mathfrak{B}_i \inters \mathfrak{B}_j   , \hspace{1em} \mathfrak{A_i} \inters \mathfrak{B}_j = \delta_{ij}\,,
\end{equation}
or, equivalently, the intersection matrix
\begin{equation}
    \begin{pmatrix}
        \vec{\mathfrak{B}}\\
        \vec{\mathfrak{A}}
    \end{pmatrix} \inters
    \begin{pmatrix} 
        \vec{\mathfrak{B}}^\Transpose & \vec{\mathfrak{A}}^\Transpose
    \end{pmatrix}
    = \begin{pmatrix}
        \vec{\mathfrak{B}}\inters\vec{\mathfrak{B}}^\Transpose & \vec{\mathfrak{B}}\inters\vec{\mathfrak{A}}^\Transpose \\
        \vec{\mathfrak{A}}\inters\vec{\mathfrak{B}}^\Transpose & \vec{\mathfrak{A}}\inters\vec{\mathfrak{A}}^\Transpose
    \end{pmatrix}
    = - J_{2g}.
\end{equation} 

Despite the word \enquote{canonical}, such bases are not unique. Since $H_1(\Sigma)\cong\mathbb{Z}^{2g}$, a general change of homology basis is realized via a unimodular matrix $M\in\GLgrp{2g}{\mathbb{Z}}=\Aut(\mathbb{Z}^{2g})$ as\footnote{We follow the convention used in~\cite{SchottkyKroneckerForms} of having $\vec{\mathfrak{B}}$ at the top. This has the benefit that the transformation laws given in \cref{thm:transformation_rules_diff_pm_abel} match our conventions for the actions of the symplectic group.}
\begin{equation}\label{eq:change_of_homology_basis}
    \begin{pmatrix}
        \vec{\mathfrak{B}}^\prime\\
        \vec{\mathfrak{A}}^\prime
    \end{pmatrix}
    =  M
    \begin{pmatrix}
        \vec{\mathfrak{B}} \\
        \vec{\mathfrak{A}}
    \end{pmatrix},
\end{equation}
and the new basis is again canonical, if and only if
\begin{align}
    -J_{2g} = M \begin{pmatrix}
        \vec{\mathfrak{B}}\\
        \vec{\mathfrak{A}}
    \end{pmatrix} \inters
    \begin{pmatrix} 
        \vec{\mathfrak{B}}^\Transpose & \vec{\mathfrak{A}}^\Transpose
    \end{pmatrix} M^\Transpose
    = -M J_{2g} M^\Transpose.
\end{align}
That is, if and only if $M^\Transpose\in\SpTwogZ$. Since the symplectic group is closed under transposition, this is precisely the case if $M$ is symplectic.
  
\subsection{Jacobi parametrization}\label{sec:JacobiParametrization}
The theoretical underpinnings of today's Jacobi parametrization were developed by Abel and Jacobi, followed by contributions of Weierstrass and Riemann in the 19\textsuperscript{th} century. Its full recognition as a way of capturing the complete geometric data of a Riemann surface is expressed in the Jacobi inversion theorem and Torelli's theorem. A lean introduction to the general description of Riemann surfaces is~\cite{Bobenko2011}, where~\cite{FarkasKra,HatcherAlgTop} contain more specific explanations. Reference~\cite{Schlichenmaier} includes a thorough description of Torelli's theorem.
We are going to very briefly review the main constructional steps towards defining the Jacobi parametrization before introducing and discussing the Schottky uniformization in \cref{sec:SchottkyUniformization} below.

\subsubsection{Holomorphic differentials}
Let $\Sigma$ denote a compact Riemann surface of genus $g$. Meromorphic one-forms on $\Sigma$ are called \emph{abelian differentials}. An abelian differential that in any local chart takes the form $f(z)\der z$ with $f$ holomorphic is called a \emph{holomorphic differential} or an \emph{abelian differential of the first kind}. The set of holomorphic differentials forms a complex $g$-dimensional vector space $H^1(\Sigma,\mathbb{C})$. 
Furthermore, holomorphic differentials are closed, meaning $\der \omega = 0$ for any $\omega\in H^1(\Sigma,\mathbb{C})$.

Once we have chosen a canonical homology basis $\vec{\mathfrak{A}},\vec{\mathfrak{B}}$ (i.e.\ satisfying the intersection relations from \cref{eq:can_hom_basis_intersection_rels}), we can pick a corresponding basis $\omega_1,\dots,\omega_g$ of $H^1(\Sigma,\mathbb{C})$ by imposing the normalization\footnote{This definition is meaningful since the integration of closed differentials depends only on the homology class of the path,~\cite[Corollary\ 4]{Bobenko2011}.}
\begin{equation}\label{eq:hol_diffs_normalization}
    \int_{\mathfrak{A}_i} \omega_j = \delta_{ij}.
\end{equation}
We call $\omega_1,\dots,\omega_g$ the \emph{canonical basis of differentials} corresponding to $\vec{\mathfrak{A}},\vec{\mathfrak{B}}$, and we will often denote it as the formal $g$-vector $\vec{\omega}$.

\paragraph{Period matrix and Abel's map.}
Now that we used the A-cycles to pick a normalized basis of holomorphic differentials, let us consider their B-periods.
\begin{definition}[Period matrix]
    Let $\vec{\omega}$ be the canonical basis of differentials corresponding to a canonical homology basis $\vec{\mathfrak{A}},\vec{\mathfrak{B}}$. The associated \emph{period matrix} of $\Sigma$ is defined as the complex $\matrixsize{g}{g}$ matrix $\tau$ with entries
    \begin{equation}
        \tau_{ij} = \int_{\mathfrak{B}_i} \omega_{j}.
    \end{equation}
\end{definition}
One can show using Riemann's bilinear identity (see Theorem 16 of~\cite{Bobenko2011}, which uses the Siegel \emph{left} half-space) that the period matrix defined above is a Riemann matrix, i.e.\ an element of $\mathcal{H}_g$; see \cref{def:Siegel_upper_half_space}.

Using the period matrix $\tau$ of $\Sigma$, we can construct the lattice $\Lambda_\tau = \mathbb{Z}^g+\tau\mathbb{Z}^g$. The complex $g$-dimensional torus $\mathbb{C}^g/\Lambda_\tau$ is called the \emph{Jacobian variety} of $\Sigma$, denoted $\Jac(\Sigma)$. A classical result is (see e.g.~\cite[Theorem 5.4]{Schlichenmaier}):
\begin{theorem}[Torelli]\label{thm:Torelli}
    Two Riemann surfaces are isomorphic (i.e.\ there exists a biholomorphic\footnote{A map $f: M\rightarrow M^\prime$ between to complex manifolds $M,M^\prime$ is called an \emph{analytic isomorphism} or a \emph{biholomorphic map} if it is bijective, and both $f$ and its inverse are holomorphic~\cite{Schlichenmaier}.} map between them) if and only if their respective Jacobian varieties are isomorphic (as principally polarized tori).
\end{theorem}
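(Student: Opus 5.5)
The plan is to prove both directions of \cref{thm:Torelli} by passing through the period matrix and \cref{thm:sp2gz_actions}. A principally polarized torus of the form $\mathbb{C}^g/\Lambda_\tau$ carries the polarization determined by $(\Im\tau)^{-1}$. An isomorphism of such tori is a linear map $\mathbb{C}^g\to\mathbb{C}^g$ that maps $\Lambda_\tau$ onto $\Lambda_{\tau'}$ and preserves the polarization. I will reduce this notion to the statement that $\tau'=M\cdot\tau$ for some $M\in\SpTwogZ$.

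\textbf{The easy direction.} Suppose $f\colon\Sigma\to\Sigma'$ is biholomorphic.
\begin{itemize}
\item Pushing forward a canonical homology basis of $\Sigma$ along $f$ gives a canonical basis of $\Sigma'$, because $f$ is orientation preserving and therefore preserves intersection numbers.
\item Pulling back the normalized differentials of $\Sigma'$ gives the normalized differentials of $\Sigma$.
\item Hence $\Sigma$ and $\Sigma'$ have equal period matrices, and so equal Jacobians.
\end{itemize}
Choosing a different canonical basis changes the data by a symplectic $M$, as in \cref{eq:change_of_homology_basis}. The normalized differentials then transform by $(C\tau+D)^{-\Transpose}$, and the period matrix by $\tau\mapsto M\cdot\tau$. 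The linear map $\vec z\mapsto (C\tau+D)^{-\Transpose}\vec z$ carries $\Lambda_\tau$ onto $\Lambda_{M\cdot\tau}$ and respects the polarization, so the Jacobian is well defined up to isomorphism of principally polarized tori.

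\textbf{The hard direction.} Here an isomorphism of principally polarized Jacobians is given and a biholomorphism of the surfaces must be produced. I would follow the classical route.
\begin{itemize}
\item Embed $\Sigma$ in $\Jac(\Sigma)$ by Abel's map and consider the theta divisor $\Theta$, which is the image of $W_{g-1}$ under Riemann's theorem on the zeros of theta functions. The polarization determines $\Theta$ up to translation.
\item Recover $\Sigma$ from the pair $(\Jac(\Sigma),\Theta)$. The method I have in mind is Andreotti's: the branch locus of the Gauss map on $\Theta$ is dual to the canonical curve. In the hyperelliptic case an additional argument is needed.
\item Alternatively, use Weil's argument: the intersections $\Theta\cap(\Theta+a)$ recover $W_1$, and $W_1\cong\Sigma$ when $g\ge 2$.
\item The cases $g=0$ and $g=1$ are immediate: for $g=1$ the Jacobian is $\Sigma$ itself.
\end{itemize}
Since both constructions depend only on the principally polarized torus, an isomorphism of the Jacobians transports $\Theta$ to a translate of $\Theta'$ and hence yields a biholomorphism $\Sigma\cong\Sigma'$.

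The main obstacle is this reconstruction step. It requires Riemann's vanishing theorem, the Jacobi inversion theorem and a careful analysis of singularities, none of which is developed in the paper. In this context the paper cites \cite[Theorem 5.4]{Schlichenmaier} for it, and I would do the same rather than reprove it.
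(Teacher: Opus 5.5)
Your proposal matches what the paper does: the paper gives no proof of Torelli's theorem and simply cites \cite[Theorem 5.4]{Schlichenmaier}, which is exactly where you place the substantive reconstruction direction. Your check of the easy direction is correct and consistent with \cref{thm:transformation_rules_diff_pm_abel}: you push forward a canonical basis, pull back the normalized differentials, and use $\vec z\mapsto (C\tau+D)^{-\Transpose}\vec z$ to identify $\Lambda_\tau$ with $\Lambda_{M\cdot\tau}$ compatibly with the polarization (your one-line gloss of Weil's argument is loose, since $W_1$ is recovered only indirectly through the splitting of $\Theta\cap(\Theta+a)$ for $a\in W_1-W_1$, but this is immaterial because you defer to the citation).
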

As a consequence, $\Sigma$ is determined up to analytic isomorphism by $\Jac(\Sigma)$. That is, the period matrix encodes all geometric information about $\Sigma$.

Using the Jacobian variety, one can introduce a useful function on $\Sigma$.
\begin{definition}[Abel's map]\label{def:Abel}
    The map $\mathfrak{u}\colon \Sigma\times\Sigma \rightarrow \Jac(\Sigma)$, defined component-wise as 
    \begin{equation}\label{eqn:Abel}
        (P, P_0) \mapsto \abelcomp{i}{P}{P_0}{\tau} \coloneq \int_{P_0}^P \omega_i,
    \end{equation}
    is called \emph{Abel's map} or (Abel-)Jacobi map.
\end{definition}
Abel's map does depend on a choice of homology basis (and therefore the period matrix) and the base point $P_0$. If clear from context, we are going to drop the dependence in the notation. In particular, once $P_0$ is fixed, Abel's map can be viewed as a function $\Sigma\rightarrow\Jac(\Sigma)$ and serves as an embedding of $\Sigma$ into its Jacobian variety~\cite{Schlichenmaier}.

Note that in \cref{def:Abel}, no integration path from $P_0$ to $P$ is specified, although the integral in \cref{eqn:Abel} does depend on the homology class of the path chosen:
\begin{subequations}\label{eq:abelmap_quasiperiodicities}
\begin{align}
    \left(\int_{P_0}^P + \int_{\mathfrak{A}_j} \right) \omega_i &= \abelcomp{i}{P_0}{P}{\tau} + \delta_{ij}, \\
    \left(\int_{P_0}^P + \int_{\mathfrak{B}_j}  \right) \omega_i &= \abelcomp{i}{P_0}{P}{\tau} + \tau_{ij}.
\end{align}    
\end{subequations}
From the above equations, it is clear that the additional relative term generated by a different choice of homology class of the integration path is always an element of $\Lambda_\tau$. Therefore, the image of Abel's map is well-defined in the Jacobian variety $\Jac(\Sigma)$. 
\begin{proposition}\label{thm:transformation_rules_diff_pm_abel}
    If the homology basis is transformed with $M=\smalltwobytwo{A}{B}{C}{D}\in\SpTwogZ$ as given in \cref{eq:change_of_homology_basis}, then the canonical basis of differentials $\vec{\omega}$, the period matrix $\tau$ and Abel's map $\mathfrak{u}$ transform as
    \begin{subequations}
    \begin{align}
        \vec{\omega}^\prime &= \left(C\tau + D\right)^{-\Transpose} \vec{\omega}, \\
        \tau^\prime &= \left(A\tau + B\right)\left(C\tau + D\right)^{-1}, \\
        \abelvec{P}{P_0}{\tau^\prime} &= \left(C\tau + D \right)^{-\Transpose} \abelvec{P}{P_0}{\tau}.
    \end{align}
    \end{subequations}
\end{proposition}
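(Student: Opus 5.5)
The plan is to reduce everything to linearity of integration over homology classes, applied to the transformation rule \cref{eq:change_of_homology_basis}. Write $\vec{\mathfrak{B}}' = A\vec{\mathfrak{B}} + B\vec{\mathfrak{A}}$ and $\vec{\mathfrak{A}}' = C\vec{\mathfrak{B}} + D\vec{\mathfrak{A}}$. Integrals of closed differentials depend only on homology classes and are additive in them. Hence for the old canonical basis $\vec{\omega}$ we obtain the $g\times g$ matrices of periods
\begin{equation*}
\Big(\int_{\mathfrak{A}'_i}\omega_j\Big)_{ij} = C\tau + D, \qquad \Big(\int_{\mathfrak{B}'_i}\omega_j\Big)_{ij} = A\tau + B,
\end{equation*}
using $\int_{\mathfrak{A}_i}\omega_j=\delta_{ij}$ and $\int_{\mathfrak{B}_i}\omega_j=\tau_{ij}$.

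Next I make the ansatz $\vec{\omega}' = N\vec{\omega}$ for a constant matrix $N\in\mathbb{C}^{g\times g}$. This ansatz is justified because $\vec{\omega}$ is a basis of $H^1(\Sigma,\mathbb{C})$. By the first display, the A$'$-normalization \cref{eq:hol_diffs_normalization} reads $(C\tau+D)N^\Transpose = I_g$. This forces $N = (C\tau+D)^{-\Transpose}$, provided $C\tau+D$ is invertible. The B$'$-periods are then
\begin{equation*}
\tau' = (A\tau+B)N^\Transpose = (A\tau+B)(C\tau+D)^{-1}.
\end{equation*}
Abel's map transforms by integrating $\vec{\omega}'=N\vec{\omega}$ from $P_0$ to $P$ along the same path. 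Linearity gives $\mathfrak{u}(P,P_0|\tau') = (C\tau+D)^{-\Transpose}\mathfrak{u}(P,P_0|\tau)$. This is consistent modulo lattices, since $N$ maps $\Lambda_\tau$ onto $\Lambda_{\tau'}$ by the same period computation.

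The main obstacle is the invertibility of $C\tau+D$. I would argue it as follows. The new basis is canonical because $M$ is symplectic, as shown just before \cref{sec:JacobiParametrization}. Suppose $(C\tau+D)v=0$ for some $v\neq 0$. Then the nonzero holomorphic differential $\eta = v^\Transpose\vec{\omega}$ has all A$'$-periods equal to zero. Riemann's bilinear relations for the canonical basis $\vec{\mathfrak{A}}',\vec{\mathfrak{B}}'$ give $\int_\Sigma \eta\wedge\bar\eta = \sum_i\big(\int_{\mathfrak{A}'_i}\eta\int_{\mathfrak{B}'_i}\bar\eta - \int_{\mathfrak{B}'_i}\eta\int_{\mathfrak{A}'_i}\bar\eta\big) = 0$. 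But $\int_\Sigma \eta\wedge\bar\eta$ is a nonzero multiple of $\int|f|^2$, so $\eta=0$, a contradiction. This is the same argument that underlies $\tau\in\mathcal{H}_g$, quoted from Theorem 16 of~\cite{Bobenko2011}. The remaining care is index placement, that is, which side the transpose lands on. The convention $\tau_{ij}=\int_{\mathfrak{B}_i}\omega_j$, with the row index labelling cycles, is precisely what produces $N^\Transpose$ on the right, and I would check this step explicitly.
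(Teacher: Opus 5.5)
Your proposal is correct and is precisely the \enquote{direct computation} the paper invokes without writing it out. You express the new cycles through $M$, read off the period matrices $C\tau+D$ and $A\tau+B$ of the old differentials, and impose the new A-normalization; with $\tau_{ij}=\int_{\mathfrak{B}_i}\omega_j$ the transposes land correctly. Your Riemann-bilinear argument for the invertibility of $C\tau+D$ also supplies a detail the paper leaves implicit.
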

This can be verified by direct computation. Crucially, the three objects transform in the fashion already given in \cref{thm:sp2gz_actions}, that is, by action of the symplectic group. Two period matrices belong to the same Riemann surface (up to analytic isomorphisms), just with a different choice of canonical homology basis, if they are related by a symplectic transformation~\cite{Schlichenmaier}. This is another consequence of Torelli's theorem (cf.~\cref{thm:Torelli}).

\subsection{Schottky uniformization}\label{sec:SchottkyUniformization}
After having discussed the Jacobi parametrization in the last subsection, we are now going to introduce the Schottky uniformization as a second language for the description of compact Riemann surfaces. 
The intention of the Schottky language is to describe each handle of the surface separately by assigning a Möbius transformation to the respective \enquote{genus direction}. 
Standard references for Schottky groups and the Schottky uniformization are~\cite{Bobenko2011,Bers} with further details in ~\cite{Maskit_SchottkyGroupCharacterization,Chuckrow,Marden}. The relation of the Schottky formulation to Teichmüller space is discussed in detail in~\cite{Herrlich}, while the original proof of the retrosection theorem is in Ref.~\cite{Koebe}.  In terms of notation we are going to follow~\rcite{SchottkyKroneckerForms}.

\subsubsection{Möbius transformations}
The \emph{Möbius group} $\PSLtwoC$ acts on the Riemann sphere $\bar{\mathbb{C}}$ via \emph{Möbius transformations}. Let $\gamma=\smalltwobytwo{a}{b}{c}{d}\in\PSLtwoC$ and $z\in\bar{\mathbb{C}}$. Then
\begin{equation}
    \gamma z \coloneq \frac{az + b}{cz + d}
\end{equation}
defines a group action $\PSLtwoC\times\bar{\mathbb{C}}\rightarrow\bar{\mathbb{C}}$. This action is triply transitive, that is, for any $z_i,w_i\in\bar{\mathbb{C}}, i=1,2,3$, there exists a Möbius transformation taking $z_i$ to $w_i$. 
\begin{definition}
	A Möbius transformation is called \emph{loxodromic} if it has two distinct fixed points and is conjugate in $\PSLtwoC$ to\footnote{The choice of branch in the square root is inconsequential since it can at most result in a different overall sign. The corresponding matrix then still describes the same element of $\PSLtwoC$.} $\smalltwobytwo{\sqrt{\lambda}}{0}{0}{1/\sqrt{\lambda}}$ for some $\lambda\in\mathbb{C}$ with $|\lambda|<1$.
\end{definition}
The variable $\lambda$ is called the \emph{multiplier}\footnote{Because $\smalltwobytwo{\sqrt{\lambda}}{0}{0}{1/\sqrt{\lambda}}$ acts on $\bar{\mathbb{C}}$ by multiplication with $\lambda$.} of $\gamma$. Any loxodromic $\gamma$ will be conjugate to $\mathrm{diag}(1/\sqrt{\lambda},\sqrt{\lambda})$ as well, which would allow choosing the multiplier to be $\lambda^{-1}$, resulting in an absolute value greater than one. In this article, we will choose multipliers within the open unit circle.

A loxodromic Möbius transformation $\gamma$ is determined by its two fixed points and its multiplier. Let $z\in\bar{\mathbb{C}}$ not be a fixed point of $\gamma$. Then
\begin{equation}\label{eq:moebiustrans_def_fixpoints}
    P=\lim_{n\rightarrow\infty} \gamma^n z \hspace{1em}\text{and}\hspace{1em} P^\prime=\lim_{n\rightarrow\infty} \gamma^{-n} z
\end{equation}
are called the \emph{attractive} and \emph{repulsive} fixed point of $\gamma$, respectively. In terms of $P$, $P'$ and $\lambda$, the loxodromic Möbius transformation $\gamma$ can be represented as
\begin{equation}\label{eq:moebiustrans_fixpoints_multiplier}
    \gamma = \frac{1}{\left(P-P^\prime\right) \sqrt{\lambda}}
    \begin{pmatrix}
        P - \lambda P^\prime & -PP^\prime (1-\lambda) \\
        1 - \lambda          & \lambda P - P^\prime
    \end{pmatrix},
\end{equation}
or equivalently
\begin{equation}
    \frac{\gamma z-P^\prime}{\gamma z-P}= \frac{1}{\lambda} \frac{z-P^\prime}{z-P}.
\end{equation}
Compatibility of \cref{eq:moebiustrans_def_fixpoints} and \cref{eq:moebiustrans_fixpoints_multiplier} depends crucially on $|\lambda|<1$: if $\lambda$ had absolute value greater than one, the roles of attractive and repulsive fixed points would be swapped.

\subsubsection{Schottky groups}\label{sec:review_schottky_groups}
Consider now $2g$ disjoint Jordan curves $C_1, C_1^\prime, \dots, C_g, C_g^\prime$ on the Riemann sphere and assume that they together form the boundary of a domain $\mathcal{F}$ in $\bar{\mathbb{C}}$. Furthermore, assume that there are $g$ Möbius transformations $\gamma_1,\dots,\gamma_g$ such that $\gamma_i$ maps the exterior of $C_i$ onto the interior of $C_i^\prime$. In particular, $\gamma_i C_i = C_i^\prime$. For a visualization, see \cref{fig:schottky_circles}. There, we are using circles rather than arbitrary Jordan curves solely for ease of numerical implementation.
\begin{definition}[Schottky group]
    The group $G=\langle \gamma_1,\dots,\gamma_g \rangle < \PSLtwoC$ generated by Möbius transformations as described above is called a \emph{Schottky group (of genus $g$)}.
\end{definition}
\begin{figure}
    \centering
    \begin{subfigure}{0.45\textwidth}
        \centering
        \begin{picture}(200,200)
            \put(0,0){\includegraphics[width=\textwidth]{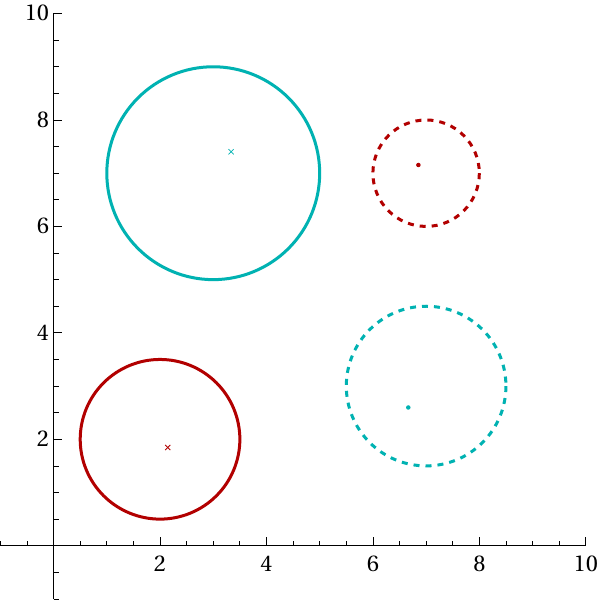}}
            \put(77,30){$C_1$}
            \put(58,42){\footnotesize $P_1^\prime$}
            \put(163,130){$C_1^\prime$}   
            \put(143,140){\footnotesize $P_1$}
            \put(90,105){$C_2$}
            \put(80,143){\footnotesize $P_2^\prime$}
            \put(165,45){$C_2^\prime$}
            \put(140,57){\footnotesize $P_2$}
        \end{picture}
        \caption{The four original circles $C_1,C_1^\prime,C_2,C_2^\prime$.}\label{fig:schottky_circles_sub1}
    \end{subfigure}\hspace{5mm}
    \begin{subfigure}{0.45\textwidth}
        \centering
        \begin{picture}(200,0)
            \put(0,0){\includegraphics[width=\textwidth]{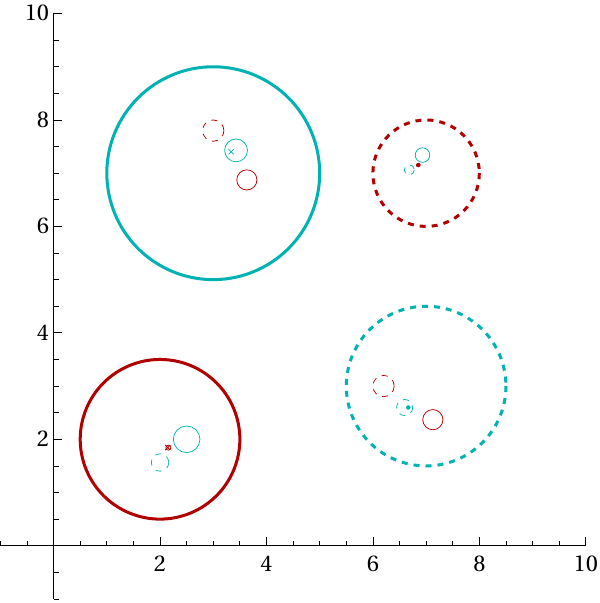}} 
            \put(165,45){$C_2^\prime=\gamma_2 C_2$}
            \put(128,81){\scriptsize $\gamma_2 C_1^\prime$}
            \put(146,68){\scriptsize $\gamma_2 C_1$}   
            \put(136,61){\line(-1,-2){10}}
            \put(115,33){\scriptsize $\gamma_2C_2^\prime$}
        \end{picture}
        \caption{Together with their images under $\gamma_1^{\pm1},\gamma_2^{\pm1}$.}\label{fig:schottky_circles_cirle_images}
    \end{subfigure}
    \caption{A generic classical genus-two Schottky cover displayed on the complex plane. Despite having drawn a classical Schottky setup here, our complete construction works for generic Schottky groups with generic Jordan curves $C$ and $C'$.}\label{fig:schottky_circles}
\end{figure}
\begin{remark}\label{rem:15}
    Let $G$ be a Schottky group as constructed above. Then the following are valid:
    \begin{myenumerate}
        \item Every nontrivial element of $G$ is loxodromic.
        \item $G$ is isomorphic to the free group $\Fg$ on $g$ generators. In particular, the $\gamma_1,\dots,\gamma_g$ from above generate $G$ freely. We will refer to them as \emph{Schottky generators}.
	\item A Schottky group is called \emph{classical} if all Jordan curves are circles. For a set of $2g$ disjoint circles with disjoint insides, one can always find corresponding Schottky generators. There do, however, exist non-classical Schottky groups~\textcite{Marden}.    
	\item $G$ is a Kleinian group. Accordingly, one can classify Schottky groups as finitely generated free Kleinian groups whose nontrivial elements are loxodromic~\textcite{Maskit_SchottkyGroupCharacterization}. It follows, that every finitely generated subgroup of a Schottky group is again a Schottky group,~\cite[Theorem 1]{Chuckrow}.
	\item $\mathcal{F}$ is a fundamental domain for the action of $G$ on the Riemann sphere, called a \emph{standard fundamental domain}. One of the Jordan curves from each pair has to be included in $\mathcal{F}$. Here we choose $\mathcal{F}$ to include all $C_i$ and exclude all $C_i^\prime$.
	\item For a given set of Schottky generators, the Jordan curves $C_1,\dots C_g^\prime$ are not unique and neither is $\mathcal{F}$. 
    \end{myenumerate}
\end{remark}
\begin{definition}[Region of discontinuity and limit set]\label{def:region_of_discontinuity}
    Let $G$ be a Schottky group and $\mathcal{F}$ a standard fundamental domain. The \emph{region of discontinuity} $\Omega(G)$ is the largest open subset of $\bar{\mathbb{C}}$ on which $G$ acts properly discontinuously\footnote{A group $G$ acts properly discontinuously on a set $X$ if every $x\in X$ has a neighborhood $U\subset X$ such that there are only finitely many $g\in G$ with $U \cap gU\neq\varnothing$~\cite{HatcherAlgTop}.}, and can be expressed as
    \begin{equation}
        \Omega(G) = \bigcup_{\gamma\in G} \gamma \mathcal{F}.
    \end{equation}
    Its complement $\Lambda(G)=\bar{\mathbb{C}}\smallsetminus\Omega(G)$ is called the \emph{limit set} of $G$ and is equal to the closure of the set of fixed points of all elements in $G$.
\end{definition}
Below, we will see that symmetries within the Schottky uniformization can be described only when referring to a particular choice of Schottky generators. This leads to the notion of a \emph{marked} Schottky group:
\begin{definition}[Marked Schottky group]\label{def:MarkedSchottkyGroup}
    A \emph{marked Schottky group} $(G; \gamma_1,\dots\gamma_g)$ of genus $g$ is a Schottky group $G$ together with a chosen $g$-tuple of free generators. We will refer to the tuple $(\gamma_1,\dots,\gamma_g)\eqcolon\vec{\gamma}$ as a \emph{marking} of $G$. Since $G=\langle \gamma_1,\dots,\gamma_g\rangle$ can be reconstructed from the marking, we will often write $\vec{\gamma}$ instead of $(G;\gamma_1,\dots,\gamma_g)$.
\end{definition}

Two marked Schottky groups $\vec{\gamma}, \vec{\gamma}^\prime$ are called \emph{equivalent} if there is a Möbius transformation $\sigma$ such that $\gamma_i^\prime = \sigma \gamma_i \sigma^{-1}$ for all $i=1,\dots, g$. The equivalence classes of marked Schottky groups make up the \emph{Schottky space}, which is related to the moduli space and Teichmüller space of Riemann surfaces \textcite{Herrlich}.
\cref{eq:moebiustrans_def_fixpoints} implies that under conjugation $\sigma \vec{\gamma}\sigma^{-1}$, i.e.\ $\gamma_i\mapsto \sigma \gamma_i \sigma^{-1}$, the fixed points $P,P^\prime$ of any element of the Schottky group transform to $\sigma P,\sigma P^\prime$. Moreover, while the multipliers stay the same, the region of discontinuity and standard fundamental domain become $\sigma \Omega(G)$ and $\sigma\mathcal{F}$~\cite[p.\ 333]{Bers}. 

\subsubsection{Schottky uniformization}
\begin{figure}[t]
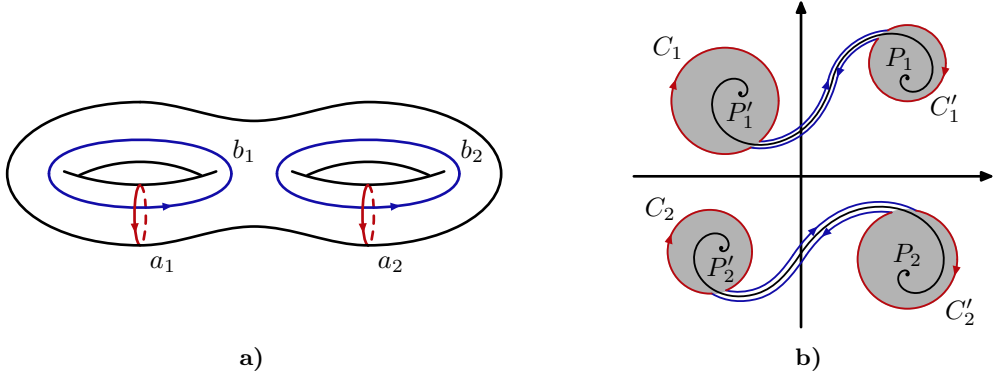

	\centering
    \begin{subfigure}[b]{0.45\textwidth}
        \centering 
        \raisebox{0.75cm}{\mpostuse{genustwo}}
        \caption{}\label{fig:genustwo_RS}
    \end{subfigure}
    \begin{subfigure}[b]{0.45\textwidth}
        \centering 
        \mpostuse{genustwoSchottky}
        \caption{}\label{fig:genustwo_schottky}
    \end{subfigure}
    \caption{Schottky uniformization \subref{fig:genustwo_schottky} of the genus-two Riemann surface in \subref{fig:genustwo_RS}. \Atxt- and \Btxt-cycles are drawn in red and blue, respectively. The fundamental domain $\funddom$ of the Schottky cover is the region outside all circles. Note that $P_i$ lies in the inside of $C_i^\prime$, and $P_i^\prime$ in the inside of $C_i$.}
	\label{fig:genustwo}
\end{figure}

The following theorem is discussed in~\textcite{Bers} and has been proven by Koebe~\cite{Koebe}: 
\begin{theorem}[Retrosection theorem]\label{thm:retrosection}
	Let $\Sigma$ be a compact Riemann surface of genus $g$, and let $A_1,\dots,A_g$ be disjoint, homologically independent smooth simple closed curves on $\Sigma$. Then there exists a marked Schottky group $(G;\vec{\gamma})$ with a standard fundamental domain $\mathcal{F}$ bounded by $2g$ curves $C_1,C_1^\prime\dots,C_g,C_g^\prime$ such that $\gamma_i$ maps the exterior of $C_i$ onto the interior of $C_i^\prime$, and the covering map $\Omega(G)\twoheadrightarrow \Omega(G)/G\cong\Sigma$ maps $C_i$ onto $A_i$. The marked Schottky group $(G;\vec{\gamma})$ is determined by $(\Sigma; A_1,\dots,A_g)$ up to equivalence, and inversion of the generators.
\end{theorem}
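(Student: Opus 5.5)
The natural route to the retrosection theorem goes through covering-space theory combined with the uniformization of planar domains. First, I will cut $\Sigma$ along $A_1,\dots,A_g$. Since the curves are disjoint and homologically independent, $\Sigma\smallsetminus\bigcup_i A_i$ is connected, and it is a sphere with $2g$ holes whose boundary consists of two copies $A_i^\pm$ of each $A_i$. Next, I will consider the normal subgroup $N\lhd\pi_1(\Sigma)$ normally generated by the classes $a_1,\dots,a_g$, together with the regular covering $\hat\Sigma\to\Sigma$ belonging to $N$. Its deck group is $\pi_1(\Sigma)/N$. Using the standard presentation $\prod_i[a_i,b_i]=1$, this quotient is the free group $\Fg$ on the images of $b_1,\dots,b_g$. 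To see the structure of $\hat\Sigma$, I will build it by gluing copies of the cut surface indexed by $\Fg$, joining $A_i^+$ of copy $w$ to $A_i^-$ of copy $wb_i$. Every loop in $\hat\Sigma$ then lifts from a product of conjugates of the $a_i$, so every closed curve in $\hat\Sigma$ is separating. I would prove this by noting that each lifted $A_i$ disconnects the tree-like gluing pattern. Hence $\hat\Sigma$ is planar, i.e.\ of genus zero.

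The key analytic step is to invoke Koebe's general uniformization theorem for planar domains. It says that every planar Riemann surface is biholomorphic to a domain $\Omega\subset\bar{\mathbb{C}}$, and for a surface whose ends are all of this kind one may take $\Omega$ to be the complement of a closed totally disconnected set. Under this biholomorphism, the deck transformations become biholomorphic automorphisms of $\Omega$. To show they are Möbius transformations, I would argue that the complement $\Lambda=\bar{\mathbb{C}}\smallsetminus\Omega$ has zero analytic capacity, for instance by using the schlicht-type normalization of Koebe's extremal (slit or circle) domain. Bounded holomorphic functions then extend across $\Lambda$, and so each automorphism extends to a biholomorphism of $\bar{\mathbb{C}}$, i.e.\ an element of $\PSLtwoC$. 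Let $\gamma_i$ be the image of $b_i$.

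The images of $A_i^-$ and $A_i^+$ in one fixed copy of the cut surface are Jordan curves $C_i,C_i'$, and by construction $\gamma_i C_i=C_i'$. They bound the copy $\mathcal{F}$, which is a domain, and the orientation in the gluing ensures that $\gamma_i$ maps the exterior of $C_i$ onto the interior of $C_i'$. It follows that $G=\langle\gamma_1,\ldots,\gamma_g\rangle$ is a Schottky group, $\Omega=\Omega(G)$, and $\Omega/G\cong\Sigma$, with $C_i$ mapping onto $A_i$.

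Uniqueness comes from the construction itself. The covering is determined by $N$, and hence by $(\Sigma;A_1,\dots,A_g)$ alone, because $N$ depends only on the $a_i$. A planar uniformization is unique up to Möbius transformations, again by the removability argument, and this gives the equivalence $\gamma_i\mapsto\sigma\gamma_i\sigma^{-1}$. The only remaining freedom is the choice of orientation of the $b_i$, equivalently which side of $A_i$ is called $C_i$, and this is exactly inversion of the generators. I expect the main obstacle to be the analytic step: proving that the deck group acts by Möbius maps. That requires controlling the limit set, namely showing it is totally disconnected of zero capacity. My first attempt would be to use nested Jordan curves $w C_i$ whose diameters shrink geometrically, obtained from modulus estimates on annuli in the fundamental domain.
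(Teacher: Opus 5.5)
\textbf{Assessment.} The paper does not prove this theorem; it only cites Koebe and Bers, so I am comparing your plan with the standard argument. Your topological half is fine: cutting along the $A_i$, the regular cover $\hat\Sigma$ belonging to the normal closure $N$ of the $a_i$, the deck group $\pi_1(\Sigma)/N\cong F_g$, the tree-like gluing, and planarity from the fact that $H_1(\hat\Sigma)$ is generated by lifts of the $A_i$, each of which separates.

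\textbf{The gap.} The analytic step is not merely unproved; it is false in general. You want the deck maps to be Möbius because $\Lambda=\bar{\mathbb{C}}\smallsetminus\Omega$ has zero analytic capacity. However, limit sets of Schottky groups can have Hausdorff dimension strictly larger than $1$ (for instance, suitable Schottky subgroups of a non-Fuchsian quasi-Fuchsian group). Any compact set of dimension $>1$ has positive analytic capacity, because a Frostman measure with $\mu(B(x,r))\le r^s$, $s>1$, has a bounded nonconstant Cauchy transform.

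A cleverer normalization (slit or circle domain) cannot fix this. For compact $E$ with connected complement, $\gamma(E)=0$ holds iff $\bar{\mathbb{C}}\smallsetminus E$ carries no nonconstant bounded holomorphic function, and that is a conformal invariant of $\hat\Sigma$. So if $\hat\Sigma\cong\Omega(G)$ with $\gamma(\Lambda(G))>0$, every planar embedding of $\hat\Sigma$ has complement of positive analytic capacity.

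Your nested-annuli estimates ($2g-1$ subdomains per generation, each shrunk by a bounded factor) give total disconnectedness and at best zero area, not $\mathcal{H}^1$-nullity in general. The uniqueness paragraph rests on the same removability claim. Its assertion that a planar uniformization is unique up to Möbius maps is false for general planar domains; only the equivariant version is true, and it needs a different argument.

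\textbf{The standard repair.} Bers' argument, which the paper relies on, never tries to show that abstract deck transformations are Möbius.

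Existence: fix a classical Schottky group $G_0$ of genus $g$ with circles $C_i^0,C_i^{0\prime}$. By your own cutting argument, both $\Sigma$ cut along the $A_i$ and $\Omega(G_0)/G_0$ cut along the images of the $C_i^0$ are spheres with $2g$ holes. Hence there is a quasiconformal homeomorphism $f\colon\Omega(G_0)/G_0\to\Sigma$ sending the image of $C_i^0$ onto $A_i$. Lift its Beltrami coefficient to a $G_0$-invariant $\mu$ on $\Omega(G_0)$, set $\mu=0$ on $\Lambda(G_0)$, and solve the Beltrami equation on $\bar{\mathbb{C}}$ by $w$. Invariance of $\mu$ makes each $w\gamma w^{-1}$ a conformal homeomorphism of the sphere, hence Möbius. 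So $G=wG_0w^{-1}$ is Schottky with curves $w(C_i^0),w(C_i^{0\prime})$, and $f$ induces a conformal isomorphism $\Omega(G)/G\cong\Sigma$ taking $w(C_i^0)$ onto $A_i$.

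Uniqueness: a conformal equivalence of two such quotients that respects the $A_i$ carries one normal subgroup $N$ onto the other. It therefore lifts to an equivariant conformal map $h\colon\Omega(G)\to\Omega(G')$, which extends through the nested domains to a homeomorphism of $\bar{\mathbb{C}}$. One shows this extension is quasiconformal; for example, every limit point of a Schottky group is conical, so the distortion near $\Lambda(G)$ is controlled on a compact part of $\Omega(G)$. Since $\Lambda(G)$ has zero area, its Beltrami coefficient vanishes a.e., and $h$ is Möbius. This zero-area step is where your distortion estimates are the right tool. What is actually needed is therefore quasiconformal removability of a zero-area set, which is far weaker than vanishing analytic capacity.
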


Consequently, for every compact Riemann surface $\Sigma$ of genus $g$, there exists a Schottky group $G$ such that $\Sigma\cong \Omega(G)/G$. In particular, it follows that equivalent marked Schottky groups uniformize the same Riemann surface. This means that, from the $3g$ complex degrees of freedom of a marked Schottky group (one for every fixed point and multiplier), one can subtract three due to the freedom of applying a simultaneous Möbius transformation to all generators. The number of remaining degrees of freedom, $3g-3$, is precisely the dimension of the moduli space of compact Riemann surfaces of genus $g\geq 2$,~\cite{Bers,Bobenko2011}.

From now on we will denote by $P_i, P_i^\prime,\lambda_i$ the attractive and repulsive fixed points, and the multiplier of the Schottky generator $\gamma_i$ respectively.

\subsubsection{Various objects in the Schottky language}

One important reason for considering the Schottky uniformization is the existence of explicit expressions for a canonical basis of differentials, the period matrix and Abel's map. In addition, these expressions take the form of Poincaré series over a free group and hence exhibit a natural \enquote{perturbative} nature quantified by the word length of the group elements, which is a useful feature for numerical purposes.

Given a marked Schottky group $(G;\gamma_1,\dots,\gamma_g)$, let $G_i=\langle \gamma_i \rangle$ denote the subgroup generated by $\gamma_i$, which is a genus-one Schottky group (cf.~\cref{rem:15}). We introduce the \emph{right coset} $G/G_i$ as the set of reduced\footnote{A word is called reduced if it does not contain subwords equal to the identity.} words in the Schottky generators that do not have a nontrivial power of $\gamma_i$ to the very right. Similarly, the \emph{double coset} $G_j \text{\textbackslash} G / G_i$ is the set of reduced words that additionally do not have a nontrivial power of $\gamma_j$ to the very left. Both the right and double cosets are subsets, but not in general subgroups, of $G$. In the genus-one case, where $G=G_1$, the (single) right and double cosets consist of only the identity.

Moreover, let $z_1,z_2,z_3,z_4\in\bar{\mathbb{C}}$. The number
\begin{equation}\label{eq:crossratio_def}
    \left\{z_1,z_2,z_3,z_4\right\} \coloneq \frac{(z_1-z_2)(z_3-z_4)}{(z_1-z_4)(z_3-z_2)}
\end{equation}
is called the \emph{cross-ratio} of $z_1,\dots,z_4$. It is invariant under the action of the Möbius group: for all $\sigma\in\PSLtwoC$,
\begin{equation}
    \left\{\sigma z_1, \sigma z_1, \sigma z_1, \sigma z_1 \right\} = \left\{z_1,z_2,z_3,z_4 \right\}.
\end{equation}
One can use this fact to show the triply-transitiveness of the action of $\PSLtwoC$ on the Riemann sphere via the construction of the (unique) Möbius transformation that takes some given $z_i$ to $w_i$ by writing $\{w_1,w_2,w_3,w\}=\{z_1,z_2,z_3,z\}$ and solving for $w$ in terms of $z$. 
\begin{proposition}\label{thm:objects_in_schottky_language}
    Assuming their absolute convergence, the following series yield a canonical basis of differentials for the Riemann surface $\Omega(G)/G$:
    \begin{equation}
        \omega_i\!\left(z \mymiddle| \vec{\gamma} \right) \coloneq \frac{1}{2\pi\iunit} \sum_{\gamma\in G/G_i} \left(\frac{1}{z-\gamma P_i^\prime} - \frac{1}{z - \gamma P_i}\right) \der z.
    \end{equation}
    The corresponding period matrix can be computed as
    \begin{subequations}\label{eq:pm_in_schottky_language}
    \begin{align}
        \tau(\vec{\gamma})_{ii} &\coloneq \frac{1}{2\pi\iunit}\ln\lambda_i + \frac{1}{2\pi\iunit} \sum_{\substack{\gamma\in G_i \text{\textbackslash} G / G_i\\ \gamma\neq\mathrm{id}}} \ln\!\left\{ P_i^\prime, \gamma P_i^\prime, P_i, \gamma P_i \right\}, \\
        \tau(\vec{\gamma})_{ij} &\coloneq \frac{1}{2\pi\iunit} \sum_{\gamma\in G_j \text{\textbackslash} G / G_i} \ln\!\left\{ P_j^\prime, \gamma P_i^\prime, P_j, \gamma P_i \right\}, \hspace{1em}i\neq j,
    \end{align}    
    \end{subequations}
    and Abel's map as
    \begin{equation}\label{eq:abel_in_schottky_language}
        \abelcomp{i}{z}{z_0}{\vec{\gamma}} \coloneq \frac{1}{2\pi\iunit}\sum_{\gamma\in G/G_i}\ln\!\left\{ z,\gamma P_i^\prime,z_0,\gamma P_i \right\}.
    \end{equation}
\end{proposition}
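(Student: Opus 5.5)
The plan is to realise the differentials as coset sums of a single $G$-invariant building block, and then obtain the periods and Abel's map by integrating that block, in the spirit of the Poincaré-series construction of Bobenko's Schottky chapter \cite{Bobenko2011}.

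\textbf{Step 1: holomorphy and invariance.} Write
\[
\omega_i(z)=\frac{1}{2\pi\iunit}\sum_{\gamma\in G/G_i}\der\ln\frac{z-\gamma P_i^\prime}{z-\gamma P_i}.
\]
Each summand is a meromorphic differential on $\bar{\mathbb{C}}$ with simple poles only at the limit points $\gamma P_i,\gamma P_i^\prime\in\Lambda(G)$. The assumed absolute convergence, uniform on compacta of $\Omega(G)$, then makes $\omega_i$ holomorphic on $\Omega(G)$.

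For invariance, note first that the single-generator factor $\der\ln\frac{z-P_i^\prime}{z-P_i}$ is $G_i$-invariant, because $\gamma_i$ multiplies the ratio $\frac{z-P_i^\prime}{z-P_i}$ by the constant $1/\lambda_i$. Next, for any $h\in\PSLtwoC$ the quantity $\der\ln\frac{z-hP^\prime}{z-hP}$ equals the pullback by $h^{-1}$ of $\der\ln\frac{z-P^\prime}{z-P}$. This holds because the cross-ratio $\{z,hP^\prime,\infty,hP\}$ differs from $\{h^{-1}z,P^\prime,h^{-1}\infty,P\}$ only by a factor independent of $z$ up to the cross-ratio identity.

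Together these show that $\omega_i$ is the pullback sum of the $G_i$-invariant form over $G/G_i$. Hence $\delta^*\omega_i=\omega_i$ for every $\delta\in G$, since left multiplication by $\delta$ permutes the cosets. The differential therefore descends to a holomorphic differential on $\Sigma=\Omega(G)/G$.

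\textbf{Step 2: A-normalisation.} The cycle $\mathfrak{A}_j$ is the image of $C_j$, oriented as in the figure. We integrate $\omega_i$ over $C_j$ by the residue theorem. The curve $C_j$ encloses exactly the limit points $\gamma P_i,\gamma P_i^\prime$ whose reduced words start with $\gamma_j^{-1}$, together with $P_j^\prime$.

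For $\gamma\neq\mathrm{id}$ in $G/G_i$, both $\gamma P_i$ and $\gamma P_i^\prime$ lie inside the same Jordan curve, namely the one determined by the leftmost letter of $\gamma$. So each such term contributes residues $+1-1=0$. Only the identity coset term for $j=i$ survives, giving residue $1$ from $P_i^\prime$ inside $C_i$, and hence $\int_{\mathfrak{A}_j}\omega_i=\delta_{ij}$.

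Since holomorphic differentials normalised on a canonical basis are unique, the $\omega_i$ form the canonical basis. The $g$ forms are linearly independent by this normalisation, and $\dim H^1(\Sigma,\mathbb{C})=g$.

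\textbf{Step 3: periods and Abel's map.} A lift of $\mathfrak{B}_j$ runs from a point $z_0\in C_j$ to $\gamma_j z_0\in C_j^\prime$. Integrating term by term gives
\[
\tau_{ji}=\frac{1}{2\pi\iunit}\sum_{\gamma\in G/G_i}\ln\{\gamma_j z_0,\gamma P_i^\prime,z_0,\gamma P_i\}.
\]
Abel's map, formula \eqref{eq:abel_in_schottky_language}, follows from the same integration with general endpoints $z,z_0$; branches are fixed along the path.

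To reach \eqref{eq:pm_in_schottky_language}, group the cosets $\gamma\in G/G_i$ into orbits under left multiplication by $\gamma_j^n$, $n\in\mathbb{Z}$. Writing $\gamma=\gamma_j^n\eta$ with $\eta\in G_j\backslash G/G_i$ and applying Möbius invariance of the cross-ratio, the sum over $n$ telescopes. Its partial products converge, because $\gamma_j^{\pm n}z_0\to P_j,P_j^\prime$, and the limit is $\ln\{P_j^\prime,\eta P_i^\prime,P_j,\eta P_i\}$.

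When $i=j$, the identity double coset is special: $\eta=\mathrm{id}$ gives the cross-ratio $\{\gamma_i z_0,P_i^\prime,z_0,P_i\}=\lambda_i^{-1}$ or $\lambda_i$ by the defining relation of $\gamma_i$, according to orientation. This produces the term $\frac{1}{2\pi\iunit}\ln\lambda_i$.

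\textbf{Main obstacle.} I expect the hardest part to be this last rearrangement and telescoping. It requires justifying the exchange of the order of summation, which I would take from absolute convergence. It also requires tracking branches of $\ln$ and orientation signs consistently, so that the constant is exactly $\ln\lambda_i$ rather than $-\ln\lambda_i$ or differing by $2\pi\iunit\mathbb{Z}$. I would first check the whole computation at genus one, where only the $\ln\lambda$ term appears, to fix these conventions before treating the general case.
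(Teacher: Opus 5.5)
\textbf{Comparison with the paper.} The paper gives no proof of this proposition. It quotes it as a known result from the Schottky literature (\cite{Bobenko2011,SchottkyKroneckerForms}). Your outline is the standard derivation found there: $G$-invariance of the coset sum; the residue count on $C_j$, using that $\gamma P_i,\gamma P_i^\prime$ lie inside the curve labelled by the leftmost letter of $\gamma\neq\mathrm{id}$; then term-by-term integration along a lift of $\mathfrak{B}_j$ and telescoping over $G_j$-orbits. So the route is fine.

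\textbf{Sign slip in Step 3.} The error sits exactly at the point you flagged. With the lift running from $z_0$ to $\gamma_j z_0$, each $G_j$-orbit telescopes to $F_\eta(P_j)/F_\eta(P_j^\prime)$, where $F_\eta(w)=(w-\eta P_i^\prime)/(w-\eta P_i)$. This equals $\{P_j^\prime,\eta P_i^\prime,P_j,\eta P_i\}^{-1}$, not the cross-ratio itself. Likewise the identity term for $i=j$ is $\lambda_i^{-1}$. So your stated off-diagonal limit contradicts your own orientation, and a consistent computation with that orientation gives $-\tau$.

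\textbf{Correct orientation.} The lift must run from $\gamma_j z_0\in C_j^\prime$ to $z_0\in C_j$. With $C_j$ oriented positively around $P_j^\prime$ (which is what makes $\int_{\mathfrak{A}_j}\omega_i=\delta_{ij}$), this is the orientation with $\mathfrak{A}_j\inters\mathfrak{B}_j=+1$. It yields $\ln\lambda_i$ and $\ln\{P_j^\prime,\eta P_i^\prime,P_j,\eta P_i\}$ simultaneously. The $\lambda_i$-term then contributes $-\frac{1}{2\pi}\ln|\lambda_i|>0$ to $\Im\tau_{ii}$, as it must. A genus-one check confirms this: take $P=0$, $P^\prime=\infty$, so $\omega=-\der z/(2\pi\iunit z)$; $C$ is traversed clockwise, and $\int_{\lambda z_0}^{z_0}\omega=\frac{1}{2\pi\iunit}\ln\lambda$.

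\textbf{Smaller points.}
\begin{itemize}
\item State explicitly that $G_j$ acts freely on the non-trivial left cosets. The stabiliser $G_j\cap\eta G_i\eta^{-1}$ is trivial unless $i=j$ and $\eta\in G_i$, because $\gamma_i$ is primitive in the free group. This is what licenses summing over all $n\in\mathbb{Z}$, and it isolates the single $\ln\lambda_i$ term.
\item You do not need to resolve the $2\pi\iunit\mathbb{Z}$ branch ambiguity. It is precisely the freedom $\mathfrak{B}\mapsto\mathfrak{B}+B\mathfrak{A}$ in choosing the B-lifts. The paper handles it by working in $\mathcal{H}_g/\mathcal{T}$ and fixing the principal branch by convention.
\item Your computation gives $\int_{\mathfrak{B}_j}\omega_i=\tau_{ji}$. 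This agrees with the paper's expression for $\tau_{ij}$ because $\tau$ is symmetric.
\end{itemize}
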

Using the above equations, one can derive the behavior of the period matrix, Abel's map and holomorphic differentials under Möbius transformations~\cite{SchottkyKroneckerForms}:
\begin{subequations}\label{eq:moebiusinv_pm_abel_holdiffs}
\begin{align}
    \tau(\sigma\vec{\gamma}\sigma^{-1}) &= \tau(\vec{\gamma})\label{eq:moebiusinv_pm}, \\
    \abelcomp{i}{\sigma z}{\sigma z_0}{\sigma\vec{\gamma}\sigma^{-1}} &= \abelcomp{i}{z}{z_0}{\vec{\gamma}}, \\
    \omega_i\!\left(\sigma z \mymiddle| \sigma \vec{\gamma} \sigma^{-1} \right) &= \omega_i\!\left(z \mymiddle| \vec{\gamma} \right).
\end{align}
\end{subequations}

\subsubsection{Changes of marking and Nielsen transformations}\label{sec:ChangeOfMarkingNielsen}
\begin{definition}[Change of marking]\label{def:ChangeOfMarking}
    Let $(G;\gamma_1,\dots,\gamma_g)$ be a marked Schottky group and $\Fg$ the free group of rank $g$. Since G is free, there exists an isomorphism $\phi\colon \Fg \xrightarrow{\sim} G$ and thus every $\alpha\in\mathrm{Aut}(\Fg)$ yields an automorphism of $G$ via the identification of Schottky generators with a free basis of $\Fg$. We shall denote this automorphism of $G$ by the same symbol $\alpha$ i.e.\ $\gamma \mapsto \alpha(\gamma) \coloneq \phi\circ\alpha\circ\phi^{-1}(\gamma)$. Furthermore, we write
    \begin{equation}
        \alpha(\vec{\gamma}) \coloneq (\alpha(\gamma_1),\dots,\alpha(\gamma_g)),
    \end{equation}
    and call $(G;\vec{\gamma})\xrightarrow{\alpha}(G;\alpha(\vec{\gamma}))$ a \emph{change of marking} of $G$.
\end{definition}

\begin{remark}
This descends to an action of $\Aut(\Fg)$ on Schottky space~\cite{Herrlich}, and indeed makes up all biholomorphic automorphism of Schottky space~\cite{EarleBiholomorphic}.
\end{remark}

Since $\Aut(G)\cong\Aut(\Fg)$, changes of marking allow to obtain every possible choice of free generators of the Schottky group $G$. 

In order to investigate how the period matrix from \cref{thm:objects_in_schottky_language} behaves under a change of marking, we begin by noting that the region of discontinuity $\Omega(G)$ as given in \cref{def:region_of_discontinuity} is invariant under changes of marking. This follows, for example, from the fact that it is the complement of the limit set $\Lambda(G)$ and hence depends on the fixed points of \emph{all} elements of $G$ and does not single out a specific set of generators. As a consequence, also the Riemann surface $\Omega(G)/G$ is invariant under changes of marking. It then follows from \cref{thm:Torelli}, that \emph{the period matrix $\tau(\vec{\gamma})$ must transform symplectically under a change of marking}. That is, for all $\alpha\in\Aut(\Fg)$, there is a symplectic matrix $M^{\vec{\gamma}}_\alpha\in\SpTwogZ$ such that
\begin{equation}\label{eq:change_of_marking_basic}
    \tau(\alpha(\vec{\gamma})) = M^{\vec{\gamma}}_\alpha\cdot\tau(\vec{\gamma}).
\end{equation}

To proceed, we state some facts concerning the automorphism group of the free group.

\begin{definition}\label{def:nielsen_transformation}
    Let $(x_i,\dots,x_g)$ be a free basis for the free group $\Fg$. An \emph{elementary Nielsen transformation} is one of the following operations:
    \begin{myenumerate}
    \item \emph{Permutation.} For some $\pi\in S_g$, where $S_g$ denotes the symmetric group of degree $g$, permute all generators by $\pi$:
        \begin{equation}
           \Nper{\pi}\colon x_i\mapsto x_{\pi(i)}.
        \end{equation}
    \item \emph{Inversion.} For some $k\in\left\{ 1,\dots,g\right\}$, invert the $k$-th generator:
        \begin{equation}
            \Ninv{k}\colon x_i\mapsto\left\{
            \begin{array}{lr}
                 x_i,&i\neq k,  \\
                 x_k^{-1},&i=k. 
            \end{array} \right.
        \end{equation}
    \item \emph{Replacement by product.} For some distinct $k,\ell\in\left\{ 1,\dots,g\right\},k\neq\ell$, replace the $k$-th generator by $x_kx_\ell$:
        \begin{equation}
            \Nrbp{k}{\ell}\colon x_i\mapsto\left\{
            \begin{array}{lr}
                 x_i,&i\neq k,  \\
                 x_k x_\ell,&i=k.
            \end{array} \right.
        \end{equation}
    \end{myenumerate}
\end{definition}
These were first introduced by Nielsen in~\cite{Nielsen_AutFg}, where also the following statement was proven.
\begin{theorem}[Nielsen]\label{thm:Nielsen}
    Every elementary Nielsen transformation induces an automorphism of $\Fg$ and the set of elementary Nielsen transformations generates the full automorphism group $\mathrm{Aut}(\Fg)$. In fact, already the four elements, $\Nper{(1~2)}, \Nper{(1\cdots g)}, \Ninv{1}$ and $\Nrbp{1}{2}$, are enough to generate $\Aut(F_g)$. This is because the transposition $(1~2)$ and the $g$-cycle $(1 \cdots g)$ generate the whole symmetric group $S_g$, and only one $N^{\mathrm{inv}}$ and $N^{\mathrm{rbp}}$ is then needed to generate all Nielsen transformations.
\end{theorem}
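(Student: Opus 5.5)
Nielsen's theorem has three parts. First, each elementary Nielsen transformation is an automorphism of $\Fg$. Second, the elementary transformations generate $\Aut(\Fg)$. Third, the four listed elements already suffice. I will treat them in that order; essentially all of the difficulty lies in the second part.

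\textbf{Automorphisms.} By the universal property of the free group, each elementary transformation is a well-defined endomorphism of $\Fg$, since it is determined by the images of the free basis. For bijectivity I exhibit explicit inverses. $\Nper{\pi}$ is inverted by $\Nper{\pi^{-1}}$, and $\Ninv{k}$ is an involution. $\Nrbp{k}{\ell}$ is inverted by the endomorphism $x_k\mapsto x_k x_\ell^{-1}$ (all other $x_i$ fixed). This inverse equals $\Ninv{\ell}\circ\Nrbp{k}{\ell}\circ\Ninv{\ell}$, so it is itself a composite of elementary transformations. Checking that the two composites fix every basis element is routine.

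\textbf{Generation of $\Aut(\Fg)$.} This is the main obstacle, and I would follow Nielsen's reduction argument. Let $\alpha\in\Aut(\Fg)$, so $(y_1,\dots,y_g)=(\alpha(x_1),\dots,\alpha(x_g))$ is a free basis. I define the total length $L=\sum_i|y_i|$ with respect to reduced words in the $x_j$. The key lemma states: if the tuple is not \emph{Nielsen reduced}, some elementary transformation strictly decreases $L$. Here Nielsen reduced means three conditions: no $y_i$ is trivial; for $u\neq v^{-1}$ among the $y_i^{\pm1}$, $|uv|\geq\max(|u|,|v|)$; and for triples with $u\neq w^{-1}$, $v\neq w^{-1}$, $|uvw|>|u|-|v|+|w|$.

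Length reduction takes place within the Nielsen moves, which compose products $y_iy_j^{\pm1}$ and $y_j^{\pm1}y_i$. These moves are generated by the elementary ones, using permutations and inversions to move any pair into position $(k,\ell)$ and to choose the side. I repeat until $L$ cannot decrease, which must happen because $L$ is a non-negative integer.

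For a Nielsen reduced tuple, the standard cancellation lemma shows that each $y_i^{\pm1}$ has a \enquote{middle} letter that survives in any reduced product. It follows that the length of a reduced product of the $y_i$ is at least the number of factors. Since the tuple generates $\Fg$, each $x_j$ is such a product and has length $1$, so it must be a single $y_i^{\pm1}$. Hence the reduced tuple is a signed permutation of the $x_i$, which a permutation and inversions undo. Reversing the chain writes $\alpha$ as a product of elementary transformations. The delicate step is verifying the cancellation lemma from the three reducedness inequalities; I would rely on Nielsen's original argument~\cite{Nielsen_AutFg} for this.

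\textbf{Four generators.} The transposition $(1~2)$ and the cycle $(1\cdots g)$ generate $S_g$, so every $\Nper{\pi}$ is obtained, using $\Nper{\pi}\Nper{\sigma}=\Nper{\pi\sigma}$ up to the order convention. Conjugation gives the rest. $\Ninv{k}$ equals $\Nper{\pi}\Ninv{1}\Nper{\pi}^{-1}$ for any $\pi$ with $\pi(1)=k$. $\Nrbp{k}{\ell}$ equals $\Nper{\pi}\Nrbp{1}{2}\Nper{\pi}^{-1}$ for $\pi$ sending $1\mapsto k$ and $2\mapsto\ell$, which requires only a direct check on basis elements. Combined with the second part, the four listed elements generate $\Aut(\Fg)$.
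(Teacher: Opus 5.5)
Your first part (explicit inverses, with $(\Nrbp{k}{\ell})^{-1}=\Ninv{\ell}\Nrbp{k}{\ell}\Ninv{\ell}$) and your last part (conjugating $\Ninv{1}$ and $\Nrbp{1}{2}$ by permutations) are correct, and they match the justification the paper gives. For the generation of $\Aut(\Fg)$ the paper simply cites Nielsen. Your own proof of that step has a genuine gap, and it is not the one you flag: the cancellation lemma is fine once all three reducedness conditions hold.

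What fails is your key lemma, that a tuple which is not Nielsen reduced always admits an elementary move strictly lowering $L$. A violation of the pair condition does give such a move. A violation of the triple condition with the pair condition intact does not. In that case necessarily $v_2=ab$ with $|a|=|b|$, $v_1=ca^{-1}$ and $v_3=b^{-1}d$, and the relevant moves $v_1\mapsto v_1v_2=cb$ and $v_3\mapsto v_2v_3=ad$ preserve length.

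Here is a concrete counterexample in $F_3$. Take $(y_1,y_2,y_3)=(x_3x_1^{-1},\,x_1x_2,\,x_2^{-1}x_3x_2)$.
\begin{itemize}
\item It is a basis: $y_3(y_1y_2)^{-1}=x_2^{-1}$, from which $x_1$ and $x_3$ are recovered, and $F_3$ is Hopfian.
\item One checks that $|y_iy_j^{\pm1}|\geq|y_i|$ and $|y_j^{\pm1}y_i|\geq|y_i|$ for all $i\neq j$. So your descent halts at $L=7$.
\item Yet the tuple is not a signed permutation. The triple condition fails for $(y_1,y_2,y_3)$, since $|y_1y_2y_3|=|x_3^2x_2|=3=|y_1|-|y_2|+|y_3|$.
\item Correspondingly, $y_3y_2^{-1}y_1^{-1}=x_2^{-1}$ is a reduced product of three factors of length one, so the cancellation lemma cannot be invoked.
\end{itemize}

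To repair this, you must also allow the length-preserving moves above and prove termination separately, for example as in Lyndon--Schupp (Prop.~I.2.2). There, $L$ is refined by a well-order that, among words of equal length, compares the initial halves of $w$ and of $w^{-1}$ lexicographically. Since $a\neq b^{-1}$ (because $v_2=ab$ is reduced), one of the two moves strictly decreases this order. The process therefore ends at a genuinely Nielsen reduced tuple, and your final argument then applies. In the example, $y_1\mapsto y_1y_2=x_3x_2$ followed by $y_3\mapsto y_3y_1^{-1}=x_2^{-1}$ restarts the descent. Alternatively, cite Nielsen for this step, as the paper does.

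Finally, your statement of the third condition has a slip. Its hypotheses should be $u\neq v^{-1}$ and $v\neq w^{-1}$. As written, with $u\neq w^{-1}$ instead, taking $u=v^{-1}$ turns the inequality into $|w|>|w|$, so no tuple with $g\geq 2$ would be Nielsen reduced.
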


For our purposes, this means that we should find out how an elementary Nielsen transformation acting on $(G; \gamma_1,\dots \gamma_g)$ changes the period matrix. To this end, the following map will be crucial.
\begin{definition}\label{def:psi_g}
    Let $\psi_g\colon\Aut(\Fg)\rightarrow\GLgZ$ denote the group homomorphism induced by the abelianization of the free group; see \appref{appx:construction_of_psi_g} for details. This map is known to be surjective~\cite{autosurvey}. Explicitly, the images of elementary Nielsen transformations under $\psi_g$ are
    \begin{subequations}\label{eq:def_psi_g}
    \begin{align}
        \psi_g(\Nper{\pi}) &= \iota(\pi^{-1}), \\
        \psi_g(\Ninv{i}) &= M_k, \\
        \psi_g(\Nrbp{k}{\ell}) &= I_g + E_{k\ell},
    \end{align}    
    \end{subequations}
    with the natural representation of the symmetric group
    \begin{equation}
        \iota\colon S_g \hookrightarrow \mathrm{GL}(g,\mathbb{Z}), \hspace{1em} \pi\mapsto (\delta_{i,\pi(j)})_{ij},
    \end{equation}
    the diagonal matrix $M_k$ whose diagonal elements are all 1, except for the $k$-th one, which is -1,
    \begin{equation}
        (M_k)_{ij}=\delta_{ij}(1-2\delta_{ik}),
    \end{equation}
    and $E_{k\ell}$ denotes the matrix with a 1 at the $(k,\ell)$ position and 0 everywhere else, i.e.\
    \begin{equation}
        (E_{k\ell})_{ij} = \delta_{ki}\delta_{\ell j}.
    \end{equation}
    Additionally, we define
    \begin{equation}\label{eq:def_capital_Psi_g}
        \Psi_g\coloneq D\circ\psi_g\colon \Aut(\Fg)\rightarrow\SpTwogZ,\hspace{1em} \alpha\mapsto D_{\psi_g(\alpha)} = \smalltwobytwo{\psi_g(\alpha)}{0}{0}{\psi_g(\alpha)^{-\Transpose}},
    \end{equation}
    using the embedding $D\colon \GLgZ\hookrightarrow\SpTwogZ$ from \cref{eq:D_homomorphism}. Since $D$ is injective, the kernels of $\psi_g$ and $\Psi_g$ are the same and called the \emph{IA automorphism group}, which stands for \enquote{\underline{I}dentity on the \underline{A}belianization}~\cite{mcgandout19}:
    \begin{equation}
        \IAg \coloneq \ker(\psi_g) = \ker(\Psi_g).
    \end{equation}
    Furthermore, for later use we define the extension
    \begin{equation}
        \IAg^\pm \coloneq \IAg\cup\psi_g^{-1}(-I_g)\,,
    \end{equation}
    which is again a subgroup of $\Aut(F_g)$.
\end{definition}
A finite set of generators for $\IAg$ was first given in~\cite{Magnus}. For an extensive survey of the theory of $\Aut(F_g)$ with many further references, see~\cite{autosurvey}. Additional information about the $\mathrm{IA}$ automorphism group in the context of the Schottky language is also collected in~\cite{Berger:2025}.


\section{Relating symmetries between Jacobi and Schottky language}\label{sec:translation}
The relation between Jacobi and Schottky language will be investigated and compared using \cref{eq:pm_in_schottky_language}, which allows to formally express the Jacobi period matrix $\tau$ in terms of a marked Schottky group:
\begin{equation}
	\tau = \tau(\vec{\gamma}).
\end{equation}

The period matrix $\tau$ on the left-hand side is the central object capturing the Riemann surfaces' geometry in the Jacobi formulation: as reviewed in \cref{sec:review}, the geometry remains inert when applying a symplectic transformation to $\tau$. The right-hand side consists of a sum over elements of the Schottky group, for the evaluation of which one has to employ a choice for the respective Schottky generators. Involving logarithms, the right-hand side is also subject to a particular choice of branch for the logarithm, which will be taken into account as \emph{defect} below.  

Applying a symmetry transformation or reformulation on either side will imply a corresponding transformation---an \emph{echo}---on the other side. Making different symmetries manifest, the translation of the symmetry actions between Jacobi and Schottky language are unfortunately all but obvious.  

A particular distinctive role in the translation is taken by the fundamental domain in either language: while it is very difficult to properly parametrize the fundamental domain for period matrices $\tau$ beyond genus one, the Schottky language implements a canonical (but not unique!) choice of fundamental domain. Accordingly, most of the objects formulated in the Schottky approach at least implicitly make reference to this choice. Thus, when comparing the two languages, it will be necessary to implement an object keeping track of the relative choice of fundamental domain: this will be the \emph{decoration} possibly accompanied by a \emph{defect}.   

It will pay off to organize our investigation by the standard generator basis of symplectic transformations in \cref{prop:sp2gZ_generators} on the Jacobi side. In order to mathematically frame their respective echos in the Schottky language, we will have to augment the Schottky language with three concepts, each required by a particular type of symplectic transformation. The outline of the current section is collected in \cref{tab:identities}.

{
\def\arraystretch{1.0}
\begin{table}[h]
\begin{center}
\begin{tabular}{m{0.6cm}@{\hspace{1pt}}|m{4.4cm}|m{8.6cm}  }
\phantom{.} & \textbf{Jacobi language} & \textbf{Schottky language}\\
\hline
{\rotatebox{90}{\cref{sec:ChangeOfMarking}\phantom{.}}
} & {
\setlength{\abovedisplayskip}{1pt}\setlength{\belowdisplayskip}{1pt}
\begin{equation*}
	\tau\mapsto D_A\cdot \tau = A\tau A^\Transpose
\end{equation*}
} & {
\setlength{\abovedisplayskip}{5pt}\setlength{\belowdisplayskip}{5pt}
Change of marking:
\begin{equation*}
        \vec{\gamma}\mapsto\alpha(\vec{\gamma})
\end{equation*}
with any $\alpha\in\psi_g^{-1}(A)$
}\\\hline
{\rotatebox{90}{\cref{sec:decorated_schottky_and_Rg}\phantom{.}}
} & {
\setlength{\abovedisplayskip}{1pt}\setlength{\belowdisplayskip}{1pt}
\begin{equation*}
    \tau\mapsto T_B D_A\cdot \tau=A\tau A^\Transpose+B
\end{equation*}
} & {
\setlength{\abovedisplayskip}{5pt}\setlength{\belowdisplayskip}{5pt}
Change of decoration and marking (up to defect):
\begin{equation*}
    (S,\vec{\gamma})\mapsto [B,\alpha]\cdot(S,\vec{\gamma})
\end{equation*}
with any $[B,\alpha]\in\Phi_g^{-1}(T_B D_A)$ 
} \\\hline
{\rotatebox{90}{\cref{sec:schottky_dualization}\phantom{.}}
} & {
\setlength{\abovedisplayskip}{1pt}\setlength{\belowdisplayskip}{1pt}
\begin{equation*}
    \tau\mapsto J_{2g}\cdot\tau=-\tau^{-1}
\end{equation*}
} & {
\setlength{\abovedisplayskip}{5pt}\setlength{\belowdisplayskip}{5pt}
Schottky dualization:
\begin{equation*}
    (S,\vec{\gamma})\mapsto(S^\prime,\vec{\delta})
\end{equation*}
with any $(S^\prime,\vec{\delta})\in\mathfrak{j}(S,\vec{\gamma})$
} \\\hline
\end{tabular}
\end{center}
\caption{Translation between symplectic generators and associated action in the Schottky language.}
\label{tab:identities}
\end{table}
}

\paragraph{Setup.}\label{sec:setup}
Let $\Sigma$ be a fixed compact Riemann surface of genus $g$, and let $\MSG(\Sigma)$ denote the set of all marked Schottky groups that uniformize $\Sigma$ (cf.~\cref{def:MarkedSchottkyGroup}). That is, for all $\vec{\gamma}=(\gamma_1,\dots,\gamma_g)\in \MSG(\Sigma)$ with $G=\langle\gamma_1,\dots,\gamma_g\rangle$, we have $\Sigma\cong \Omega(G)/G$ as in \cref{thm:retrosection}.

A choice of $\vec{\gamma}\in \MSG(\Sigma)$ corresponds to a choice of A-loops on $\Sigma$ and the covering $\Omega(G)/G\twoheadrightarrow\Sigma$ sends the $i$-th pair of Jordan curves in $\Bar{\mathbb{C}}$ to the $i$-th A-loop on $\Sigma$ (cf.~\cref{sec:FundamentalGroupHomologyGroup}). The homology classes $\vec{\mathfrak{A}}$ of the A-loops yield A-cycles, which are taken as one half of a canonical homology basis. 
The choice of marking $\vec{\gamma}$, however, does not fix the B-cycles, i.e.\ the other half of the homology basis. If $\vec{\mathfrak{B}}$ is an admissible set of B-cycles satisfying the intersection relations \eqref{eq:can_hom_basis_intersection_rels} with the given $\vec{\mathfrak{A}}$, then for any $B\in\Zgxgsym$, the new $\vec{\mathfrak{B}}^\prime = \vec{\mathfrak{B}}+B\vec{\mathfrak{A}}$ yields an admissible set of B-cycles as well. This ambiguity in the choice of B-cycles can be attributed to the branch choices for the logarithms appearing in the Schottky period matrix formula \cref{eq:pm_in_schottky_language}.

As a consequence, the Schottky language defines period matrices only up to an action of the subgroup $\mathcal{T}<\SpTwogZ$ spanned by all $T_B$ (cf.~\cref{eqn:Sp2generators}). Accordingly, we are going to consider the quotient $\mathcal{H}_g/\mathcal{T}$ of the Siegel upper half space and $\mathcal{T}$, and we denote its elements as $[\tau]\coloneq \mathcal{T}\cdot\tau=\setbuilder{\tau+B}{B\in\Zgxgsym}$ for some $\tau\in\mathcal{H}_g$.

\paragraph{Defects.}\label{sec:defects}
When considering the echo of transformations described above, we will need a couple of notions to be defined now. For all practical purposes within this article, we keep the log-branch choice in the Schottky period matrix fixed to the standard branch satisfying $-\pi<\Im\log(z)\leq\pi$ for all $z\in\mathbb{C}$. That is, we set
\begin{equation}\label{eq:standard_logbranch_pm_entries}
    -\frac{1}{2}<\Re\big(\tau(\vec{\gamma})_{ij}\big)\leq\frac{1}{2}.
\end{equation}
Consider now the \emph{defect function} $B_\mathrm{def}$ mapping a pair $(\alpha,\vec{\gamma})\in\Aut(\Fg)\times\MSG(\Sigma)$ to
\begin{equation}\label{eq:def_defect}
	B_\mathrm{def}(\alpha,\vec{\gamma})\coloneq \tau(\alpha(\vec{\gamma}))-\Psi_g(\alpha)\cdot\tau(\vec{\gamma}).
\end{equation}
The defect function $B_\mathrm{def}$ describes the failure of the Schottky period matrix formula \cref{eq:pm_in_schottky_language} with fixed log-branch choice to obey a transformation induced by $\alpha\in\Aut(\Fg)$. 

Also, for later use, define
\begin{equation}\label{eq:def_defect_tilde}
    \Tilde{B}_\mathrm{def}(\alpha,\vec{\gamma}) \coloneq \psi_g(\alpha)^{-1} B_\mathrm{def}(\alpha,\vec{\gamma})\psi_g(\alpha).
\end{equation}

\subsection{Changes of Schottky markings, the symplectic transformation \texorpdfstring{$D_A$}{DA} and IA automorphisms}\label{sec:ChangeOfMarking}

The action of a symplectic matrix $D_A$ on the Siegel upper half-space descends to an action on $\mathcal{H}_g/\mathcal{T}$ via
\begin{equation}\label{eq:DA_action_on_pm_class}
    D_A\cdot[\tau] \coloneq [D_A\cdot\tau] = \left[A\tau A^\Transpose\right].
\end{equation}
In this subsection, we will argue for validity of the following statement.
\begin{propositionnumerical}\label{thm:com}
    For any $\vec{\gamma}\in\MSG(\Sigma)$ and $\alpha\in\Aut(\Fg)$, the change of marking $\vec{\gamma}\mapsto\alpha(\vec{\gamma})$ (cf.~\cref{def:ChangeOfMarking}) affects the period matrix as
    \begin{equation}\label{eq:com}
        \left[ \tau(\alpha(\vec{\gamma})) \right] = \Psi_g(\alpha)\cdot[\tau(\vec{\gamma})],
    \end{equation}
    where the group homomorphism $\Psi_g$ was subject of \cref{def:psi_g} and the symplectic matrix $\Psi_g(\alpha)$ acts as in \cref{eq:DA_action_on_pm_class}.   
\end{propositionnumerical}
\cref{eq:com} describes how to implement symplectic transformations of the form $D_A$ in the Schottky language, namely as a preimage $\alpha\in\psi_g^{-1}(A)$ acting via change of marking on the marked Schottky group.

While the above \cref{thm:com} has been confirmed numerically for generic classical Schottky covers, a complete rigorous proof remains elusive. Accordingly, we are referring to the following proof-like derivation as \emph{Evidence}, as it still relies on a single numerical check for each genus. 

For all arguments and derivations in this article following the current Proposition, we will assume its validity. 

\begin{evidence}\label{sec:evidence}

As discussed in \cref{sec:ChangeOfMarkingNielsen}, any change of marking is expressible in terms of Nielsen transformations.\\
\textbf{Changes of marking of per- and inv-type.}
The behavior of the Schottky period matrix formula \cref{eq:pm_in_schottky_language} under Nielsen transformations $\Nper{}$ and $\Ninv{}$ (cf.~\cref{def:nielsen_transformation}), has been investigated in~\cite[Section 5.2.1]{Berger:2025}:
\begin{subequations}\label{eq:motivation_psi_g}
\begin{align}
    \tau\!\left(\Nper{\pi}(\vec{\gamma})\right)_{ij} &= \tau(\vec{\gamma})_{\pi(i),\pi(j)}\\
    \tau\!\left(\Ninv{k}(\vec{\gamma})\right)_{ij} &= \left\{
        \begin{array}{lr}
             \tau(\vec{\gamma})_{ij},&i,j\neq k \text{ or } i=j=k,   \\
             -\tau(\vec{\gamma})_{ij},& i\neq j \text{ and } k\in\left\{i,j\right\}.
        \end{array}
    \right.
\end{align}    
\end{subequations}
Note that this can be written compactly using the homomorphisms from \cref{def:psi_g} as
\begin{align}
    \tau(\alpha(\vec{\gamma})) =\psi_g(\alpha)\tau(\vec{\gamma})\psi_g(\alpha)^\Transpose=\Psi_g(\alpha)\cdot\tau(\vec{\gamma}),\hspace{2em}\alpha\in\left<\Nper{},\Ninv{}\right>.
\end{align}    
Hence, for changes of marking only consisting of $\Nper{}$ and $\Ninv{}$, \cref{eq:com} holds even without having to descend to $\mathcal{H}_g/\mathcal{T}$, i.e.\ without the square brackets.
\medskip

\textbf{Changes of marking of rbp-type.}
The third Nielsen transformation from \cref{def:nielsen_transformation}, $\Nrbp{}{}$, is more difficult to treat, because the fixed points and multipliers of a product $\gamma_i\gamma_j$ of two Schottky generators cannot in general be conveniently expressed in terms of those of $\gamma_i$ and $\gamma_j$, and hence working with the Schottky period matrix formula explicitly becomes difficult. Therefore, we take another approach, starting from \cref{eq:change_of_marking_basic} and proceeding in two steps.
\begin{enumerate}[leftmargin=1.5cm]
	\item[\textbf{Step 1:}] Show that, when descending to $\cH_g/\cT$, the dependence of the symplectic matrix on $\vec{\gamma}$ disappears. That is, show that for all $\alpha\in\Aut(\Fg)$, there exists a symplectic matrix $M_\alpha$ such that \emph{for all} $\vec{\gamma}$:
    \begin{equation}\label{eq:proof_indep_of_gamma}
        \left[\tau(\alpha(\vec{\gamma}))\right]=\left[M_\alpha\cdot\tau(\vec{\gamma})\right].
    \end{equation}
\item[\textbf{Step 2:}] Show that $M_\alpha=\Psi_g(\alpha)$ is a valid choice.
\end{enumerate}
Then, combining step 1 and step 2, one will have $[\tau(\alpha(\vec{\gamma}))]=[\Psi_g(\alpha)\cdot\tau(\vec{\gamma})]$ which, after using \cref{eq:DA_action_on_pm_class}, yields \cref{eq:com}.
\medskip

\textbf{\textit{Sketch of step 1.}}
This follows from the following chain of arguments. Let $\cS_g$ denote the Schottky space of genus $g$ defined in \cref{sec:review_schottky_groups} and fix an $\alpha\in\Aut(\Fg)$.
\begin{myenumerate}
\item Show that the map\footnotemark
        \begin{equation}
            \tau\colon \cS_g\rightarrow\cH_g/\cT,\hspace{1em} \vec{\gamma}\mapsto[\tau(\vec{\gamma})]
        \end{equation}
	is continuous. This should be possible to show by considering a small modification of the marked Schottky group $\vec{\gamma}$ and noticing the smoothness of all maps in the period matrix formula \cref{eq:pm_in_schottky_language} except for the logarithm. This, however, is taken care of by quotienting out $\cT$.
	The only obstacle in writing a proof is to show that smoothness is not destroyed by taking the infinite sum over cosets of the Schottky group.
    \item Accordingly, the value of $\left[\tau(\alpha(\vec{\gamma}))\right]$ changes continuously with $\vec{\gamma}$.
    \item Then, using \cref{eq:change_of_marking_basic}, the value of $\left[M_\alpha^{\vec{\gamma}}\cdot\tau(\vec{\gamma})\right]$ changes continuously with $\vec{\gamma}$.
    \item Since $\vec{\gamma}\mapsto M_\alpha^{\vec{\gamma}}$ maps into $\SpTwogZ$, its image is discrete and does not allow for smooth alteration. Thus, a small modification of $\vec{\gamma}$ can at most result in a sign change or factor of $T_B$ in $M_\alpha^{\vec{\gamma}}$, as those precisely leave $\left[M_\alpha^{\vec{\gamma}}\cdot\tau(\vec{\gamma})\right]$ inert. Therefore, one can take $M_\alpha^{\vec{\gamma}}$ to be independent of $\vec{\gamma}$.
\end{myenumerate}

\textbf{\textit{Step 2.}}
Let $\alpha,\beta\in\Aut(F_g)$ and note that \cref{eq:proof_indep_of_gamma} implies
\begin{equation}
    \left[M_{\alpha\beta}\cdot\tau(\vec{\gamma})\right] = \left[M_\alpha M_\beta\cdot\tau(\vec{\gamma})\right].
\end{equation}
As this holds for all $\vec{\gamma}$, we conclude that the matrices $M_{\alpha\beta}$ and $M_\alpha M_\beta$ can differ only by a sign and factors of $T_B$. We can impose the two matrices to be equal; the possible factors of $T_B$ then manifest themselves as the defects defined in \cref{eq:def_defect}. That is, we impose that the map
\begin{equation}
    \hat{\Psi}_g\colon \alpha\mapsto M_\alpha
\end{equation}
is a group homomorphism from $\Aut(\Fg)$ to $\SpTwogZ$. Now it is left to show that we can take $\hat{\Psi}_g=\Psi_g$.

Recall that from \cref{eq:motivation_psi_g} we already know that $M_\alpha=\Psi_g(\alpha)$ for $\alpha=\Nper{\pi},\Ninv{i}$. Since $\Aut(\Fg)$ is generated by $\Nper{(1~2)},\Nper{(1\cdots g)},\Ninv{1}$ and $\Nrbp{1}{2}$ (see \cref{thm:Nielsen}), we have to look for those homomorphisms which agree with $\Psi_g$ on the first three generators. The image of $\Nrbp{1}{2}$ then defines $\hat{\Psi}_g$. Probably the most straightforward way to continue is to apply $\hat{\Psi}_g$ to the relations obeyed by the generators of $\Aut(\Fg)$, turning them into equations in $\SpTwogZ$. This yields a set of coupled matrix equations in one unknown matrix, $X\coloneq\hat{\Psi}_g(\Nrbp{1}{2})$.
\medskip 

\textbf{Relations in $\Aut(\Fg)$ and solution to matrix equations in $\SpTwogZ$.}
In Ref.~\cite{Nielsen_AutFg}, Nielsen gives a complete set of relations for $\Aut(F_g)$, which is expressed in terms of the four generators from the previous paragraph. There are a number of relations between $\Nper{(1\,2)},\Nper{(1\cdots g)}$ and $\Ninv{1}$ (cf.~\cite[Eqs.\ (13), (14)]{Nielsen_AutFg}), defining what is known as the signed symmetric group. We are interested in the remaining ten relations, to be found in~\cite[Eq.\ (17)]{Nielsen_AutFg}, which combine $\Nrbp{1}{2}$ with the other three generators. For $g=2,3$, not all of these ten relations hold, so we treat these two cases separately at the end. For $g>3$, the first three relations imply that
\begin{myenumerate}
    \item $\Nrbp{1}{2}$ commutes with $\Ninv{i}$ if $i>2$,
    \item $\Nrbp{1}{2}$ commutes with $\Nper{\pi}$ if $\pi(1)=1$ and $\pi(2)=2$. 
\end{myenumerate}
Applying the homomorphism $\hat{\Psi}_g$ to point a) yields that the matrix $X$ commutes with $\Psi_g(\Ninv{i})$ if $i>2$. The matrices $\Psi_g(\Ninv{i})$ are known from \cref{eq:def_capital_Psi_g,eq:def_psi_g}. As a consequence, if we denote $X=\smalltwobytwo{A}{B}{C}{D}$ with $A,B,C,D\in\mathbb{Z}^{\matrixsize{g}{g}}$, then we find that $A,B,C,D$ are all of the form
\begin{equation}
    \begin{pmatrix}
        *      & *      & 0      & \cdots & 0 \\
        *      & *      & 0      & \cdots & 0 \\
        0      & 0      & *      & \ddots & 0 \\
        \vdots & \vdots & \ddots & \ddots & 0 \\
        0      & 0      & \cdots & 0      & *
    \end{pmatrix},
\end{equation}
i.e.\ every entry is zero, except possibly in the top-left $\matrixsize{2}{2}$ block and the diagonal. Together with point b) from above, we can furthermore conclude that all diagonal entries outside the top-left $\matrixsize{2}{2}$ block must be the same. That is, we have the block-diagonal form
\begin{equation}
    A=\begin{pmatrix}
        \hat{A} & 0 \\
        0       & aI_{g-2} \\
    \end{pmatrix},\hspace{1em}
    \hat{A}=\begin{pmatrix}
        A_{11} & A_{12} \\
        A_{21} & A_{22}
    \end{pmatrix}\in\mathbb{Z}^{\matrixsize{2}{2}},\hspace{1em}
    a\in\mathbb{Z},
\end{equation}
and analogously for $B,C$ and $D$. This reduces the a priori $4g^2$ parameters in $X$ to a total of $20$. Moreover, owing to the block-diagonal form and considering $X$ being symplectic, we find
\begin{equation}\label{eq:proof_symplectic_condition_factorized}
    \begin{pmatrix}
        \hat{A} & \hat{B} \\
        \hat{C} & \hat{D}
    \end{pmatrix}\in\mathrm{Sp}(4,\mathbb{Z})\hspace{1em}\text{and}\hspace{1em}
    \begin{pmatrix}
        a & b \\
        c & d
    \end{pmatrix}\in\SL(2,\mathbb{Z}).
\end{equation}

The remaining seven relations in $\Aut(F_g)$ containing $\Nrbp{1}{2}$ (cf.~\cite[Eq.\ (17d--k)]{Nielsen_AutFg}) together with \cref{eq:proof_symplectic_condition_factorized} then imply
\begin{subequations}\label{eq:proof_matrixeqs_solution}
    \begin{equation}
    \begin{pmatrix}
        \hat{A} & \hat{B} \\
        \hat{C} & \hat{D}
    \end{pmatrix}=
    \begin{pmatrix}
        1        & A_{12} & 0       & B_{12} \\
        A_{12}-1 & 1      & B_{12}  & 0 \\
        0        & C_{12} & 1       & 1-A_{12} \\
        C_{12}   & 0      & -A_{12} & 1 
    \end{pmatrix},\hspace{1em}
    \begin{pmatrix}
        a & b \\
        c & d
    \end{pmatrix}=
    \begin{pmatrix}
        1 & 0 \\
        0 & 1
    \end{pmatrix},
\end{equation}
with the three remaining integer parameters related by the constraint
\begin{equation}
    A_{12}(A_{12}-1) + B_{12}C_{12} = 0.
\end{equation}
\end{subequations}

We have checked the correctness of \cref{eq:proof_matrixeqs_solution} explicitly up to $g=10$. At this point, one can infer possible values of these parameters numerically at each genus by considering an arbitrary marked Schottky group $\vec{\gamma}$ and solving the equation $[\tau(\Nrbp{1}{2}(\vec{\gamma}))]=[X\cdot\tau(\vec{\gamma})]$ for $A_{12}, B_{12},C_{12}$ over the integers. Doing this explicitly up to $g=8$, we found
\begin{equation}
    A_{12} = 1,\quad C_{12}=0,
\end{equation}
and expect this to hold for all genera. The parameter $B_{12}$ then induces only an integer shift in the action of $X$ on the period matrix, which is inconsequential in $\cH_g/\cT$. We are thus free to choose $B_{12}=0$, which yields $X=\Psi_g(\Nrbp{1}{2})$ and thus finally $\hat{\Psi}_g=\Psi_g$.
\\~\\
For $g=3$, the relations a) and d) in~\cite[Eq.\ (17)]{Nielsen1909} do not hold and need to be discarded. However, the above procedure is still applicable and yields the same result.
\\~\\
For $g=2$, the relations a), b), c), g) and h) do not hold. The remaining five relations yield a system of matrix equations with solution
\begin{equation}
    X=\begin{pmatrix}
        A_{11}   & A_{12} & 0       & B_{12} \\
        A_{21}   & A_{11} & B_{12}  & 0 \\
        0        & C_{12} & A_{11}  & -A_{21} \\
        C_{12}   & 0      & -A_{12} & A_{11}
    \end{pmatrix}
\end{equation}
and six constraints in the remaining five unknowns:
\begin{alignat}{4}
    (A_{12}-A_{21}-1)A_{11}&=0, &\qquad A_{12}^2 - A_{21} &= A_{11}^2-B_{12}C_{12}, \nonumber \\
    (A_{12}-A_{21}-1)B_{12}&=0, &\qquad A_{12}A_{21}+1 &= A_{11}^2-B_{12}C_{12}, \\
    (A_{12}-A_{21}-1)C_{12}&=0, &\qquad A_{12}+A_{21}^2 &= A_{11}^2-B_{12}C_{12}. \nonumber
\end{alignat}
Numerical checks with an arbitrary genus-two marked Schottky group yield
\begin{equation}
    A_{11} = A_{12} = 1, \hspace{1em} A_{21}=C_{12}=0
\end{equation}
with $B_{12}$ left unconstrained. Therefore, $X=\Psi_2(\Nrbp{1}{2})$ is again a possible choice.
\end{evidence}
\footnotetext{Note that his map is well-defined since different marked Schottky groups corresponding to the same point in Schottky space are equivalent (related by a Möbius transformation) and hence yield the same period matrix.}

The conclusion that the homomorphism $\Psi_g$, which ultimately originates from the abelianization of the free group, is a good choice, is further suggested by the fact that the first homology group is identified with the abelianization of the fundamental group, and the identification of Schottky generators with B-loops on the Riemann surface via their action on the Schottky cover. 

An immediate consequence of \cref{eq:com} is that $\tau(\vec{\gamma})$ and $\Psi_g(\alpha)\cdot\tau(\vec{\gamma})$ differ only by an element of $\Zgxgsym$. Hence, a statement equivalent to \cref{thm:com} is that the defects introduced in \cref{eq:def_defect} are integer symmetric matrices. With this, one can infer the defect $B_{\mathrm{def}}(\alpha,\vec{\gamma})$ without having to compute $\tau(\alpha(\vec{\gamma}))$:
\begin{corollary}\label{thm:defect_explicit}
Let 
\begin{equation}\label{eq:def_rhd}
    \mathrm{rhd}\colon \mathbb{R}\rightarrow\mathbb{Z},\hspace{1em}x\mapsto \lceil x - \tfrac{1}{2}\rceil
\end{equation}
denote the function that rounds a real number to the nearest integer with the prescription that half-integers be rounded down (\underline{r}ound \underline{h}alf \underline{d}own). Then
\begin{equation}\label{eq:defect_explicit_formula}
    B_\mathrm{def}(\alpha,\vec{\gamma}) = - \mathrm{rhd}\Re\!\left(\Psi_g(\alpha)\cdot\tau(\vec{\gamma})\right),
\end{equation}
where the $\mathrm{rhd}$-function applied to a matrix means that it is applied to each entry of the matrix separately.
\end{corollary}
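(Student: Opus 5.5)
We combine the consequence of \cref{thm:com} stated just before the corollary with the fixed standard log-branch convention \cref{eq:standard_logbranch_pm_entries}. The argument is entrywise and purely about real parts. By \cref{thm:com} and \cref{eq:DA_action_on_pm_class}, $[\tau(\alpha(\vec{\gamma}))]=[\Psi_g(\alpha)\cdot\tau(\vec{\gamma})]$. Hence the defect $B_\mathrm{def}(\alpha,\vec{\gamma})=\tau(\alpha(\vec{\gamma}))-\Psi_g(\alpha)\cdot\tau(\vec{\gamma})$ from \cref{eq:def_defect} lies in $\Zgxgsym$. Write $N\coloneq B_\mathrm{def}(\alpha,\vec{\gamma})$ and $X\coloneq\Re\!\left(\Psi_g(\alpha)\cdot\tau(\vec{\gamma})\right)$, which is a real symmetric matrix.

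Taking real parts of $\tau(\alpha(\vec{\gamma}))=\Psi_g(\alpha)\cdot\tau(\vec{\gamma})+N$, entry by entry, gives $\Re\tau(\alpha(\vec{\gamma}))_{ij}=X_{ij}+N_{ij}$ with $N_{ij}\in\mathbb{Z}$. Since $\alpha(\vec{\gamma})$ is again a marked Schottky group uniformizing $\Sigma$, its period matrix is computed with the same standard branch. So \cref{eq:standard_logbranch_pm_entries} applies to it and yields $-\tfrac12<X_{ij}+N_{ij}\le\tfrac12$.

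It remains to show that for real $x$ the unique integer $n$ with $-\tfrac12<x+n\le\tfrac12$ is $n=-\mathrm{rhd}(x)$. Uniqueness holds because the half-open interval $(-\tfrac12,\tfrac12]$ has length one. For existence, set $m=\lceil x-\tfrac12\rceil$. By the defining property of the ceiling, $x-\tfrac12\le m<x+\tfrac12$, which rearranges to $-\tfrac12<x-m\le\tfrac12$; thus $n=-m$ works. The half-open conventions match: the ceiling produces exactly the interval open on the left and closed on the right, which is why half-integers are rounded down. Applying this with $x=X_{ij}$ gives $N_{ij}=-\mathrm{rhd}(X_{ij})$ for all $i,j$, which is \cref{eq:defect_explicit_formula}.

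The only delicate point is making sure that \cref{eq:standard_logbranch_pm_entries} really constrains every entry of $\tau(\alpha(\vec{\gamma}))$, including the diagonal terms with $\ln\lambda_i$ and the infinite sums. The paper adopts this as its convention, so we take it as given. The substantive input, integrality of the defect, rests on \cref{thm:com}, whose validity the paper explicitly assumes from that point on.
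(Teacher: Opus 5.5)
Your proof is correct and is essentially the paper's argument: integrality of $B_\mathrm{def}$ (from the numerical Proposition) combined with the standard-branch bound applied to $\tau(\alpha(\vec{\gamma}))$. Where the paper applies $\mathrm{rhd}$ to both sides, using $\mathrm{rhd}(x+n)=\mathrm{rhd}(x)+n$ and $\mathrm{rhd}\,\Re\,\tau(\alpha(\vec{\gamma}))=0$, you instead check uniqueness of the integer shift into $(-\tfrac12,\tfrac12]$ directly, and you rightly place $\tau(\alpha(\vec{\gamma}))$ where the paper's proof writes $\tau(\vec{\gamma})$.
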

\begin{proof}
    By definition of the defect, $B_\mathrm{def}(\alpha,\vec{\gamma})+\Psi_g(\alpha)\cdot\tau(\vec{\gamma})=\tau(\vec{\gamma})$. According to \cref{eq:standard_logbranch_pm_entries}, we have $-\tfrac{1}{2}<\Re(\tau(\vec{\delta}))_{ij}\leq\tfrac{1}{2}$ and hence $\mathrm{rhd}\Re\tau(\vec{\gamma})=0$. Therefore, $B_\mathrm{def}(\alpha,\vec{\gamma})+ \mathrm{rhd}\Re\!\left(\Psi_g(\alpha)\cdot\tau(\vec{\gamma})\right)=0$, where we used that $B_\mathrm{def}(\alpha,\vec{\gamma})$ is an integer matrix.
\end{proof}

Considering and comparing different marked Schottky groups and their period matrices associated via \cref{eq:pm_in_schottky_language}, the following statements will be useful:
\begin{lemma}\label{thm:lemmas_AtoC}
Let $\vec{\gamma},\vec{\delta}\in \MSG(\Sigma)$. Then the following statements hold:
    \begin{myenumerate}
        \item The period matrices computed from two marked Schottky groups differ by integers if and only if they are equal: $\tau(\vec{\gamma})-\tau(\vec{\delta})\in\Zgxgsym~\Leftrightarrow~\tau(\vec{\gamma})=\tau(\vec{\delta})$.
        \item If $\tau(\vec{\gamma})=\tau(\vec{\delta})$, then for all $\alpha\in\Aut(\Fg)$, we have $\tau(\alpha(\vec{\gamma}))=\tau(\alpha(\vec{\delta}))$.
        \item Let $t(\vec{\gamma})=\setbuilder{\vec{\delta}\in\MSG(\Sigma)}{\tau(\vec{\delta})=\tau(\vec{\gamma})}$. Then for all $\alpha\in\Aut(\Fg)$ we have the equality of sets\footnote{Here and later, we use the following shorthand notation: if $f$ is a function defined on elements of a set $M$, then $f(M)$ denotes $\setbuilder{f(m)}{m\in M}$ and likewise $g\cdot M=\setbuilder{g\cdot m}{m\in M}$ if the action of a group element $g$ is defined on all elements of $M$.} $t(\alpha(\vec{\gamma}))=\alpha(t(\vec{\gamma}))$.
    \end{myenumerate}
\end{lemma}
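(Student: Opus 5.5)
Part a) rests on the fixed log-branch convention \cref{eq:standard_logbranch_pm_entries}. The direction \enquote{$\Leftarrow$} is trivial, since $0\in\Zgxgsym$. For \enquote{$\Rightarrow$}, suppose $\tau(\vec{\gamma})-\tau(\vec{\delta})=B\in\Zgxgsym$. Taking real parts entrywise gives $B_{ij}=\Re\tau(\vec{\gamma})_{ij}-\Re\tau(\vec{\delta})_{ij}$. Both real parts lie in the half-open interval $(-\tfrac12,\tfrac12]$, so their difference lies in $(-1,1)$. An integer in $(-1,1)$ must be $0$, hence $B=0$. The half-openness of the interval is what prevents the boundary case $\tfrac12-(-\tfrac12)=1$; this is the only subtle point.

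For b), I would use \cref{thm:com}, which we are allowed to assume. It gives $[\tau(\alpha(\vec{\gamma}))]=\Psi_g(\alpha)\cdot[\tau(\vec{\gamma})]$ and likewise for $\vec{\delta}$. Since $\tau(\vec{\gamma})=\tau(\vec{\delta})$, the right-hand sides coincide, so $[\tau(\alpha(\vec{\gamma}))]=[\tau(\alpha(\vec{\delta}))]$. In other words, $\tau(\alpha(\vec{\gamma}))-\tau(\alpha(\vec{\delta}))\in\Zgxgsym$. Both $\alpha(\vec{\gamma})$ and $\alpha(\vec{\delta})$ lie in $\MSG(\Sigma)$, because a change of marking does not change the group $G$ and hence not the surface $\Omega(G)/G$. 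Part a) then upgrades the equality of classes to equality of the matrices. An equivalent route is via the defect: \cref{thm:defect_explicit} expresses $B_\mathrm{def}(\alpha,\vec{\gamma})$ purely in terms of $\Psi_g(\alpha)\cdot\tau(\vec{\gamma})$, so equal $\tau$'s give equal defects and therefore equal $\tau(\alpha(\cdot))$.

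For c), I would prove the two inclusions. For \enquote{$\supseteq$}, take $\vec{\delta}\in t(\vec{\gamma})$. Part b) gives $\tau(\alpha(\vec{\delta}))=\tau(\alpha(\vec{\gamma}))$, and $\alpha(\vec{\delta})\in\MSG(\Sigma)$, so $\alpha(\vec{\delta})\in t(\alpha(\vec{\gamma}))$. For \enquote{$\subseteq$}, take $\vec{\epsilon}\in t(\alpha(\vec{\gamma}))$ and set $\vec{\delta}\coloneq\alpha^{-1}(\vec{\epsilon})$. This step needs care: $\alpha$ acts through the identification $\phi$ with $\Fg$ attached to each marking. The natural convention is that $\alpha(\vec{\gamma})$ is obtained by substituting the components of $\vec{\gamma}$ into the words $\alpha(x_i)$. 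With that convention the action on markings is compatible with composition, so $\alpha^{-1}(\alpha(\vec{\gamma}))=\vec{\gamma}$. Applying b) with $\alpha^{-1}$ to the equality $\tau(\vec{\epsilon})=\tau(\alpha(\vec{\gamma}))$ then gives $\tau(\vec{\delta})=\tau(\vec{\gamma})$. Hence $\vec{\delta}\in t(\vec{\gamma})$ and $\vec{\epsilon}=\alpha(\vec{\delta})$. The main obstacle is only this bookkeeping of the action, namely checking that it is a genuine group action on markings. Everything else reduces to a) and \cref{thm:com}.
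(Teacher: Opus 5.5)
Your proof is correct and matches the paper's: a) uses the half-open branch interval $(-\tfrac12,\tfrac12]$; b) uses the integrality of the defects, equivalently the class-level statement of the numerical proposition, followed by a); c) applies b) to both $\alpha$ and $\alpha^{-1}$. Your extra bookkeeping on the action is harmless. With the substitution convention it is actually a right action, $\beta(\alpha(\vec{\gamma}))=(\alpha\beta)(\vec{\gamma})$, but only $\alpha^{-1}(\alpha(\vec{\gamma}))=\vec{\gamma}$ and $\alpha(\alpha^{-1}(\vec{\epsilon}))=\vec{\epsilon}$ are needed, and both hold either way.
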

\begin{proof}
    \begin{myenumerate}
        \item The \enquote{$\Leftarrow$}-direction is clear. For the other direction, we immediately have that the imaginary parts of the matrices are the same. The real parts of each entry of the difference of the two matrices is an integer. By \cref{eq:standard_logbranch_pm_entries}, this integer lies strictly between $-1$ and $1$ and hence must be zero.
        \item We have
        \begin{equation}
            \tau(\alpha(\vec{\gamma}))-\tau(\alpha(\vec{\delta})) = \left(B_\mathrm{def}(\alpha,\vec{\gamma})-B_\mathrm{def}(\alpha,\vec{\delta})\right) + \left(\Psi_g(\alpha)\cdot\tau(\vec{\gamma}) - \Psi_g(\alpha)\cdot\tau(\vec{\delta})\right).
        \end{equation}
        The second term vanishes since $\tau(\vec{\gamma})=\tau(\vec{\delta})$ by assumption, and the first term is in $\Zgxgsym$. Applying a) then finishes the proof.
\item Take any $\vec{\delta}\in t(\alpha(\vec{\gamma}))$. Using point b), we see that this is equivalent to $\tau(\alpha^{-1}(\vec{\delta}))=\tau(\vec{\gamma})$ and hence to $\alpha^{-1}(\vec{\delta})\in t(\vec{\gamma})$ as well as $\vec{\delta}\in \alpha(t(\vec{\gamma}))$. 
\end{myenumerate}~\\[-2\baselineskip]\phantom{fill}
\end{proof}

For example, the following result is an immediate consequence.
\begin{corollary}\label{thm:MSG_pm_IA_invariance}
    Let $\vec{\gamma}\in\MSG(\Sigma)$ and $\varphi\in\IAg^\pm$. Then
    \begin{equation}
        \tau(\varphi(\vec{\gamma})) = \tau(\vec{\gamma}),
    \end{equation}
    or, equivalently, $B_\mathrm{def}(\varphi,\vec{\gamma})=0$.
\end{corollary}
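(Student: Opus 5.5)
The plan is to use \cref{thm:com} (taken as valid) together with \cref{thm:lemmas_AtoC}~a). Let $\varphi\in\IAg^\pm$, so that $\psi_g(\varphi)=\pm I_g$. By the definition in \cref{eq:def_capital_Psi_g}, this gives $\Psi_g(\varphi)=D_{\pm I_g}=\pm I_{2g}$.

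First I check how this matrix acts on $\cH_g$. Through \cref{thm:sp2gz_actions}, $\pm I_{2g}$ acts trivially: for $-I_{2g}$ one computes $(-\tau)(-I_g)^{-1}=\tau$. Equivalently, in the form $D_A\cdot\tau=A\tau A^\Transpose$ with $A=\pm I_g$, one gets $\tau$ directly. Hence $\Psi_g(\varphi)\cdot\tau(\vec{\gamma})=\tau(\vec{\gamma})$.

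Next I substitute this into the definition of the defect, \cref{eq:def_defect}, which gives
\begin{equation*}
B_\mathrm{def}(\varphi,\vec{\gamma})=\tau(\varphi(\vec{\gamma}))-\tau(\vec{\gamma}).
\end{equation*}
By \cref{thm:com}, the right-hand side lies in $\Zgxgsym$: the classes satisfy $[\tau(\varphi(\vec{\gamma}))]=[\tau(\vec{\gamma})]$, so the two matrices differ by a $T_B$ shift. Both $\vec{\gamma}$ and $\varphi(\vec{\gamma})$ belong to $\MSG(\Sigma)$, because a change of marking leaves $G$, and hence $\Omega(G)/G\cong\Sigma$, unchanged. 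Therefore \cref{thm:lemmas_AtoC}~a) applies. It says that two Schottky period matrices differing by an integer symmetric matrix are equal, using the fixed log branch from \cref{eq:standard_logbranch_pm_entries}. This yields $\tau(\varphi(\vec{\gamma}))=\tau(\vec{\gamma})$, and with it $B_\mathrm{def}(\varphi,\vec{\gamma})=0$.

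The equivalence between the two formulations follows directly from the displayed identity $B_\mathrm{def}=\tau(\varphi(\vec{\gamma}))-\tau(\vec{\gamma})$. An alternative route is \cref{thm:defect_explicit}: one gets $B_\mathrm{def}=-\mathrm{rhd}\Re\tau(\vec{\gamma})=0$ by the branch convention.

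The only step requiring care is checking that $\Psi_g(\varphi)=-I_{2g}$ still acts trivially. The real mathematical content lies entirely in \cref{thm:com}, which is assumed here.
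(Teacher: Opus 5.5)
Your proof is correct and follows the paper's argument. Since $\Psi_g(\varphi)=\pm I_{2g}$ acts trivially, $\tau(\varphi(\vec{\gamma}))-\tau(\vec{\gamma})=B_\mathrm{def}(\varphi,\vec{\gamma})$, which is integer symmetric by \cref{thm:com}, and point a) of \cref{thm:lemmas_AtoC} then gives equality. Your explicit check that $\varphi(\vec{\gamma})\in\MSG(\Sigma)$, needed to apply that lemma, and your alternative route via \cref{thm:defect_explicit} are both valid additions.
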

\begin{proof}
    By point a) of \cref{thm:lemmas_AtoC}, it suffices the show that the difference between the left-hand side and right-hand side is an integer symmetric matrix. And indeed, we have
    \begin{equation}
	    \tau(\varphi(\vec{\gamma})) - \tau(\vec{\gamma}) = B_\mathrm{def}(\varphi,\vec{\gamma}) + \Psi_g(\varphi)\cdot\tau(\vec{\gamma}) - \tau(\vec{\gamma}) = B_\mathrm{def}(\varphi,\vec{\gamma}),
    \end{equation}
    where we used that $\Psi_g(\varphi)\cdot\tau=(\pm I_g)\cdot\tau=\tau$.
\end{proof}

A numerical check of this statement can be found in \cref{sec:numerical_check_IA_com}.

\subsection{Decorated Schottky language and $T_B$ symplectic transformations}\label{sec:decorated_schottky_and_Rg}

In order to discuss the echo of symplectic transformations $T_B$ in the Schottky language, let us introduce the notion of a \emph{decorated marked Schottky group} $(S,\vec{\gamma})$, for which we accompany a marked Schottky group $\vec{\gamma}$ with an integer symmetric matrix $S$, the \emph{decoration}.

In practice, one can imagine the decoration as describing the winding numbers with respect to a reference. Accordingly, the decoration fixes the freedom in the choice of B-cycles up to a reference, which in the case of the Schottky uniformization can essentially be fixed by the choice of log-branch in the period matrix formula \cref{eq:pm_in_schottky_language}. 

In order to frame the transformation behavior in a way similar to \cref{eq:com}, consider a group $R_g$ together with a homomorphism $\Phi_g\colon R_g\rightarrow \SpTwogZ$ such that $R_g$ acts on decorated marked Schottky groups in a way that is compatible with computing period matrices:
\begin{equation}
	\tau(r\cdot(S,\vec{\gamma}))\simeq \Phi_g(r)\cdot\tau(S,\vec{\gamma})
\end{equation}
for all $r\in R_g$.
The above line is not in general an equality due to the defect \eqref{eq:def_defect} introduced by the fixed log-branch choice in the Schottky period matrix formula, which has to be included on the right-had side (cf.\ \cref{eq:tau_Rg} below).

Then every symplectic transformation in the image of $\Phi_g$ can be explicitly performed (up to a defect) on the level of decorated marked Schottky groups by applying a corresponding element of $R_g$.
For the construction presented below, the image of $R_g$ is the subgroup of $\SpTwogZ$ generated by all elements of the form $T_B$ or $D_A$.

\subsubsection{Decorated Schottky groups}
Let $\DMSG(\Sigma)$ denote the set of decorated marked Schottky groups $(S,\vec{\gamma})$ that uniformize a Riemann surface $\Sigma$:
\begin{equation}
    \DMSG(\Sigma)\coloneq\Zgxgsym\times \MSG(\Sigma)\,.
\end{equation}
Leveraging the fact that the decoration fixes the B-cycle freedom, we can consider the period matrices associated to decorated Schottky groups as honest Riemann matrices, rather than elements of $\cH_g/\cT$. Accordingly, the period matrix map is adapted to $\tau\colon \DMSG(\Sigma)\rightarrow\mathcal{H}_g$ via
\begin{equation}\label{eq:pm_decorated}
    \tau(S,\vec{\gamma})\coloneq \tau(\vec{\gamma}) + S,
\end{equation}
where $\tau(\vec{\gamma})$ uses the original Schottky period matrix formula~\eqref{eq:pm_in_schottky_language}.

\subsubsection{The group $R_g$ and its action}
Let us now proceed to define the group $R_g$. Firstly, for any $A\in\GLgZ$ and $B\in\Zgxgsym$, let $\leftindex[Y]^A{B}\coloneq ABA^\Transpose$, which defines a homomorphism $\GLgZ\rightarrow\Aut(\Zgxgsym)$. By precomposing with $\psi_g$, we obtain a homomorphism
\begin{equation}\label{eq:homom_for_Rg_def}
    \Aut(\Fg) \rightarrow \Aut(\Zgxgsym), \hspace{1em} \alpha\mapsto (B\mapsto \leftindex[Y]^\alpha{B}\coloneq\psi_g(\alpha)B\psi_g(\alpha)^T),
\end{equation}
which then allows making the following definition.
\begin{definition}
    Using the homomorphism \eqref{eq:homom_for_Rg_def}, one can define the semidirect product
    \begin{equation}
        R_g\coloneq\Zgxgsym\rtimes\Aut(\Fg).
    \end{equation}
    We denote its elements as $[B,\alpha]$, and the multiplication rule is
    \begin{equation}\label{eq:Rg_multiplication_rule}
        [B,\alpha][B^\prime,\alpha^\prime]=[B+\leftindex[Y]^\alpha{B}^\prime,\alpha\alpha^\prime].
    \end{equation}
    Both $\Zgxgsym$ and $\Aut(F_g)$ are naturally embedded in $R_g$. For brevity, we also write $[B]\coloneq[B,\mathrm{id}]$ and find
    \begin{equation}
        [0,\alpha][B]=[\leftindex[Y]^\alpha{B}][0,\alpha].
    \end{equation}
    In fact, $R_g$ is constructed in this way precisely such that it obeys the above equation, which mimics the relation \cref{eq:symplectic_reelation_2} in the symplectic group. Consequently, accompanied with this construction comes the natural homomorphism
    \begin{equation}\label{eq:Phi_g_homom}
        \Phi_g\colon R_g\rightarrow\SpTwogZ,\hspace{1em} [B,\alpha]\mapsto T_B \Psi_g(\alpha),
    \end{equation}
    whose image is the subgroup of $\SpTwogZ$ generated by elements of the form $D_A$ or $T_B$ and whose kernel is $\{0\}\times\IAg$.
\end{definition}

Moreover, $R_g$ acts freely\footnote{A group action is called \emph{free} if no non-trivial group element keeps any element of the set on which the group acts fixed.} on $\DMSG(\Sigma)$ via 
\begin{equation}\label{eq:Rg_action_on_DMSG}
    [B,\alpha]\cdot(S,\vec{\gamma})\coloneq(\leftindex[W]^\alpha{S}+B, \alpha(\vec{\gamma})).
\end{equation}
\begin{proof}
    The fact that \cref{eq:Rg_action_on_DMSG} defines a group action can be checked directly. The action is free, since, if $[B,\alpha]$ acts trivially on some $(S,\vec{\gamma})$, we have $\alpha(\vec{\gamma})=\vec{\gamma}$ and thus $\alpha$ must be the identity as it is uniquely determined by its image on a free basis of the Schottky group. $B=0$ then follows.
\end{proof}

As alluded to above, the action of $[B]\in R_g$ on the Schottky cover can be understood as changing the number of times the B-cycles wrap around the pairs of Jordan curves. 

\subsubsection{Transformation of the period matrix}

Using the constructions from this subsection, we can make the following conclusion.
\begin{proposition}\label{thm:tau_Rg}
    Under the action of $[B,\alpha]\in R_g$, the period matrix of a decorated Schottky group transforms as
    \begin{subequations}\label{eq:tau_Rg}
    \begin{align}
        \tau\left([B,\alpha]\cdot (S,\vec{\gamma})\right)
        &= B_\mathrm{def}+\Phi_g([B,\alpha])\cdot\tau\left(S,\vec{\gamma}\right)\label{eq:tau_Rg_1}\\
        &=\Phi_g\left([B,\alpha]\right)\cdot\tau\!\left(S+\Tilde{B}_\mathrm{def},\vec{\gamma}\right)\label{eq:tau_Rg_2},
    \end{align}    
    \end{subequations}
    with defect $B_\mathrm{def}=B_\mathrm{def}(\alpha,\vec{\gamma})$ and $\Tilde{B}_\mathrm{def}$ as in \cref{eq:def_defect_tilde}.
\end{proposition}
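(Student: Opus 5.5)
The proof is a direct unfolding of definitions. Both equalities in \cref{eq:tau_Rg} reduce to the defect definition \cref{eq:def_defect} and the explicit form of the $\SpTwogZ$-action. Throughout, write $A\coloneq\psi_g(\alpha)$, so that $\Psi_g(\alpha)=D_A$ and $\Phi_g([B,\alpha])=T_B D_A$. By \cref{thm:sp2gz_actions}, $D_A\cdot\tau=A\tau A^\Transpose$ and $T_B\cdot\tau=\tau+B$, hence $T_BD_A\cdot\tau=A\tau A^\Transpose+B$. The one thing to keep straight is that $\leftindex[Y]^\alpha{S}=ASA^\Transpose$, which is exactly how the decoration transforms in the action \cref{eq:Rg_action_on_DMSG}.

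For \cref{eq:tau_Rg_1}, I start from the action \cref{eq:Rg_action_on_DMSG} and the decorated period map \cref{eq:pm_decorated}:
\begin{equation*}
\tau\!\left([B,\alpha]\cdot(S,\vec{\gamma})\right)=\tau(\alpha(\vec{\gamma}))+ASA^\Transpose+B .
\end{equation*}
Next I substitute $\tau(\alpha(\vec{\gamma}))=B_\mathrm{def}(\alpha,\vec{\gamma})+A\tau(\vec{\gamma})A^\Transpose$ from \cref{eq:def_defect}. Collecting terms gives
\begin{equation*}
B_\mathrm{def}+A\left(\tau(\vec{\gamma})+S\right)A^\Transpose+B=B_\mathrm{def}+T_BD_A\cdot\tau(S,\vec{\gamma}),
\end{equation*}
which is \cref{eq:tau_Rg_1}. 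I note that this step does not need \cref{thm:com}; the content of that proposition is only that $B_\mathrm{def}$ is an integer symmetric matrix. I will need that integrality only to make sense of $S+\Tilde{B}_\mathrm{def}$ as a decoration in the second line.

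For \cref{eq:tau_Rg_2}, I absorb the defect into the decoration using \cref{eq:def_defect_tilde}. That definition gives $B_\mathrm{def}=A\Tilde{B}_\mathrm{def}A^\Transpose$, because $\Tilde{B}_\mathrm{def}=A^{-1}B_\mathrm{def}A$. Hence
\begin{equation*}
B_\mathrm{def}+A(\tau(\vec{\gamma})+S)A^\Transpose+B=A\left(\tau(\vec{\gamma})+S+\Tilde{B}_\mathrm{def}\right)A^\Transpose+B=\Phi_g([B,\alpha])\cdot\tau(S+\Tilde{B}_\mathrm{def},\vec{\gamma}).
\end{equation*}

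The main obstacle is this conjugation identity. The definition \cref{eq:def_defect_tilde} conjugates by $A^{-1}(\cdot)A$, which matches $A(\cdot)A^\Transpose$ only if $A^\Transpose=A^{-1}$. That holds for the per- and inv-type generators but not in general. So my first step will be to read \cref{eq:def_defect_tilde} as $\Tilde{B}_\mathrm{def}=A^{-1}B_\mathrm{def}A^{-\Transpose}$, the choice that makes the identity hold. I would flag this as the intended reading, perhaps a typo in the source. With that reading, $\Tilde{B}_\mathrm{def}$ is again symmetric and integral, since $A\in\GLgZ$ and $B_\mathrm{def}\in\Zgxgsym$ by \cref{thm:com}. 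Therefore $(S+\Tilde{B}_\mathrm{def},\vec{\gamma})\in\DMSG(\Sigma)$, and the right-hand side of \cref{eq:tau_Rg_2} is well defined.
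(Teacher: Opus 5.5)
Your argument is correct and is the same direct computation as in the paper: unfold \cref{eq:Rg_action_on_DMSG} and \cref{eq:pm_decorated}, substitute the defect definition \cref{eq:def_defect}, and use additivity of $X\mapsto \psi_g(\alpha)X\psi_g(\alpha)^\Transpose$ to pull $S$ inside, with integrality of $B_\mathrm{def}$ (from \cref{thm:com}) needed only so that $S+\Tilde{B}_\mathrm{def}$ is a legitimate decoration. Your reading $\Tilde{B}_\mathrm{def}=\psi_g(\alpha)^{-1}B_\mathrm{def}\,\psi_g(\alpha)^{-\Transpose}$ of \cref{eq:def_defect_tilde} is the intended one: the literal version is not symmetric in general, and the paper's own passage from \cref{eq:jfrak_Rg_prop1} to \cref{eq:jfrak_Rg_prop2} requires exactly this form. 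So in your write-up, just remove the earlier sentence asserting that the literal definition already yields $B_\mathrm{def}=A\Tilde{B}_\mathrm{def}A^\Transpose$.
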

\begin{proof}
    This is a direct computation:
    \begin{align}
        \tau([B,\alpha]\cdot(S,\vec{\gamma}))
        &= \tau(\leftindex[W]^\alpha{S}+B, \alpha(\vec{\gamma})) = \leftindex[W]^\alpha{S}+B + \tau(\alpha(\vec{\gamma})) \nonumber\\
        &= \leftindex[W]^\alpha{S}+B + B_\mathrm{def} + \Psi_g(\alpha)\cdot\tau(\vec{\gamma}) \nonumber\\
        &= B_\mathrm{def} + B + \Psi_g(\alpha)\cdot(S+\tau(\vec{\gamma}))
        = B_\mathrm{def} + T_B\Psi_g(\alpha)\cdot(S+\tau(\vec{\gamma}))\,.
    \end{align}
    In the first line, we used the action of $R_g$ given in \cref{eq:Rg_action_on_DMSG} and the period matrix for decorated Schottky groups \cref{eq:pm_decorated}. In going to the second line, we used the defect defined in \cref{eq:def_defect} to essentially pull the $\alpha$ out of $\tau$. Noting $\leftindex[W]^\alpha{S}=\Psi_g(\alpha)\cdot S$ and that the action of the symplectic matrix $\Psi_g(\alpha)$ is additive leads to the third line. \cref{eq:tau_Rg_1} then follows from the last right-hand-side by virtue of \cref{eq:Phi_g_homom}. \cref{eq:tau_Rg_2} can be derived analogously and uses that the defect is an integer matrix which allows for it to be added to the decoration.
\end{proof}

Considering \cref{eq:tau_Rg_2}, one can interpret $\tilde{B}_\mathrm{def}$ as a transformation of the decoration, which underlines that the choice of the log branch in the period matrix formula takes the role of a reference. Another consequence of \cref{thm:tau_Rg} is that the period matrix is invariant under the action of all $\varphi\in\IAg^\pm$:
\begin{equation}\label{eq:tau_IApm_invariance}
    \tau([0,\varphi]\cdot(S,\vec{\gamma}))=\tau(S,\vec{\gamma}),
\end{equation}
which generalizes \cref{thm:MSG_pm_IA_invariance} to the decorated case. For a numerical check of the validity of \cref{eq:tau_Rg_1}, see \cref{sec:numerical_sympl_transf}.

\subsection{Schottky dualization and symplectic transformations $J_{2g}$}\label{sec:schottky_dualization}

Using the notion of decorating a Schottky group, we were able to extend the description of symplectic transformations in the Schottky language to the subgroup of $\SpTwogZ$ generated by $D_A$ and $T_B$. There is only one generator missing in order to obtain the full symplectic group, namely $J_{2g}$, whose interpretation on the Riemann surface is to essentially swap A- and B-cycles.

We begin by treating this issue in the normal, i.e.\ \enquote{undecorated} Schottky language. Unfortunately, the symplectic matrix $J_{2g}$ does not act on $\mathcal{H}_g/\mathcal{T}$. The reason is that, while by construction $[\tau]=[\tau+B]$, we do not in general have $[J_{2g}\cdot\tau]=[J_{2g}\cdot(\tau+B)]$. So instead, we consider subsets $j([\tau])$ (rather than elements) of $\mathcal{H}_g/\mathcal{T}$ which we expect to satisfy relations reminiscent of those of $J_{2g}$. These relations are then lifted to $\MSG(\Sigma)$ by introducing $\mathfrak{j}(\vec{\gamma})\subset \MSG(\Sigma)$, where corresponding relations hold: these are the echos of $\SpTwogZ$-relations in the Schottky language.

\paragraph{The set $\ITg$ of $\mathsf{it}$-maps.}
We will need a notion of \enquote{inverse transpose} on the level of $\Aut(\Fg)$. That is, a map $\mathsf{it}\colon \Aut(\Fg)\rightarrow\Aut(\Fg)$ such that 
\begin{equation}\label{eq:it_property}
    \psi_g(\mathsf{it}(\alpha))=\psi_g(\alpha)^{-\Transpose}
\end{equation}
for all $\alpha\in\Aut(\Fg)$. Let us consider the set of all maps with this property,
\begin{equation}\label{eq:ITg_def}
    \ITg \coloneq\setbuilder{\mathsf{it}\colon\Aut(\Fg)\rightarrow\Aut(\Fg)}{\forall\alpha\in\Aut(\Fg): \psi_g(\mathsf{it}(\alpha))=\psi_g(\alpha)^{-\Transpose}}.
\end{equation}
It is shown in \cref{appx:properties_of_ITg} that this set is in bijection with the set of functions from $\Aut(\Fg)$ to $\IAg$. Therefore, as there are many elements of $\ITg$, we have to make a choice whenever we pick one. But due to the following property, this ambiguity will turn out to be inconsequential for our purposes. For any $\mathsf{it},\mathsf{it}^\prime\in \ITg$:
\begin{equation}\label{eq:itmap_equivalence}
    \mathsf{it}(\alpha)^{-1}\mathsf{it}^\prime(\alpha)\in\IAg,
\end{equation}
since the defining property~\eqref{eq:it_property} of the $\mathsf{it}$-maps yields $\psi_g(\mathsf{it}(\alpha)^{-1}\mathsf{it}^\prime(\alpha))=\psi_g(\alpha)^\Transpose\psi_g(\alpha)^{-\Transpose}=I_g$.
That is, the elements of $\ITg$ are all equivalent up to multiplication with an element of $\IAg$.

No element of $\ITg$ is an automorphism of $\Aut(\Fg)$ if $g>2$; for a proof of this, see \cref{appx:properties_of_ITg}. However, we note the following properties. Let $\mathsf{it},\mathsf{it}^\prime\in \ITg$ as well as $\alpha,\beta\in\Aut(\Fg)$ and $\varphi\in\IAg$. The elements of $\ITg$ are
\begin{myenumerate}
    \item homomorphisms up to $\IAg$:
        \begin{equation}\label{eq:itmap_homom}
            \mathsf{it}(\varphi),~\mathsf{it}(\alpha^{-1})\mathsf{it}(\alpha),~\mathsf{it}(\alpha\beta)^{-1}\mathsf{it}(\alpha)\mathsf{it}(\beta)\in\IAg.
        \end{equation}
    \item involutory up to $\IAg$:
        \begin{equation}\label{eq:itmap_involut}
            \mathsf{it}^\prime(\mathsf{it}(\alpha))^{-1}\alpha\in\IAg.
        \end{equation}
\end{myenumerate}
To see this, first note that all of these statements are of the form $x\in\IAg$, so it again suffices to check that $\psi_g(x)=I_g$, which, as before, immediately follows from the defining property of the $\mathsf{it}$-maps.

\subsubsection{Schottky dualization at the level of Riemann matrices}
The symplectic matrix $J_{2g}$ defined in \cref{eqn:Sp2generators} acts on a Riemann matrix $\tau$ via $J_{2g}\cdot\tau=-\tau^{-1}$. As mentioned before, this does not directly descend to an action on the quotient $\cH_g/\cT$. Therefore, we instead consider for any $[\tau]\in\mathcal{H}_g/\mathcal{T}$ the set
\begin{equation}\label{eq:jset_def}
    j([\tau])\coloneq
    \setbuilder{[J_{2g}\cdot\tau_0]}{\tau_0\in [\tau]}
\end{equation}
of elements of $\mathcal{H}_g/\mathcal{T}$ \enquote{dual} to $[\tau]$. 
An element $[\upsilon]$ of $j([\tau])$ has the defining property that there exists an $\upsilon_0\in[\upsilon]$ such that $J_{2g}\cdot\upsilon_0\in[\tau]$. In other words, $[\upsilon]\in j([\tau])$ is equivalent to saying that there exist $\upsilon_0\in[\upsilon]$ and $\tau_0\in[\tau]$ such that $-\upsilon_0^{-1}=\tau_0$. Since this last statement is manifestly symmetric in $\tau$ and $\upsilon$, we immediately obtain the symmetry property
\begin{equation}\label{eq:jset_set_symmetry}
    [\upsilon]\in j([\tau])~\Leftrightarrow~[\tau]\in j([\upsilon]) 
\end{equation}
for any $[\tau],[\upsilon]\in\mathcal{H}_g/\mathcal{T}$.

Furthermore, for all $A\in\GLgZ$ 
we have the equality of sets
\begin{align}\label{eq:jset_DA_property}
	j(D_A\cdot [\tau])&=D_{A^{-\Transpose}}\cdot j([\tau])\,,
\end{align}
which is analogous to relation \cref{eq:symplectic_reelation_1}.
\begin{proof}
    The elements of $j(D_A\cdot[\tau])$ are precisely those of the form $[J_{2g}D_A\cdot\tau_0]$ with $\tau_0\in[\tau]$. Using \cref{eq:symplectic_reelation_1,eq:DA_action_on_pm_class}, we can write this as $D_{A^{-T}}\cdot[J_{2g}\cdot\tau_0]$, which is precisely the form of the elements of $D_{A^{-\Transpose}}\cdot j([\tau])$.
\end{proof}

\subsubsection{Schottky dualization in the undecorated Schottky language}
We now introduce an analog of the construction from the previous subsection. For any marked Schottky group $\vec{\gamma}\in\MSG(\Sigma)$, consider the set\footnote{We argue that this set is non-empty: the generators of the marked Schottky group $\vec{\gamma}$ correspond to pairs of Jordan curves on the Riemann sphere, which, in turn, yield a set of A-cycles on $\Sigma$ (see \cref{sec:review_schottky_groups}). Pick any set of B-cycles that complements the A-cycles to a canonical homology basis. By the retrosection theorem (cf.\ \cref{thm:retrosection}), there exists a marked Schottky group $\vec{\delta}$ whose Jordan curves correspond to these B-cycles. The period matrices $\tau(\vec{\gamma})$ and $\tau(\vec{\delta})$ should be related by the action of $J_{2g}$ (modulo integer shifts), since this symplectic matrix switches the roles of A- and B-cycles. Thus, $\vec{\delta}\in\mathfrak{j}(\vec{\gamma})$.}
\begin{equation}\label{eq:jfrakset_def}
    \mathfrak{j}(\vec{\gamma})\coloneq\setbuilder{\vec{\delta}\in \MSG(\Sigma)}{[\tau(\vec{\delta})]\in j([\tau(\vec{\gamma})])}\subset \MSG(\Sigma)
\end{equation}
of elements of $\MSG(\Sigma)$ that are \enquote{dual} to $\vec{\gamma}$. Effectively, the notion of \enquote{dual} has been lifted to marked Schottky groups. In the light of \cref{thm:MSG_pm_IA_invariance}, the set $\mathfrak{j}(\vec{\gamma})$ is manifestly invariant under $\IAg$ in the sense that, for all $\varphi\in \IAg\colon$
\begin{equation}\label{eq:jfrak_IA_invariance}
    \varphi(\mathfrak{j}(\vec{\gamma}))=\mathfrak{j}(\varphi(\vec{\gamma}))=\mathfrak{j}(\vec{\gamma}).
\end{equation}
With this, \cref{eq:itmap_homom,eq:itmap_involut} imply that all elements of $\ITg$ behave as involutory homomorphisms when composed with $\mathfrak{j}$ from the left or the right.

A direct consequence of \cref{eq:jset_set_symmetry} is
\begin{equation}
    \vec{\delta}\in\mathfrak{j}(\vec{\gamma})~\Leftrightarrow~\vec{\gamma}\in\mathfrak{j}(\vec{\delta}),
\end{equation}
and it follows immediately that
\begin{equation}
    \vec{\gamma}\in\mathfrak{j}^2(\vec{\gamma}),
\end{equation}
where $\mathfrak{j}^2(\vec{\gamma})\coloneq \bigcup_{\vec{\delta}\in\mathfrak{j}(\vec{\gamma})}\mathfrak{j}(\vec{\delta})$.

\begin{proposition}
    For any $\mathsf{it}\in \ITg$ and $\alpha\in\Aut(F_g)$:
    \begin{equation}\label{eq:jfrak_alpha_property}
        \mathfrak{j}(\alpha(\vec{\gamma}))=\mathsf{it}(\alpha)(\mathfrak{j}(\vec{\gamma})).
    \end{equation}    
\end{proposition}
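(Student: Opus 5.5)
We prove the set equality $\mathfrak{j}(\alpha(\vec{\gamma}))=\mathsf{it}(\alpha)(\mathfrak{j}(\vec{\gamma}))$ by double inclusion. Both inclusions reduce to one statement at the level of classes $[\tau]\in\cH_g/\cT$.

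The key ingredients are \cref{thm:com} and \cref{eq:jset_DA_property}. Put $A\coloneq\psi_g(\alpha)$, so that $\Psi_g(\alpha)=D_A$. By \cref{eq:it_property}, $\Psi_g(\mathsf{it}(\alpha))=D_{A^{-\Transpose}}$.

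\emph{Inclusion ``$\supseteq$''.} Let $\vec{\delta}\in\mathfrak{j}(\vec{\gamma})$, so $[\tau(\vec{\delta})]\in j([\tau(\vec{\gamma})])$. Applying \cref{thm:com} to the change of marking $\mathsf{it}(\alpha)$ gives
\begin{equation*}
[\tau(\mathsf{it}(\alpha)(\vec{\delta}))]=D_{A^{-\Transpose}}\cdot[\tau(\vec{\delta})].
\end{equation*}
The right-hand side lies in $D_{A^{-\Transpose}}\cdot j([\tau(\vec{\gamma})])$. By \cref{eq:jset_DA_property} this set equals $j(D_A\cdot[\tau(\vec{\gamma})])$. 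By \cref{thm:com} again, $D_A\cdot[\tau(\vec{\gamma})]=[\tau(\alpha(\vec{\gamma}))]$, so the set is $j([\tau(\alpha(\vec{\gamma}))])$. Hence $\mathsf{it}(\alpha)(\vec{\delta})\in\mathfrak{j}(\alpha(\vec{\gamma}))$.

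Two facts need checking here. First, the element $\mathsf{it}(\alpha)(\vec{\delta})$ still belongs to $\MSG(\Sigma)$. This holds because a change of marking does not change $G$ or $\Omega(G)$. Second, the $D_A$-action on $\cH_g/\cT$ from \cref{eq:DA_action_on_pm_class} must be well defined. This holds because $A(\tau+B)A^\Transpose=A\tau A^\Transpose+ABA^\Transpose$ and $ABA^\Transpose\in\Zgxgsym$.

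\emph{Inclusion ``$\subseteq$''.} Let $\vec{\epsilon}\in\mathfrak{j}(\alpha(\vec{\gamma}))$ and set $\vec{\delta}\coloneq\mathsf{it}(\alpha)^{-1}(\vec{\epsilon})$; it suffices to show $\vec{\delta}\in\mathfrak{j}(\vec{\gamma})$. Since $\Psi_g$ is a homomorphism, $\Psi_g(\mathsf{it}(\alpha)^{-1})=D_{A^{\Transpose}}$. \cref{thm:com} then gives
\begin{equation*}
[\tau(\vec{\delta})]=D_{A^\Transpose}\cdot[\tau(\vec{\epsilon})]\in D_{A^\Transpose}\cdot j(D_A\cdot[\tau(\vec{\gamma})]).
\end{equation*}
By \cref{eq:jset_DA_property}, $j(D_A\cdot[\tau(\vec{\gamma})])=D_{A^{-\Transpose}}\cdot j([\tau(\vec{\gamma})])$. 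The factors $D_{A^\Transpose}$ and $D_{A^{-\Transpose}}$ cancel, since $D$ is a homomorphism and the action on classes is a group action. Therefore $[\tau(\vec{\delta})]\in j([\tau(\vec{\gamma})])$, as required.

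Alternatively, one could apply the ``$\supseteq$'' argument with $\alpha^{-1}$ and $\alpha(\vec{\gamma})$ in place of $\alpha$ and $\vec{\gamma}$. That route needs $\mathsf{it}(\alpha^{-1})$ and $\mathsf{it}(\alpha)^{-1}$ to agree up to $\IAg$, which is \cref{eq:itmap_homom}, combined with the $\IAg$-invariance \cref{eq:jfrak_IA_invariance}. The direct argument above avoids this.

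\emph{Main obstacle.} The only delicate point is that everything happens in $\cH_g/\cT$, where $J_{2g}$ does not act. This is why the set-valued $j$ from \cref{eq:jset_def} is used, and the proof must invoke only \cref{eq:jset_DA_property}, never an action of $J_{2g}$ on classes. I would also note that the result does not depend on the choice of $\mathsf{it}\in\ITg$. Different choices differ by an element of $\IAg$, by \cref{eq:itmap_equivalence}, and \cref{eq:jfrak_IA_invariance} shows that $\mathfrak{j}(\vec{\gamma})$ is invariant under $\IAg$. This gives a consistency check on the statement.
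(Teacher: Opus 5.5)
Your proof is correct and follows essentially the same route as the paper. The paper runs a single chain of equivalences starting from $\vec{\delta}\in\mathfrak{j}(\alpha(\vec{\gamma}))$, using \cref{thm:com}, \cref{eq:jset_DA_property} and $\Psi_g(\alpha)^{\Transpose}=\Psi_g(\mathsf{it}(\alpha)^{-1})$ to arrive at $\mathsf{it}(\alpha)^{-1}(\vec{\delta})\in\mathfrak{j}(\vec{\gamma})$; that chain is exactly your ``$\subseteq$'' argument, and your ``$\supseteq$'' direction is the same chain read backwards.
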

\begin{proof}
    Let $\vec{\delta}\in\mathfrak{j}(\alpha(\vec{\gamma}))$. By the definition \cref{eq:jfrakset_def}, this is equivalent to $[\tau(\vec{\delta})]\in j([\tau(\alpha(\vec{\gamma}))])$.
    Combining \cref{thm:com,eq:jset_DA_property}, the last set can be written as $\Psi_g(\alpha)^{-\Transpose}\cdot j([\tau(\vec{\gamma})])$. Hence, $\vec{\delta}\in\mathfrak{j}(\alpha(\vec{\gamma}))$ is equivalent to $\Psi_g(\alpha)^\Transpose\cdot[\tau(\delta)]\in j([\tau(\vec{\gamma})])$.    
    Now, using the defining property of the $\mathsf{it}$-maps, \cref{eq:it_property}, we have $\Psi_g(\alpha)^\Transpose=\Psi_g(\mathsf{it}(\alpha)^{-1})$ for an arbitrary choice of $\mathsf{it}\in\ITg$, and thus $\Psi_g(\alpha)^\Transpose\cdot[\tau(\delta)]=[\tau(\mathsf{it}(\alpha)^{-1}(\vec{\gamma}))]$. Therefore, $\vec{\delta}\in\mathfrak{j}(\alpha(\vec{\gamma}))$ is equivalent to $[\tau(\mathsf{it}(\alpha)^{-1}(\vec{\delta}))]\in j([\tau(\vec{\gamma})])$. This, however, is equivalent to $\mathsf{it}(\alpha)^{-1}(\vec{\delta})\in\mathfrak{j}(\vec{\gamma})$ and hence also to $\vec{\delta}\in\mathsf{it}(\alpha)(\mathfrak{j}(\vec{\delta}))$.
\end{proof}

We recognize \cref{eq:jfrak_alpha_property} as the echo of the relation given in \cref{eq:symplectic_reelation_1}.
\begin{remark}
    The choice of $\mathsf{it}\in \ITg$ in \cref{eq:jfrak_alpha_property} is indeed inconsequential, since \cref{eq:itmap_equivalence,eq:jfrak_IA_invariance} imply that for all $\mathsf{it},\mathsf{it}^\prime\in \ITg$,
    \begin{equation}
        \mathsf{it}^\prime(\alpha)(\mathfrak{j}(\vec{\gamma}))=\mathsf{it}(\alpha)(\mathfrak{j}(\vec{\gamma})).
    \end{equation}
\end{remark}
\subsubsection{Schottky dualization in the decorated Schottky language}

Finally, we consider the analog of \cref{eq:jset_def,eq:jfrakset_def} in the decorated Schottky language introduced in \cref{sec:decorated_schottky_and_Rg}. For $(S,\vec{\gamma})\in\DMSG(\Sigma)$, consider the set
\begin{equation}\label{eq:jfrakset_decorated_def}
    \mathfrak{j}(S,\vec{\gamma})\coloneq\setbuilder{(S^\prime,\vec{\delta})\in \DMSG(\Sigma)}{\tau(\vec{\delta},S^\prime)=J_{2g}\cdot\tau(\vec{\gamma},S)}\subset \DMSG(\Sigma)
\end{equation}
of decorated marked Schottky groups \enquote{dual} to $(S,\vec{\gamma})$. Below, it will be shown that this $\mathfrak{j}$ has properties with respect to the action of $R_g$ that are reminiscent (i.e.\ echos) of relations holding in the symplectic group, with $\mathfrak{j}$ playing the role of $J_{2g}$.

We first note that the decoration $S^\prime$ of an element $(S^\prime,\vec{\delta})$ of $\mathfrak{j}(S,\vec{\gamma})$ is uniquely determined by $(S,\vec{\gamma})$. Explicitly, for the fixed standard log-branch in the Schottky period matrix formula,
\begin{equation}\label{eq:Sprime_in_terms_of_gamma_S}
    S^\prime = \mathrm{rhd}\!\left(- \Re(\tau(S,\vec{\gamma})^{-1}) \right),
\end{equation}
with the rhd-function defined in \cref{eq:def_rhd}. The above equation can be derived in the same way as \cref{eq:defect_explicit_formula}.
Accordingly, the period matrix for the undecorated $\vec{\delta}$ is then always
\begin{equation}\label{eq:tau_delta_in_terms_of_gamma_S}
    \tau(\vec{\delta}) = -\tau(S,\vec{\gamma})^{-1}-S^\prime.
\end{equation}
In fact, we note that
\begin{equation}\label{eq:jfrak_faktorization}
    \mathfrak{j}(S,\vec{\gamma})=\{S^\prime\}\times t(\vec{\delta}),
\end{equation}
with the set $t(\vec{\delta})$ defined as in point c) of \cref{thm:lemmas_AtoC}. Hence, the \enquote{dual} decorated Schottky group is determined up to transformations that leave the original Schottky period matrix formula invariant. Such transformations include simultaneous Möbius transformations of all generators (see \cref{eq:moebiusinv_pm}), and changes of marking by automorphisms in $\IAg^\pm$ (see \cref{thm:MSG_pm_IA_invariance}).

Moreover, we again have the symmetry property
\begin{equation}
    (S^\prime,\vec{\delta})\in\mathfrak{j}(S,\vec{\gamma})~\Leftrightarrow~(S,\vec{\gamma})\in\mathfrak{j}(S^\prime,\vec{\delta})
\end{equation}
and, as a consequence, note that the identity $\tau=J_{2g}^2\cdot\tau$ in $\mathcal{H}_g$ resembles
\begin{equation}
    (S,\vec{\gamma})\in\mathfrak{j}^2(S,\vec{\gamma}). 
\end{equation}
In addition, 
\begin{equation}\label{eq:jfraksquared_factorization}
    \mathfrak{j}^2(S,\vec{\gamma})=\{S\}\times t(\vec{\gamma}).
\end{equation}

\begin{proposition}\label{thm:jfrak_Rg_prop}
Under the action of $[B,\alpha]\in R_g$, we have
\begin{subequations}\label{eq:jfrak_Rg_prop}
    \begin{align}
    \mathfrak{j}\!\left([B,\alpha]\cdot(S,\vec{\gamma})\right) 
    &= [B_1,\mathsf{it}(\alpha)]\cdot\mathfrak{j}\!\left(\left[\leftindex[Y]^{\alpha^{-1}}{(B+B_\mathrm{def})}\right]\cdot(S,\vec{\gamma})\right),\label{eq:jfrak_Rg_prop1}\\
    &= [B_1,\mathsf{it}(\alpha)]\cdot\mathfrak{j}\!\left(\left[\leftindex[Y]^{\alpha^{-1}}{B}\right]\cdot(S+\Tilde{B}_\mathrm{def},\vec{\gamma})\right)\label{eq:jfrak_Rg_prop2},
\end{align}
\end{subequations}
where $B_{\mathrm{def}}=B_{\mathrm{def}}(\alpha,\vec{\gamma})$ and $B_1$ can be explicitly computed from $B,\alpha,S$ and $\tau(\vec{\gamma})$. 
\end{proposition}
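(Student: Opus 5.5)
The plan is to compute the period matrix of an arbitrary element of the left-hand side, rewrite the resulting symplectic word using \cref{eq:symplectic_reelation_1,eq:symplectic_reelation_2}, and then recognize each factor as the echo of an $R_g$-action or of $\mathfrak{j}$.

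\textbf{Step 1: rewrite the target period matrix.} Write $A\coloneq\psi_g(\alpha)$ and $B_\mathrm{def}=B_\mathrm{def}(\alpha,\vec{\gamma})$. Let $(S^\prime,\vec{\delta})\in\mathfrak{j}([B,\alpha]\cdot(S,\vec{\gamma}))$. By \cref{eq:jfrakset_decorated_def} and \cref{eq:tau_Rg_1},
\begin{equation*}
\tau(\vec{\delta},S^\prime)=J_{2g}\cdot\bigl(B_\mathrm{def}+T_BD_A\cdot\tau(S,\vec{\gamma})\bigr)=J_{2g}T_{B+B_\mathrm{def}}D_A\cdot\tau(S,\vec{\gamma}).
\end{equation*}
Set $C\coloneq\leftindex[Y]^{\alpha^{-1}}{(B+B_\mathrm{def})}=A^{-1}(B+B_\mathrm{def})A^{-\Transpose}$. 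Relation \cref{eq:symplectic_reelation_2} gives $T_{B+B_\mathrm{def}}D_A=D_AT_C$, and \cref{eq:symplectic_reelation_1} then gives
\begin{equation*}
J_{2g}T_{B+B_\mathrm{def}}D_A=D_{A^{-\Transpose}}J_{2g}T_C.
\end{equation*}
Since $[C]=[C,\mathrm{id}]$ has vanishing defect, we have $T_C\cdot\tau(S,\vec{\gamma})=\tau([C]\cdot(S,\vec{\gamma}))$ exactly. Hence $J_{2g}T_C\cdot\tau(S,\vec{\gamma})=\tau(S^{\prime\prime},\vec{\varepsilon})$ for any $(S^{\prime\prime},\vec{\varepsilon})\in\mathfrak{j}([C]\cdot(S,\vec{\gamma}))$. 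This set is nonempty and equals $\{S^{\prime\prime}\}\times t(\vec{\varepsilon})$ by \cref{eq:jfrak_faktorization}. Altogether,
\begin{equation*}
\tau(\vec{\delta},S^\prime)=D_{A^{-\Transpose}}\cdot\tau(S^{\prime\prime},\vec{\varepsilon}).
\end{equation*}

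\textbf{Step 2: realize $D_{A^{-\Transpose}}$ by an $R_g$-action.} By \cref{eq:it_property}, $\Psi_g(\mathsf{it}(\alpha))=D_{A^{-\Transpose}}$. Then \cref{thm:tau_Rg} gives
\begin{equation*}
\tau\bigl([B_1,\mathsf{it}(\alpha)]\cdot(S^{\prime\prime},\vec{\varepsilon})\bigr)=B_1+B_\mathrm{def}(\mathsf{it}(\alpha),\vec{\varepsilon})+D_{A^{-\Transpose}}\cdot\tau(S^{\prime\prime},\vec{\varepsilon}).
\end{equation*}
So I choose $B_1\coloneq-B_\mathrm{def}(\mathsf{it}(\alpha),\vec{\varepsilon})$, which by \cref{thm:defect_explicit} equals $\mathrm{rhd}\Re\bigl(\Psi_g(\mathsf{it}(\alpha))\cdot\tau(\vec{\varepsilon})\bigr)$. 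With this choice the two period matrices coincide.

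\textbf{Step 3: check that $B_1$ is well defined.} Everything $B_1$ depends on is computable from $B,\alpha,S,\tau(\vec{\gamma})$:
\begin{itemize}
\item $B_\mathrm{def}$ is computable by \cref{thm:defect_explicit}.
\item $\tau(\vec{\varepsilon})$ follows from \cref{eq:tau_delta_in_terms_of_gamma_S}, with $S^{\prime\prime}$ from \cref{eq:Sprime_in_terms_of_gamma_S}.
\end{itemize}
The key point is that $B_1$ is the same for all $\vec{\varepsilon}\in t(\vec{\varepsilon})$. This holds because, for fixed $\mathsf{it}(\alpha)$, the defect depends only on $\tau(\vec{\varepsilon})$, as is visible in the rhd formula. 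Independence of the choice of $\mathsf{it}$ up to $\IAg$ follows from \cref{eq:itmap_equivalence} together with \cref{eq:tau_IApm_invariance}.

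\textbf{Step 4: set equality.} The computation above shows that $[B_1,\mathsf{it}(\alpha)]$ maps $\mathfrak{j}([C]\cdot(S,\vec{\gamma}))$ into $\mathfrak{j}([B,\alpha]\cdot(S,\vec{\gamma}))$, because the period matrices match. For surjectivity, take $(S^\prime,\vec{\delta})$ in the left set and act with $[B_1,\mathsf{it}(\alpha)]^{-1}$. Both sets have the product form $\{\text{fixed }S\}\times t(\cdot)$, and the undecorated parts are related by $t(\mathsf{it}(\alpha)(\vec{\varepsilon}))=\mathsf{it}(\alpha)(t(\vec{\varepsilon}))$ (\cref{thm:lemmas_AtoC} c). Combined with \cref{thm:lemmas_AtoC} a), which upgrades the resulting integer-shift equality to an exact one, this yields equality of sets and proves \cref{eq:jfrak_Rg_prop1}.

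For \cref{eq:jfrak_Rg_prop2}, I expand $[C]\cdot(S,\vec{\gamma})=(S+\leftindex[Y]^{\alpha^{-1}}{B}+\leftindex[Y]^{\alpha^{-1}}{B_\mathrm{def}},\vec{\gamma})$. I then identify the defect contribution with $\Tilde{B}_\mathrm{def}$ absorbed into the decoration, as in \cref{eq:tau_Rg_2}. This is where I expect a convention check: the transpose in \cref{eq:def_defect_tilde} must match the $\leftindex[Y]^{\alpha^{-1}}{}$ action.

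The main obstacle is Step 3 together with the surjectivity part of Step 4: making sure a single $B_1$ works uniformly over the whole dual set.
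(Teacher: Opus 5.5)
Your proof is correct and is essentially the paper's argument run in the opposite direction. The paper starts from $(R,\vec{\varepsilon})$ in the left-hand set, moves $\Psi_g(\mathsf{it}(\alpha))$ back inside $\tau$ via \cref{thm:tau_Rg}, and obtains $B_1=\leftindex[Y]^{\mathsf{it}(\alpha)}{B_\mathrm{def}(\mathsf{it}(\alpha)^{-1},\vec{\varepsilon})}$; this agrees with your $-B_\mathrm{def}(\mathsf{it}(\alpha),\vec{\varepsilon})$ (your $\vec{\varepsilon}$ taken from the right-hand set) by the identity $\leftindex[Y]^{\beta}{B_\mathrm{def}(\beta^{-1},\beta(\vec{\varepsilon}))}=-B_\mathrm{def}(\beta,\vec{\varepsilon})$, and the paper's equivalence chain gives the reverse inclusion directly, so you do not need your unargued claim that $\mathfrak{j}([C]\cdot(S,\vec{\gamma}))$ is nonempty. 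Your convention worry for \cref{eq:jfrak_Rg_prop2} is well founded: both it and \cref{eq:tau_Rg_2} require $\Tilde{B}_\mathrm{def}=\psi_g(\alpha)^{-1}B_\mathrm{def}\,\psi_g(\alpha)^{-\Transpose}=\leftindex[Y]^{\alpha^{-1}}{B_\mathrm{def}}$ rather than the literal \cref{eq:def_defect_tilde}, and with that reading \cref{eq:jfrak_Rg_prop2} is just \cref{eq:jfrak_Rg_prop1} rewritten, since both arguments of $\mathfrak{j}$ then equal $(S+\leftindex[Y]^{\alpha^{-1}}{B}+\leftindex[Y]^{\alpha^{-1}}{B_\mathrm{def}},\vec{\gamma})$.
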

\begin{proof}
    By definition~\eqref{eq:jfrakset_def} of the set $\mathfrak{j}$,
    \begin{equation}
        (\vec{\varepsilon},R)\in\mathfrak{j}\!\left([B,\alpha]\cdot(S,\vec{\gamma})\right)~
        \Leftrightarrow~\tau(\vec{\varepsilon},R)=J_{2g}\cdot\tau\!\left([B,\alpha]\cdot(S,\vec{\gamma})\right).
    \end{equation}
    First, we use $[B,\alpha]=[0,\alpha][\leftindex[Y]^{\alpha^{-1}}{B}]$ (cf.\ \cref{eq:Rg_multiplication_rule}). Then, employing \cref{thm:tau_Rg} and the definition of $\Phi_g$ given in \cref{eq:Phi_g_homom} yields
    \begin{equation}
        \tau([B,\alpha]\cdot(S,\vec{\gamma}))=\Psi_g(\alpha)T_{\leftindex[Y]^{\alpha^{-1}}{(B+B_\mathrm{def})}}\cdot \tau(S,\vec{\gamma}),
    \end{equation}
    such that the equation on the right side of the above equivalence can be written as 
    \begin{equation}
        \tau(\vec{\varepsilon},R)=J_{2g}\Psi_g(\alpha)T_{\leftindex[Y]^{\alpha^{-1}}{(B+B_\mathrm{def})}}\cdot \tau(S,\vec{\gamma}).
    \end{equation}
    Now recall from \cref{sec:symplectic_groups} that the inner automorphism of the symplectic group induced by $J_{2g}$ is the inverse transpose. Combining this with the defining property of the $\mathsf{it}$-map, we see $J_{2g}\Psi_g(\alpha)=\Psi_g(\mathsf{it}(\alpha))J_{2g}$. Thus, the $\Psi_g$-factor can be moved to the left and we obtain that $(\vec{\varepsilon},R)\in\mathfrak{j}\!\left([B,\alpha]\cdot(S,\vec{\gamma})\right)$ is equivalent to 
    \begin{equation}
        \Psi_g(\mathsf{it}(\alpha))^{-1}\cdot\tau(R,\vec{\varepsilon}) = J_{2g}T_{\leftindex[Y]^{\alpha^{-1}}{(B+B_\mathrm{def})}}\cdot \tau(S,\vec{\gamma}).
    \end{equation}
    In order to now write the $\Psi_g$-factor inside the period matrix function on the left-hand side, we use again \cref{thm:tau_Rg} and pick up a defect. At the same time, we can do the same with the $T$-factor on the right-hand side (which does not create an additional defect):
    \begin{equation}
        \tau\!\left([-B_{\mathrm{def}}(\mathsf{it}(\alpha)^{-1},\vec{\varepsilon}),\mathsf{it}(\alpha)^{-1}]\cdot (R,\vec{\varepsilon})\right) = J_{2g}\cdot\tau\!\left(\left[\leftindex[Y]^{\alpha^{-1}}{(B+B_\mathrm{def})}\right]\cdot(S,\vec{\gamma})\right).
    \end{equation}
    Call $r$ the element of $R_g$ acting inside the period matrix function on the left-hand side of the above equation. Then, by the definition~\eqref{eq:jfrakset_decorated_def}, the above equation is equivalent to
    \begin{equation}
        r\cdot (R,\vec{\varepsilon})\in\mathfrak{j}\!\left(\left[\leftindex[Y]^{\alpha^{-1}}{(B+B_\mathrm{def})}\right]\cdot(S,\vec{\gamma})\right),
    \end{equation}
    which can be rewritten as 
    \begin{equation}
        (R,\vec{\varepsilon})\in r^{-1}\cdot\mathfrak{j}\!\left(\left[\leftindex[Y]^{\alpha^{-1}}{(B+B_\mathrm{def})}\right]\cdot(S,\vec{\gamma})\right).
    \end{equation}
    The inverse of $r$ is $r^{-1}=\left[\leftindex[Y]^{\mathsf{it}(\alpha)}{B_{\mathrm{def}}(\mathsf{it}(\alpha)^{-1},\vec{\varepsilon})},\mathsf{it}(\alpha)\right]$, which follows from \cref{eq:Rg_multiplication_rule}. This shows \cref{eq:jfrak_Rg_prop1} with $B_1=\leftindex[Y]^{\mathsf{it}(\alpha)}{B_{\mathrm{def}}(\mathsf{it}(\alpha)^{-1},\vec{\varepsilon})}$. \cref{eq:jfrak_Rg_prop2} can be derived analogously. It is left to show that $B_1$ can be computed from $B,\alpha,S$ and $\tau(\vec{\gamma})$. But this follows from \cref{thm:defect_explicit} and \cref{eq:Sprime_in_terms_of_gamma_S,eq:tau_delta_in_terms_of_gamma_S}.
    The explicit expression of $B_1$ is rather unwieldy and therefore relegated to \cref{app:B1explicit}.
\end{proof}

\cref{eq:jfrak_Rg_prop} is the echo (up to defects) of the relations~\eqref{eq:symplectic_reelation_1} and~\eqref{eq:symplectic_reelation_2}.

\begin{remark}Note also that the choice of $\mathsf{it}\in \ITg$ in \cref{eq:jfrak_Rg_prop} is again inconsequential because we have
\begin{equation}
    [0,\varphi]\cdot \mathfrak{j}(S,\vec{\gamma})=\mathfrak{j}(S,\vec{\gamma})
\end{equation}
for all $\varphi\in\IAg$.
    This follows directly from the definition \cref{eq:jfrakset_def}. Alternatively, simply note that \cref{eq:jfrak_faktorization} implies $[0,\varphi]\cdot\mathfrak{j}(S,\vec{\gamma})=\{S^\prime\}\times\varphi(t(\vec{\delta}))$. Then use point c) of \cref{thm:lemmas_AtoC} and again \cref{eq:jfrak_faktorization}.
\end{remark}

Finally, we recognize the echo of \cref{eq:symplectic_reelation_3}:
\begin{proposition}
    If we write $\mathfrak{j}^2(S,\vec{\gamma})\coloneq \bigcup_{(S^\prime,\vec{\delta})\in\mathfrak{j}(S,\vec{\gamma})}\mathfrak{j}(S^\prime,\vec{\delta})$, then
    \begin{equation}
        \mathfrak{j}^2\!\left([B,\alpha]\cdot (S,\vec{\gamma})\right) = [B,\alpha]\cdot\mathfrak{j}^2(S,\vec{\gamma}).
    \end{equation}
\end{proposition}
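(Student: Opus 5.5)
The plan is to reduce everything to the factorization \cref{eq:jfraksquared_factorization}, $\mathfrak{j}^2(S,\vec{\gamma})=\{S\}\times t(\vec{\gamma})$, applied on both sides. Writing $(S_1,\vec{\gamma}_1)\coloneq[B,\alpha]\cdot(S,\vec{\gamma})=(\leftindex[Y]^\alpha{S}+B,\alpha(\vec{\gamma}))$ by \cref{eq:Rg_action_on_DMSG}, the left-hand side becomes
\begin{equation*}
\mathfrak{j}^2\!\left([B,\alpha]\cdot(S,\vec{\gamma})\right)=\{\leftindex[Y]^\alpha{S}+B\}\times t(\alpha(\vec{\gamma})).
\end{equation*}
For the right-hand side, we act elementwise with $[B,\alpha]$ on $\{S\}\times t(\vec{\gamma})$. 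Each element $(S,\vec{\delta})$ with $\vec{\delta}\in t(\vec{\gamma})$ is sent to $(\leftindex[Y]^\alpha{S}+B,\alpha(\vec{\delta}))$, since the new decoration depends only on $S$, $B$ and $\alpha$ and not on $\vec{\delta}$. Hence
\begin{equation*}
[B,\alpha]\cdot\mathfrak{j}^2(S,\vec{\gamma})=\{\leftindex[Y]^\alpha{S}+B\}\times\alpha(t(\vec{\gamma})).
\end{equation*}
Point c) of \cref{thm:lemmas_AtoC} gives $t(\alpha(\vec{\gamma}))=\alpha(t(\vec{\gamma}))$, so the two sets coincide.

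The step needing care is the justification of \cref{eq:jfraksquared_factorization} itself, since it is stated rather than proved above. I would derive it from \cref{eq:jfrak_faktorization}. Take $(S',\vec{\delta})\in\mathfrak{j}(S,\vec{\gamma})$ and then $(S'',\vec{\varepsilon})\in\mathfrak{j}(S',\vec{\delta})$. Then
\begin{equation*}
\tau(S'',\vec{\varepsilon})=J_{2g}\cdot\tau(S',\vec{\delta})=J_{2g}^2\cdot\tau(S,\vec{\gamma})=\tau(S,\vec{\gamma}),
\end{equation*}
using that $-I_{2g}$ acts trivially on $\mathcal{H}_g$. Therefore $\tau(\vec{\varepsilon})-\tau(\vec{\gamma})=S-S''$ is an integer symmetric matrix. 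Point a) of \cref{thm:lemmas_AtoC} then forces $\tau(\vec{\varepsilon})=\tau(\vec{\gamma})$ and $S''=S$, which gives the inclusion $\subseteq$.

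For the reverse inclusion, take any $\vec{\varepsilon}\in t(\vec{\gamma})$ and fix one $(S',\vec{\delta})\in\mathfrak{j}(S,\vec{\gamma})$. This set is nonempty by the footnote argument preceding \cref{eq:jfrakset_def}, lifted to the decorated setting via \cref{eq:Sprime_in_terms_of_gamma_S}. Then $\tau(S,\vec{\varepsilon})=\tau(S,\vec{\gamma})=J_{2g}\cdot\tau(S',\vec{\delta})$, so $(S,\vec{\varepsilon})\in\mathfrak{j}(S',\vec{\delta})$. I expect this nonemptiness to be the only delicate point, since it relies on the retrosection heuristic rather than a proven statement. I would simply invoke it as the paper does.
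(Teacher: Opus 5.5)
Your argument is correct and matches the paper's proof: both sides are reduced, via \cref{eq:jfraksquared_factorization} and point c) of \cref{thm:lemmas_AtoC}, to $\{\psi_g(\alpha)S\psi_g(\alpha)^\Transpose+B\}\times\alpha(t(\vec{\gamma}))$. Your derivation of \cref{eq:jfraksquared_factorization}, which the paper only asserts, is also sound, and you rightly identify the nonemptiness of $\mathfrak{j}(S,\vec{\gamma})$ for every decoration as the one input that rests on the paper's retrosection heuristic rather than on a proof.
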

\begin{proof}
    Using \cref{eq:jfraksquared_factorization} and point c) of \cref{thm:lemmas_AtoC} yields that both sides of the above equation are equal to $\{\leftindex[W]^\alpha{S}+B\}\times\alpha(t(\vec{\gamma}))$
\end{proof}

Unfortunately, the above construction does not make any statement about how one may obtain an explicit element of $\mathfrak{j}(S,\vec{\gamma})$, meaning they have to be found numerically. For an example, see \cref{sec:numerical_element_of_jfrak}.

\section{Example}\label{sec:example}

In the following section we are going to provide numerical examples at genus three and two of the mechanisms and algorithms described in the previous \cref{sec:translation}. In \cref{sec:numericalgeometry}, we start by setting a geometry in Schottky and Jacobi formulation. We continue in \cref{sec:numerical_sympl_transf} by explicitly performing an example symplectic transformation, while \cref{sec:numerical_check_IA_com} is devoted to a numerically checking invariance of the period matrix under a transformation from $\mathrm{IA}_3^\pm$. Finally, in \cref{sec:numerical_element_of_jfrak} we numerically perform the Schottky dualization and thereby construct an explicit element of $\mathfrak{j}(\vec{\gamma})$ at genus two.

The current calculation takes Schottky words (i.e.~Schottky group elements written in terms of generators) up to length seven into account, where the computation of a period matrix takes about twelve seconds on a standard computer. Naturally, higher length of Schottky words will improve precision and increase computational time. For example, taking Schottky words up to length 8 into account decreases, for the examples below, the numerical imprecision in \cref{eq:numerical_example_defect,eq:numerical_example_IA_difference} by at least one order of magnitude while increasing computational time to about one minute per period matrix.

\subsection{A genus-three geometry in Schottky and Jacobi language}\label{sec:numericalgeometry}
Let the genus-three marked Schottky group be defined by 
\begin{gather}
\begin{aligned}\label{eq:numerical_example_marked_Schottky_group}
    \gamma_1 &= \begin{pmatrix}
         4.623 - 1.849\iunit & 12.943 + 4.465\iunit\\
               - 0.925\iunit &  1.849 - 1.849\iunit 
    \end{pmatrix}, \hspace{1em}
    \gamma_2 = \begin{pmatrix}
         4.352 + 2.250\iunit & 37.748 - 5.200\iunit\\
         0.701               &  4.635 - 3.234\iunit    
    \end{pmatrix},\\
    \gamma_3 &= \begin{pmatrix}
        -1.253 - 9.240\iunit & 38.253 + 67.510\iunit\\
         0.988 + 0.571\iunit & -9.413 - 1.246\iunit
    \end{pmatrix}.
\end{aligned}
\end{gather}
The Möbius transformations represented by the above Schottky generators are specified by the following fixed points and multipliers
\begin{equation}
{
\renewcommand{\arraystretch}{1.2}
\begin{tabular}{lll}
    $P_1= 2.075+4.874\iunit$, & $P_1^\prime=-2.075-1.874\iunit$, & $\lambda_1=0.009+0.016\iunit$,\\
    $P_2= 6.053+3.194\iunit$, & $P_2^\prime=-6.456+4.633\iunit$, & $\lambda_2=0.012+0.003\iunit$,\\
    $P_3=-4.985-6.520\iunit$, & $P_3^\prime= 7.675-3.122\iunit$, & $\lambda_3=0.000-0.004\iunit$,\\
\end{tabular}
}
\end{equation}
and a possible corresponding classical Schottky cover (that is, a choice of non-overlapping Schottky circles) is shown in \cref{fig:numerical_example_Schottky_cover}.
\begin{figure}
    \centering
    \includegraphics[width=0.45\textwidth]{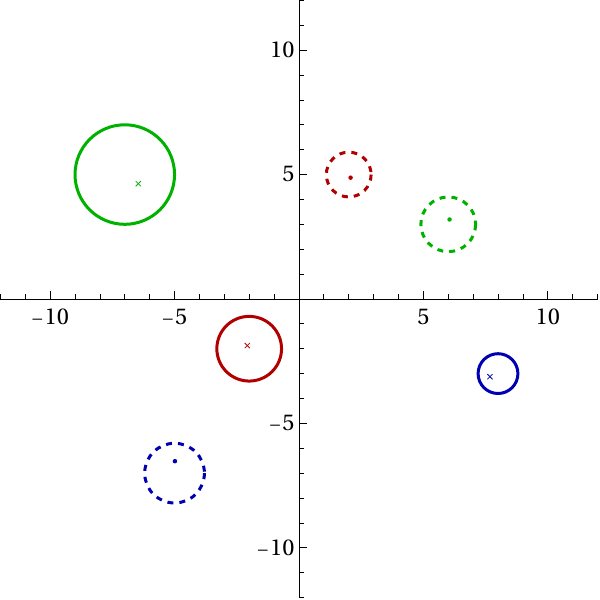}
    \caption{Genus-three Schottky cover that yields the marked Schottky group given in \cref{eq:numerical_example_marked_Schottky_group}. The colors red, green and blue correspond to circle pairs 1,2 and 3. Crosses and dots denote repulsive and attractive fixed points (cf.~\cref{eq:moebiustrans_def_fixpoints}), respectively.}\label{fig:numerical_example_Schottky_cover}
\end{figure}
In this figure, the centers and radii for the Schottky circles read\footnote{As discussed in Ref.~\cite[Sect.\ 5.1]{Berger:2025}, radii and centers of Schottky circles do not completely fix the form of corresponding Möbius generators. In order to reach the representation \cref{eq:numerical_example_marked_Schottky_group}, the following parameters have been supplemented (in the notation of the reference): $\alpha_1=\pi,\,b_1=0$, $\alpha_2=0,\,b_2=\frac{\iunit+1}{5}$ and $\alpha_3=-\frac{\pi}{3},\,b_3=\frac{\iunit}{2}$.}
\begin{equation}
{
\renewcommand{\arraystretch}{1.2}
\begin{tabular}{llll}
    $c_1=-2-2\iunit$, & $c_1^\prime= 2+5\iunit$, & $r_1=1.3$, & $r_1^\prime=0.9$,\\
    $c_2=-7+5\iunit$, & $c_2^\prime= 6+3\iunit$, & $r_2=2.0$, & $r_2^\prime=1.1$,\\
    $c_3= 8-3\iunit$, & $c_3^\prime=-5-7\iunit$, & $r_3=0.8$, & $r_3^\prime=1.2$.\\
\end{tabular}
}
\end{equation}
Employing formula \eqref{eq:pm_in_schottky_language}, the period matrix corresponding to this (undecorated) marked Schottky group is computed to be
\begin{equation}\label{numerical_example_pm}
    \tau\coloneq\tau(\vec{\gamma}) = \begin{pmatrix} 
         0.175 + 0.648\iunit & -0.314+0.139 \iunit &  0.134-0.143 \iunit\\
        -0.314 + 0.139\iunit &  0.022+0.715 \iunit & -0.029-0.204 \iunit\\
         0.134 - 0.143\iunit & -0.029-0.204 \iunit & -0.245+0.864 \iunit
    \end{pmatrix},
\end{equation}
where we have chosen the branches of the logarithm according to \cref{eq:standard_logbranch_pm_entries}. 

\subsection{Performing a symplectic transformation}\label{sec:numerical_sympl_transf}

Let us now consider a particular example of a symplectic transformation, its action on the period matrix and the corresponding transformation in the Schottky picture. Choose matrices 
\begin{equation}\label{eq:ABmatrices}
        A=\left(\begin{matrix}
            1 &  2 & 0 \\
            0 &  0 & 1 \\
            0 & -1 & 0
	\end{matrix}\right)\quad\text{and}\quad
        B=\left(\begin{matrix}
             1 & -2 &  3 \\
            -2 &  0 &  0 \\
             3 &  0 & -4
        \end{matrix}\right),
\end{equation}
giving rise to symplectic transformations $D_A$ and $T_B$ as in \cref{eqn:Sp2generators}. The combined symplectic transformation to be applied reads
\begin{equation}
    M = T_B D_A = 
    \left(\begin{matrix}
        1 &  2 & 0 &  7 & -3 & -2\\
        0 &  0 & 1 & -2 &  0 &  0\\
        0 & -1 & 0 & -5 &  4 &  0\\
        0 &  0 & 0 &  1 &  0 &  0\\
        0 &  0 & 0 &  0 &  0 &  1\\
        0 &  0 & 0 &  2 & -1 &  0
    \end{matrix}\right)\,.
\end{equation}
Once acted upon the period matrix from \cref{numerical_example_pm} yields
\begin{equation}
    M\cdot\tau = \begin{pmatrix}
         0.008 + 4.066\iunit & -1.924 - 0.551\iunit &  3.270 - 1.570\iunit\\
        -1.924 - 0.551\iunit & -0.245 + 0.864\iunit &  0.029 + 0.204\iunit\\
         3.270 - 1.570\iunit &  0.029 + 0.204\iunit & -3.978 + 0.715\iunit
    \end{pmatrix}\,.
\end{equation}
Several entries of the resulting period matrix above are located on a Riemann sheets different from the one specified in \cref{eq:standard_logbranch_pm_entries}, which indicates an existing defect (see \cref{eq:defect_explicit_formula}).

Accordingly, in order to implement the same transformation on the level of the Schottky group, one first needs to identify a preimage of the transformation $M$ under the homomorphism $\Phi_3$ defined in \cref{eq:Phi_g_homom}. A suitable element $r\in R_3$ is readily found to be
\begin{equation}\label{eq:rTransformation}
    r = \left[B,\Nper{(2\,3)}\Ninv{2}(\Nrbp{1}{2})^2\right]\,,
\end{equation}
where the matrix $B$ has been defined in \cref{eq:ABmatrices} and the Nielsen transformations have been introduced in \cref{def:nielsen_transformation}.
In order to evaluate the action of transformation \eqref{eq:rTransformation} on the marked Schottky group $\vec{\gamma}$ in \cref{eq:numerical_example_marked_Schottky_group}, we assign a vanishing decoration to it as a reference. Applying \eqref{eq:rTransformation} leads then to 
\begin{subequations}
\begin{equation}
    r\cdot (0,\vec{\gamma}) = (B, \vec{\gamma}^\prime)
\end{equation}
with
\begin{gather}
\begin{aligned}
    \gamma^\prime_1 = \gamma_1\gamma_2^2 &= \begin{pmatrix}
        300.465 + 18.337\iunit & 2024.638 - 1273.199\iunit\\
         25.109 - 50.198\iunit &  -70.390 -  440.356\iunit
    \end{pmatrix},\\
    \gamma^\prime_2 = \gamma_3 &= \begin{pmatrix}
        -1.253 - 9.240\iunit & 38.253 + 67.510\iunit\\
         0.988 + 0.571\iunit & -9.413 - 1.246\iunit
    \end{pmatrix},\\
    \gamma^\prime_3 = \gamma_2^{-1} &= \begin{pmatrix}
         4.635 - 3.234\iunit & -37.748 + 5.200\iunit\\
        -0.701               &	 4.352 + 2.250\iunit
    \end{pmatrix}\,,
\end{aligned}   
\end{gather}
\end{subequations}
where the analytic form of the primed generators can be inferred from reading the Nielsen transformations in \eqref{eq:rTransformation} from right to left.  

Computing the period matrix corresponding to this decorated Schottky group via \cref{eq:pm_decorated} yields
\begin{equation}
    \tau^\prime\coloneq\tau(B,\vec{\gamma}^\prime) = \begin{pmatrix}
         1.009 + 4.066\iunit & -1.924 - 0.551\iunit &  3.270 - 1.570\iunit\\
        -1.924 - 0.551\iunit & -0.245 + 0.864\iunit &  0.029 + 0.204\iunit\\
         3.270 - 1.570\iunit &  0.029 + 0.204\iunit & -3.978 + 0.715\iunit
    \end{pmatrix},
\end{equation}
and via \cref{eq:def_defect}, we observe the defect $B_\text{def}(\Nper{(2\,3)}\Ninv{2}(\Nrbp{1}{2})^2,\vec{\gamma})$:
\begin{equation}\label{eq:numerical_example_defect}
   \tau^\prime - M\cdot\tau 
   = \begin{pmatrix}
         1 & 0 & 0\\
         0 & 0 & 0\\
         0 & 0 & 0
 \end{pmatrix} + \cO(10^{-4}),
\end{equation}
which is, up to a numerical imprecision, a symmetric integer matrix as expected.

\subsection{Applying an element of $\IAg^\pm$}\label{sec:numerical_check_IA_com}

Finally, we would like to provide and example showing that elements of $\IAg^\pm$ leave the period matrix invariant. Consider the change of marking
\begin{equation}
    \varphi\colon (\gamma_1,\gamma_2,\gamma_3) \mapsto (\gamma_1^\pprime,\gamma_2^\pprime,\gamma_3^\pprime)\coloneq(\gamma_1^{-1},\gamma_1\gamma_2^{-1}\gamma_1^{-1},\gamma_3^{-1}\gamma_2^{-1}\gamma_1\gamma_2\gamma_1^{-1}).
\end{equation}
By counting the powers of $\gamma_j$ in $\gamma_i^\pprime$ (cf.\ \cref{appx:construction_of_psi_g}), one can convince oneself that $\psi_3(\varphi)= -I_3$ and hence indeed $\varphi\in\mathrm{IA}_3^\pm$. Applying the change of marking $\varphi$ to \cref{eq:numerical_example_marked_Schottky_group}, the new Schottky generators read
\begin{gather}
\begin{aligned}
    \gamma^\pprime_1 &= \begin{pmatrix}
        1.849 - 1.849\iunit & -12.943 - 4.465\iunit&\\
                0.925\iunit &   4.623 - 1.849\iunit&
    \end{pmatrix},\\
    \gamma^\pprime_2 &= \begin{pmatrix}
        -169.119 - 170.700\iunit & -486.438 + 1193.894\iunit&\\
         -42.121 +  18.126\iunit &  178.106 +  169.716\iunit&
    \end{pmatrix},\\
    \gamma^\pprime_3 &= \begin{pmatrix}
        -8222.550 - 13098.841\iunit & -48129.096 + 65202.558\iunit&\\
         -320.785 -  1820.102\iunit &  -8262.233 +  5052.186\iunit&
    \end{pmatrix}.
\end{aligned}
\end{gather}  
The new period matrix $\tau^\pprime$ calculated from $\vec{\gamma}^\pprime$ agrees with the original one up to a numerical imprecision of at most $\cO(10^{-3})$:
\begin{equation}\label{eq:numerical_example_IA_difference}
    \tau-\tau^\pprime = \cO(10^{-3}).
\end{equation}

\subsection{Computing an element of \texorpdfstring{$\mathfrak{j}(\vec{\gamma})$}{j(gamma)}}\label{sec:numerical_element_of_jfrak}

This computation is the numerically most challenging part, because it essentially involves inverting the Schottky period matrix formula \eqref{eq:pm_in_schottky_language}. Practically, solutions can be found via a minimization procedure:

For a decorated marked Schottky group $(S,\vec{\gamma})$, consider the \emph{cost function}
\begin{equation}\label{eq:numerical_cost_function}
    F_{(S,\vec{\gamma})} \colon \DMSG(\Sigma) \rightarrow\mathbb{R}_{\geq0},\hspace{1em} (S^\prime,\vec{\delta}) \mapsto \left|\!\left| \tau(S^\prime,\vec{\delta}) + \tau(S,\vec{\gamma})^{-1}\right|\!\right|,
\end{equation}
where $\left|\!\left|\cdot\right|\!\right|$ denotes any matrix norm (e.g.\ the max norm). Then,
\begin{equation}
    (S^\prime, \vec{\delta}) \in \mathfrak{j}(S,\vec{\gamma})\hspace{1em}\Leftrightarrow\hspace{1em} F_{(S,\vec{\gamma})}(S^\prime,\vec{\delta}) = 0.
\end{equation}
Approximations to elements of $\mathfrak{j}(S,\vec{\gamma})$ can thus be found by numerically minimizing $F_{(S,\vec{\gamma})}$ as a function of the fixed points and multipliers of $\vec{\delta}$. In order to obtain an element of $\mathfrak{j}(\vec{\gamma})$, one can try to find zeros of $F_{(S,\vec{\gamma})}$ for an arbitrary decoration $S$.

This is difficult in practice since the Schottky period matrix formula becomes very complicated for even small lengths of the Schottky words. Additionally, the cost function is highly non-convex due to its invariance under Möbius transformations and IA-automorphisms. As a consequence, we are currently only able to obtain reliable results at genus two. 

Take the first to generators from \cref{eq:numerical_example_marked_Schottky_group}. The associated period matrix is 
\begin{equation}\label{eq:numerical_j_genustwo_pm}
    \tau((\gamma_1,\gamma_2)) = \begin{pmatrix}
         0.175 + 0.650 \iunit & -0.312 + 0.141 \iunit \\
        -0.312 + 0.141 \iunit &  0.024 + 0.715 \iunit \\
    \end{pmatrix}.
\end{equation}
We searched for minima of $F_{(S,(\gamma_1,\gamma_2)))}$ for decorations $S\in\{-1,0,1\}^{\matrixsize{2}{2}}$ and found
\begin{equation}\label{eq:numerical_delta_12}
    \delta_1 = \begin{pmatrix}
        0.932 - 2.754 \iunit & 0.076 - 0.135 \iunit \\
        6.317 + 2.601 \iunit & 0.414 + 0.524 \iunit \\
    \end{pmatrix},\hspace{1em}
    \delta_2 = \begin{pmatrix}
        -3.877 +  0.871 \iunit & 0.756 - 1.089 \iunit \\
        -7.400 - 20.880 \iunit & 6.281 + 3.403 \iunit \\
    \end{pmatrix},
\end{equation}
with $S=\smalltwobytwo{1}{0}{0}{-1}$ and $S^\prime=\smalltwobytwo{-1}{0}{0}{1}$. The period matrix associated to the approximate \enquote{dual} generators $\delta_1,\delta_2$ is
\begin{equation}
    \tau((\delta_1,\delta_2)) = \begin{pmatrix}
        0.368 + 0.314 \iunit &  0.194 - 0.050 \iunit \\
        0.194 - 0.050 \iunit & -0.389 + 0.492 \iunit \\
    \end{pmatrix}
\end{equation}
and the value of the cost function~\eqref{eq:numerical_cost_function} for the above parameters (at Schottky word length seven\footnote{The cost function was minimized at word length two, and its value at $\delta_1,\delta_2$ from~\eqref{eq:numerical_delta_12} at word length two is of the order $10^{-8}$.}) is $1.24\cdot 10^{-3}$, or, put differently,
\begin{equation}
    \tau(S^\prime,(\delta_1,\delta_2)) = J_4\cdot \tau(S,\vec{\gamma})+ \cO(10^{-3}).
\end{equation}

As a last validity check within the dualization context, let us consider \cref{thm:jfrak_Rg_prop}. Take the change of marking $\alpha=\Nrbp{1}{2}$ and note that we can write $\mathsf{it}(\alpha)=\Ninv{1}\Nrbp{2}{1}\Ninv{1}$.
We apply the element $[0,\alpha]\in R_2$ to the decorated Schottky group $(S,(\gamma_1,\gamma_2))$ with $S=\smalltwobytwo{1}{0}{0}{-1}$ to obtain 
\begin{equation}\label{eq:numerical_j_com_dmSG}
    [0,\alpha]\cdot (S,(\gamma_1,\gamma_2)) = \left(\smalltwobytwo{0}{-1}{-1}{-1},(\gamma_1\gamma_2,\gamma_2)\right).
\end{equation}
The period matrix associated to $(\gamma_1\gamma_2,\gamma_2)$ reads
\begin{equation}
    \tau((\gamma_1\gamma_2,\gamma_2)) = \begin{pmatrix}
        -0.425 + 1.647 \iunit & -0.288 + 0.855 \iunit \\
        -0.288 + 0.855 \iunit &  0.024 + 0.715 \iunit \\
    \end{pmatrix}.
\end{equation}
For the period matrix \cref{eq:numerical_j_genustwo_pm} and the decorations $S,S^\prime$ from above, we find that both defect terms from \cref{eq:jfrak_Rg_prop1} vanish. Hence, we expect the decorated Schottky group 
\begin{equation}
    [0,\mathsf{it}(\alpha)]\cdot (S^\prime,(\delta_1,\delta_2)) = \left(\smalltwobytwo{-1}{1}{1}{0},(\delta_1,\delta_2\delta_1^{-1})\right)
\end{equation}
to be dual to the one from \cref{eq:numerical_j_com_dmSG}.
The period matrix associated to $(\delta_1,\delta_2\delta_1^{-1})$ is
\begin{equation}
	\tau((\delta_1,\delta_2\delta_1^{-1})) = \begin{pmatrix}
         0.368 + 0.314 \iunit & -0.174 - 0.364 \iunit \\
        -0.174 - 0.364 \iunit & -0.409 + 0.907 \iunit \\
    \end{pmatrix}.
\end{equation}
And indeed, we find
\begin{equation}
    \tau\big([0,\mathsf{it}(\alpha)]\cdot (S^\prime,(\delta_1,\delta_2))\big) = J_4\cdot \tau\big([0,\alpha]\cdot (S,(\gamma_1,\gamma_2))\big) + \cO(10^{-3}).
\end{equation}

\section{Open questions}\label{sec:openquestions}

In the context of the technical considerations of the current article, there are several open questions:

\begin{itemize}
	\item Starting from a classical marked Schottky group $\vec{\gamma}$, it is currently unclear whether a change of marking $\vec{\gamma}\mapsto\alpha(\vec{\gamma})$ preserves the classical property. Likewise, the precise nature of the Schottky groups dual to $\vec{\gamma}$ (in the sense of \cref{sec:schottky_dualization}), for example whether they, too, are classical, remains elusive.
    \item In \cref{eq:moebiusinv_pm} and \cref{thm:MSG_pm_IA_invariance}, it was discussed that the formula \eqref{eq:pm_in_schottky_language} for the calculation of the period matrix is invariant under simultaneous conjugation of the Schottky generators with a Möbius transformation, and changes of marking in $\IAg^\pm$. It is, however, not clear, whether these two families of transformations together stretch out the full invariance group of \cref{eq:pm_in_schottky_language}.
	\item Validity of the (numerical) Proposition \ref{thm:com} has been verified for numerous scenarios at different genera. The only obstacle for converting it to a Theorem and simultaneously the Evidence to a Proof is the calculation of the period matrix when the first Schottky generator has been replaced by the product of the first two generators, which we cannot perform analytically currently. The reason for the analytic problems is the fact that the fixed points and multipliers of a product of two Schottky generators can in general not be conveniently expressed in terms of the fixed points and multipliers of the two factors. Since the period matrix formula depends explicitly on these values, it is not easy to treat analytically.
    While the algebraic/representation theoretic approach from \cref{sec:ChangeOfMarking} reduces the problem to three integer degrees of freedom, one has to resort to numerics to find the correct values for them. It would be nice to find an additional constraint which fixes the remaining degrees of freedom for all genera without the use of numerical tests.
  	\item Similarly to the last item, we are not aware of a non-numerical algorithm that generates an element of $\mathfrak{j}(\vec{\gamma})$: so far solutions have been generated based on a Newton-type procedure.
	\item One might ask whether a pair of marked Schottky groups, $\vec{\gamma}$ and $\vec{\delta}$, which are dual in the sense of \cref{sec:schottky_dualization}, is related to the fundamental group of $\Sigma$. This question is motivated by the intuition that $\vec{\gamma}$ corresponds to the B-cycles, and hence $\vec{\delta}$ to the A-cycles. Furthermore, one may then ask if automorphisms of the fundamental group translate to some sort of \enquote{change of marking} of the combined $(\vec{\delta},\vec{\gamma})$. This, in turn, may be used to implement symplectic transformations of the period matrix in the same fashion as in \cref{thm:com}. Through the inclusion of the dual Schottky group $\vec{\delta}$, this may allow a direct implementation of the $J_{2g}$ transformation.
	\item In addition to the point above, one may ask whether the methods studied in the current article have analogs in the theory of \emph{Fuchsian models}. These describe a hyperbolic Riemann surface as the quotient of the upper half-plane by a \emph{Fuchsian group}, which is a discrete subgroup of $\mathrm{PSL}(2,\mathbb{R})$ and also isomorphic to the fundamental group of the surface. It would be interesting to know whether its automorphisms directly correspond to symplectic transformations of the period matrix and, if so, whether the full symplectic group is attainable in this fashion.
 
\end{itemize}
On a more general level, in particular in the physics context, there is a further series of follow-up questions:
\begin{itemize}
	\item While we have translated auxiliary symmetries between the Schottky uniformization and the Jacobi parametrization, we have not touched upon the formulation of the geometry as an algebraic curve. In particular considering the appearance of the formulation as an algebraic curve in various physics applications, most prominently in Feynman integrals, it would be marvelous to have full control in translating between all three languages and thus be able to encompass the auxiliary symmetries of formulating the geometry of the Riemann surface by expressing it as an algebraic curve. Refs.~\cite{Celik_2023,fairchild2024crossingtranscendentaldivideschottky} provide a good starting point.
	
    \item Our results should allow translating the transformation behavior of special functions under higher modular groups to relations between those functions when computed in the Schottky language at different Schottky markings. The most important example for such special functions are the Riemann theta functions. 

	\item In particular when calculating Feynman diagrams for scattering amplitudes, efficient numerical evaluation of special functions is crucial. The Schottky uniformization allows the numerical access to polylogarithms defined on Riemann surfaces of higher genera. However, its implementation as automorphic forms is slow (but reliable). Relating the Schottky uniformization to the Jacobi parametrization should allow transferring parts of the Schottky calculation into the language of Theta functions, whose evaluation has been explored and improved in the last years substantially~\cite{Agostini_2021,SWIERCZEWSKI2016263}.
\end{itemize}
%


\subsection*{Acknowledgments}
We are grateful to Konstantin Baune and Yannis Moeckli for various discussions, work on related projects and comments on a draft version of the article. The work of both authors is partially supported by the Swiss National Science Foundation through the NCCR SwissMAP.


\vspace*{1cm}
\noindent{\LARGE\textbf{Appendix}}
\phantomsection
\addcontentsline{toc}{section}{Appendix}
\appendix
\addtocontents{toc}{\protect\value{tocdepth}=0}


\appendixsection{Additional constructions and proofs}
In this section we are going to provide a couple of auxiliary proofs and additional explanation we omitted in the main text for better readability.

\appendixsubsection{Construction of \texorpdfstring{$\psi_g$}{psig}}\label{appx:construction_of_psi_g}
Let us go through the procedure of constructing the homomorphism $\psi_g\colon \Aut(F_g)\rightarrow \GLgZ$ from \cref{def:psi_g} via the abelianization of the free group. 

Let $K$ denote the commutator subgroup of the free group $F_g$. Since $K$ is characteristic\footnote{A subgroup $H\leq G$ is called \emph{characteristic} if it is invariant under all automorphisms of $G$. That is, if $\forall \varphi\in\Aut(G): \varphi(H)=H$. Clearly, the commutator subgroup of $G$ has this property. It is straightforward to show that, if $H$ is characteristic in $G$, then $\Aut(G)\rightarrow\Aut(G/H), \varphi\mapsto (\tilde{\varphi}\colon gH\mapsto \varphi(g)H)$ is a homomorphism.}, the abelianization $F_g\twoheadrightarrow F_g/K\cong \mathbb{Z}^g, x\mapsto xK$ induces a natural group homomorphism of the automorphism groups:
\begin{equation}
    \psi_g\colon \Aut(F_g)\rightarrow \GLgZ, \hspace{1em} \varphi\mapsto \left(\tilde{\varphi}\colon xK\mapsto \varphi(x)K\right),
\end{equation}
where we identify $\Aut(\mathbb{Z}^g)\cong\GLgZ$. In a diagrammatic fashion, this reads
\begin{equation}\label{eq:change_of_marking_diagram}
\begin{tikzcd}[column sep=tiny]
	{F_g} & {\mathrm{Aut}(F_g)} & \varphi \\
	\\
	{F_g/K\cong\mathbb{Z}^g} & {\mathrm{GL}(g,\mathbb{Z})} & {\tilde{\varphi}.}
	\arrow[""{name=0, anchor=center, inner sep=0}, "{\mathrm{ab}}"', two heads, from=1-1, to=3-1]
	\arrow[""{name=1, anchor=center, inner sep=0}, "{\psi_g}", two heads, from=1-2, to=3-2]
	\arrow["\in"{marking, allow upside down}, draw=none, from=1-3, to=1-2]
	\arrow[maps to, from=1-3, to=3-3]
	\arrow["\in"{marking, allow upside down}, draw=none, from=3-3, to=3-2]
	\arrow[Rightarrow, from=0, to=1]
\end{tikzcd}
\end{equation}

Denote the images of the free basis $x_1,\dots,x_g$ of $F_g$ under the abelianization with capital letters, i.e.\ $x_iK \eqcolon X_i$. Now, since $(x_i^{-1})K= -X_i$ and $(x_i x_j)K = X_i + X_j$, it immediately follows from the definition of the elementary Nielsen transformations that
\begin{subequations}
\begin{align}
    \psi_g(\Nper{\pi})\colon X_i &\mapsto X_{\pi(i)}, \\
    \psi_g(\Ninv{k})\colon X_i &\mapsto \left\{
            \begin{array}{lr}
                 X_i, &i\neq k,  \\
                 - X_k, &i=k,
            \end{array} \right. \\
    \psi_g(\Nrbp{k}{\ell})\colon X_i &\mapsto \left\{
            \begin{array}{lr}
                 X_i , &i\neq k,  \\
                 X_k +  X_\ell, &i=k.
            \end{array} \right.    
\end{align}    
\end{subequations}
And because $X_1,\dots,X_g$ generate $F_g/K$, this is enough to infer the matrix forms of the \enquote{abelianized} Nielsen transformations as elements of $\GLgZ$, which are given in \cref{eq:def_psi_g}.

In general, the matrix element $\psi_g(\alpha)_{ij}$ can be found simply by counting powers; it is the number of times the generator $x_j$ appears in $\alpha(x_i)$. For example,
{
\renewcommand{\arraycolsep}{2pt}
\begin{equation}
    \alpha\colon
    \left.
    \begin{array}{rl}
        x_1 &\mapsto x_1x_2^{-1}x_1^{-1}\\
        x_2 &\mapsto x_1x_2
    \end{array}
    \right\}\hspace{1em}\Rightarrow\hspace{1em}
    \psi_2(\alpha) = 
    \begin{pmatrix}
        0 & -1\\
        1 &  1
    \end{pmatrix}.
\end{equation}
}

\appendixsubsection{Some properties of $IT_g$}\label{appx:properties_of_ITg}
In this appendix section, we provide proofs for a few properties of the set $IT_g$ stated in \cref{sec:schottky_dualization}.

\begin{proposition}
    $\ITg$ is in bijection with the set of functions from $\Aut(\Fg)$ to $\IAg$.
\end{proposition}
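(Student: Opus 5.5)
The plan is to fix one reference element $\mathsf{it}_0\in\ITg$ and then show that every other element differs from it by an $\IAg$-valued function, and that any such function produces an element. Non-emptiness comes first. For each $\alpha\in\Aut(\Fg)$, the matrix $\psi_g(\alpha)^{-\Transpose}$ lies in $\GLgZ$. Since $\psi_g$ is surjective (\cref{def:psi_g}), we may choose, using the axiom of choice or an explicit section if preferred, some $\mathsf{it}_0(\alpha)\in\psi_g^{-1}\big(\psi_g(\alpha)^{-\Transpose}\big)$. This defines a map $\mathsf{it}_0\colon\Aut(\Fg)\to\Aut(\Fg)$, and it satisfies \cref{eq:it_property} by construction, so $\mathsf{it}_0\in\ITg$.

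Next we define the correspondence. Let $\mathcal{M}$ be the set of all functions $f\colon\Aut(\Fg)\to\IAg$, and set
\begin{equation*}
\Theta\colon \mathcal{M}\to\ITg,\qquad \Theta(f)(\alpha)\coloneq \mathsf{it}_0(\alpha)\,f(\alpha).
\end{equation*}
This map is well defined because $\psi_g$ is a homomorphism with $\psi_g(f(\alpha))=I_g$. Hence $\psi_g(\Theta(f)(\alpha))=\psi_g(\mathsf{it}_0(\alpha))=\psi_g(\alpha)^{-\Transpose}$, so $\Theta(f)\in\ITg$.

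For the inverse we take $\Xi\colon\ITg\to\mathcal{M}$, $\Xi(\mathsf{it})(\alpha)\coloneq\mathsf{it}_0(\alpha)^{-1}\mathsf{it}(\alpha)$. The argument behind \cref{eq:itmap_equivalence} shows that this value lies in $\IAg$: applying $\psi_g$ gives $\psi_g(\alpha)^{\Transpose}\psi_g(\alpha)^{-\Transpose}=I_g$. It remains to check that $\Xi\circ\Theta$ and $\Theta\circ\Xi$ are identities. Both follow pointwise in $\alpha$ from cancellation in the group $\Aut(\Fg)$, so $\Theta$ is a bijection.

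I do not expect a genuine obstacle. The only delicate point is the existence of $\mathsf{it}_0$, which uses surjectivity of $\psi_g$ together with a choice of preimages. It is also worth remarking that the bijection is not canonical, since it depends on the choice of $\mathsf{it}_0$, but the statement only asserts that a bijection exists.
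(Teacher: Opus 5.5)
Your proof is correct and matches the paper's argument. Both show non-emptiness by choosing preimages under the surjective $\psi_g$, then fix a reference element of $\ITg$ and use the mutually inverse maps $\mathsf{it}\mapsto\big(\alpha\mapsto\mathsf{it}_0(\alpha)^{-1}\mathsf{it}(\alpha)\big)$ and $f\mapsto\big(\alpha\mapsto\mathsf{it}_0(\alpha)f(\alpha)\big)$. Your explicit check, via the homomorphism property of $\psi_g$, that each map lands in the right set is slightly more detailed than the paper's.
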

\begin{proof}
We first note that $\ITg$ is non-empty. Indeed, for any $\alpha\in\Aut(\Fg)$, the set $\psi_g^{-1}\!\left(\psi_g(\alpha)^{-\Transpose}\right)$ is non-empty since $\psi_g$ is surjective (see \cref{def:psi_g}). So we can pick a $p_\alpha$ in this set. Doing this for all $\alpha$, the mapping $\alpha\mapsto p_\alpha$ defines an element of $\ITg$.

Now pick an arbitrary element $\eta\in \ITg$. Then the map $IT_g\rightarrow \{f\colon \Aut(Fg)\rightarrow\IAg\}$ defined by the prescription $\mathsf{it}\mapsto f^\eta_\mathsf{it}$, where $f^\eta_\mathsf{it}\colon \alpha\mapsto \eta(\alpha)^{-1} \mathsf{it}(\alpha)$, is bijective. Its inverse is given by $f\mapsto \mathsf{it}^\eta_f$, where $\mathsf{it}^\eta_f: \alpha\mapsto\eta(\alpha) f(\alpha)$.
\end{proof}

\begin{proposition}
For $g>2$, no element of $\ITg$ is an automorphism of $\Aut(\Fg)$.
\end{proposition}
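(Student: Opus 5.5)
Suppose, for contradiction, that some $\mathsf{it}\in\ITg$ is a group automorphism of $\Aut(\Fg)$ with $g>2$. Then $\psi_g\circ\mathsf{it}=\iota\circ\psi_g$, where $\iota\colon\GLgZ\to\GLgZ$, $A\mapsto A^{-\Transpose}$, is the inverse-transpose automorphism. In particular $\mathsf{it}$ preserves $\IAg=\ker\psi_g$: if $\psi_g(\varphi)=I_g$, then $\psi_g(\mathsf{it}(\varphi))=I_g$. So $\mathsf{it}$ induces an automorphism of $\Aut(\Fg)/\IAg\cong\GLgZ$, and this induced automorphism must equal $\iota$. The plan is to show that for $g>2$ no automorphism of $\Aut(\Fg)$ can induce inverse-transpose on the abelianized quotient. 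This uses the known structure of $\Aut(\Aut(\Fg))$.

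\textbf{Step 1.} Invoke the theorem of Dyer--Formanek (and Bridson--Vogtmann for the general rank case): for $g\geq 3$, $\Aut(\Fg)$ is complete, so every automorphism is inner, $\Aut(\Aut(\Fg))=\Inn(\Aut(\Fg))$. Hence there is $\beta\in\Aut(\Fg)$ with $\mathsf{it}(\alpha)=\beta\alpha\beta^{-1}$ for all $\alpha$.

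\textbf{Step 2.} Apply $\psi_g$ to this conjugation. With $B=\psi_g(\beta)$,
\begin{equation*}
\psi_g(\alpha)^{-\Transpose}=B\,\psi_g(\alpha)\,B^{-1}\quad\text{for all }\alpha .
\end{equation*}
Since $\psi_g$ is surjective, this gives $A^{-\Transpose}=BAB^{-1}$ for every $A\in\GLgZ$. Thus inverse-transpose would be an inner automorphism of $\GLgZ$.

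\textbf{Step 3.} Show that inverse-transpose is not inner in $\GLgZ$ for $g\geq3$. We compare invariants that are preserved by conjugation but not by inverse-transpose. Take the elementary matrix $A=I_g+E_{12}$, so that
\begin{equation*}
A^{-\Transpose}=I_g-E_{21}.
\end{equation*}
Both are unipotent with the same rank of $A-I_g$, so a subtler invariant is needed. It is cleaner to use the linear extension: the identity $BAB^{-1}=A^{-\Transpose}$ extends from the Zariski-dense set $\GLgZ$ to all of $\mathrm{GL}(g,\mathbb{C})$, since both sides are regular maps. So it suffices to see that inverse-transpose is an outer automorphism of $\mathrm{SL}(g,\mathbb{C})$ for $g\geq3$, which is the nontrivial Dynkin diagram automorphism of type $A_{g-1}$.

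A concrete witness: conjugation fixes the conjugacy class of the diagonal matrix $A=\mathrm{diag}(2,1,\dots,1)$, with eigenvalues $\{2,1,\dots,1\}$. Inverse-transpose sends it to a matrix with eigenvalues $\{1/2,1,\dots,1\}$, which lies in a different class. This contradicts $BAB^{-1}=A^{-\Transpose}$.

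\textbf{Main obstacle.} Step 1 is the nontrivial input; it rests on a cited deep theorem rather than anything proved in the paper. Step 3 also needs care in applying the density argument over $\mathbb{Z}$. As a fallback that avoids density, I would use an integral witness: an element of finite order, or a hyperbolic element $A\in\GLgZ$ whose characteristic polynomial $p$ is not self-reciprocal. For example, a companion matrix of $x^3-x-1$ padded by identity blocks. Since $A^{-\Transpose}$ has characteristic polynomial the reciprocal of $p$, and $p$ differs from its reciprocal, $A$ and $A^{-\Transpose}$ cannot be conjugate. This works for every $g\geq3$ and is where $g>2$ enters. For $g=2$, $A^{-\Transpose}=JAJ^{-1}$ for $A$ of determinant $1$, so the argument fails there, consistent with the hypothesis.
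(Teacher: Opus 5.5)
Your Steps 1 and 2 are exactly the paper's argument. Completeness of $\Aut(\Fg)$ for $g>2$ makes $\mathsf{it}$ conjugation by a fixed automorphism. Applying the surjection $\psi_g$ then turns this into the claim that $A\mapsto A^{-\Transpose}$ is inner on $\GLgZ$. The paper simply cites Hua--Reiner for the fact that it is not. You try to prove this, and your first argument for it is wrong. $\GLgZ$ is not Zariski-dense in $\mathrm{GL}(g,\mathbb{C})$: it lies in the closed set $\{\det A=\pm1\}$, which is its Zariski closure. Your witness $\mathrm{diag}(2,1,\dots,1)$ has determinant $2$, so the identity $BAB^{-1}=A^{-\Transpose}$ says nothing about it. For any matrix with $\det A\neq\pm1$ that identity fails trivially by taking determinants, which is why the argument seems to work so easily. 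Run verbatim at $g=1$, it would prove that inverse-transpose is outer on $\mathrm{GL}(1,\mathbb{Z})=\{\pm1\}$, where it is the identity map. The density idea can be repaired by restricting to $\mathrm{SL}(g,\mathbb{Z})$, which is Zariski-dense in $\mathrm{SL}(g,\mathbb{C})$. One then uses a determinant-one witness such as $\mathrm{diag}(2,3,\tfrac{1}{6},1,\dots,1)$, whose spectrum is not closed under inversion.

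Your fallback, by contrast, is correct and needs no density at all. The companion matrix $C$ of $x^3-x-1$ has determinant $1$, so $A=C\oplus I_{g-3}\in\GLgZ$. Its inverse transpose $A^{-\Transpose}$ has characteristic polynomial $(x^3+x^2-1)(x-1)^{g-3}$, which differs from $(x^3-x-1)(x-1)^{g-3}$. So $A$ and $A^{-\Transpose}$ are not even conjugate over $\mathbb{C}$. Make this Step 3 and delete the density paragraph; the proof is then complete. Compared with the paper, it trades the Hua--Reiner citation for a one-line computation, leaving completeness of $\Aut(\Fg)$ as the only external input.

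Drop the finite-order suggestion, though. A finite-order $A$ preserves a positive definite rational form $Q$, so $A^{-\Transpose}=QAQ^{-1}$, and no characteristic-polynomial argument can separate them.

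Your closing remark about $g=2$ is also misleading. A determinant $-1$ witness such as $\smalltwobytwo{1}{1}{1}{0}$ has trace $1$, while its inverse transpose has trace $-1$. So inverse-transpose is not inner in $\mathrm{GL}(2,\mathbb{Z})$ either. In both the paper's argument and yours, the hypothesis $g>2$ enters only through the completeness theorem invoked in Step 1.
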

\begin{proof}
To see this, two known results are required. For $g>2$,
\begin{itemize}
    \item all automorphisms of $\Aut(F_g)$ are inner~\cite{BRIDSON2000785}.
    \item the inverse transpose map $A\mapsto A^{-\Transpose}$ is not an inner automorphism of $\GLgZ$~\cite[Theorem 4]{HuaReiner1951}.
\end{itemize}
With this at hand, a proof by contradiction is straightforward. Assume $\mathsf{it}\in\ITg$ is an automorphism of $\Aut(F_g)$ for $g>2$. By the first point above, $\mathsf{it}$ is inner and hence there should exist a $\chi\in\Aut(F_g)$ such that for all $\alpha\in\Aut(F_g)$ we have $\mathsf{it}(\alpha)=\chi\alpha\chi^{-1}$. Now take an $A\in\GLgZ$ and let $\alpha$ be a preimage of $A$ under $\psi_g$, which exists since $\psi_g$ is surjective. But then
\begin{equation}
    A^{-\Transpose}=\psi_g(\mathsf{it}(\alpha)) = \psi_g(\chi\alpha\chi^{-1}) = \psi_g(\chi) A \psi_g(\chi)^{-1}.
\end{equation}
Since $A$ was arbitrary, we have to conclude that the inverse transpose map is inner in $\GLgZ$, which is false according to the second point above.    
\end{proof}

\appendixsubsection{Explicit expression for the defect term \texorpdfstring{in \cref{thm:jfrak_Rg_prop}}{B1}}\label{app:B1explicit}
In order to write the defect term $B_1$ from \cref{thm:jfrak_Rg_prop} in a compact way, we introduce some notation\footnote{Due to the relation~\cref{eq:defect_deco_notation} to the defects, several results from the sections above could be expressed in this new notation. We refrained from doing this in order to not clutter the main text unnecessarily.}.
For a Riemann matrix $\tau\in\cH_g$, let
\begin{subequations}
\begin{align}
    \deco\tau&\coloneq \mathrm{rhd}\Re\tau, \\
    \stnd\tau&\coloneq \tau-\deco\tau.
\end{align} 
\end{subequations}   
These functions are constructed such that, for a decorated Schottky group $(S,\vec{\gamma})$,
\begin{subequations}
\begin{align}
    \stnd \tau(S,\vec{\gamma}) &= \tau(\vec{\gamma}), \\
    \deco \tau(S,\vec{\gamma}) &= S
\end{align}
\end{subequations}
(cf.\ \cref{eq:pm_decorated}). That is, $\deco$ extracts the decoration, while $\stnd$ extracts the part of the period matrix which obeys the inequality \cref{eq:standard_logbranch_pm_entries} imposed by choosing the standard branch of the logarithm.

Moreover, it is a direct consequence of \cref{thm:defect_explicit} that
\begin{align}
    \tau(\alpha(\vec{\gamma})) &= \stnd(\Psi_g(\alpha)\cdot\tau(\vec{\gamma})), \\
    B_{\mathrm{def}}(\alpha,\vec{\gamma}) &= -\deco(\Psi_g(\alpha)\cdot\tau(\vec{\gamma})).\label{eq:defect_deco_notation}
\end{align}


Now return to the derivation of the explicit expression of $B_1$. For $(R,\vec{\varepsilon})\in\mathfrak{j}\!\left([B,\alpha]\cdot(S,\vec{\gamma})\right)$, it was shown in the proof of \cref{thm:jfrak_Rg_prop}, that
\begin{equation}\label{eq:B1_derivation_1}
    B_1=\leftindex[Y]^{\mathsf{it}(\alpha)}{B_{\mathrm{def}}(\mathsf{it}(\alpha)^{-1},\vec{\varepsilon})} = \Psi_g(\alpha)^{-\Transpose}\cdot B_{\mathrm{def}}(\mathsf{it}(\alpha)^{-1},\vec{\varepsilon}).
\end{equation}
Using \cref{eq:defect_deco_notation} then yields
\begin{equation}
    B_1 = -\Psi_g(\alpha)^{-\Transpose}\cdot\deco\!\left(\Psi_g(\alpha)^\Transpose\cdot\tau(\vec{\varepsilon})\right).
\end{equation}
Due to \cref{eq:tau_delta_in_terms_of_gamma_S}, we can write
\begin{subequations}
\begin{align}
    \tau(\vec{\varepsilon}) &= -\tau([B,\alpha]\cdot(S,\vec{\gamma}))^{-1} - R = \stnd\!\big(J_{2g}\cdot \tau([B,\alpha]\cdot(S,\vec{\gamma}))\big) \\
    &= \stnd\!\Big(J_{2g}\cdot \big( B + \leftindex[W]^\alpha{S} + \stnd\!\left(\Psi_g(\alpha)\cdot\tau(\vec{\gamma})\right)\big)\Big),\label{eq:B1_derivation_2}
\end{align}
\end{subequations}
where we also used several results from \cref{sec:decorated_schottky_and_Rg} to obtain the second line. Combining \cref{eq:B1_derivation_1,eq:B1_derivation_2} yields the final result
\begin{align}
    B_1 &= -\Psi_g(\alpha)^{-\Transpose}\cdot\deco\!\left(\Psi_g(\alpha)^\Transpose\cdot\stnd\!\Big(J_{2g}\cdot \big( B + \leftindex[W]^\alpha{S} + \stnd\!\left(\Psi_g(\alpha)\cdot\tau(\vec{\gamma})\right)\big)\Big)\right).
\end{align}


\bibliographystyle{ThetaSchottky}
\bibliography{ThetaSchottky}

\end{document}